\def\showauthornotes{1}
\def\showtableofcontents{0}
\def\showkeys{0}
\def\showdraftbox{0}
\def\showcolorlinks{1}
\def\usemicrotype{1}
\def\showfixme{0}
\newtheorem{theorem}{Theorem}[section]
\newtheorem*{theorem*}{Theorem}
\newtheorem*{proposition*}{Proposition}
\newtheorem{lemma}[theorem]{Lemma}
\newtheorem*{lemma*}{Lemma}
\newtheorem*{conjecture*}{Conjecture}
\newtheorem{fact}[theorem]{Fact}
\newtheorem*{fact*}{Fact}
\newtheorem*{hypothesis*}{Hypothesis}
\theoremstyle{definition}
\newtheorem{definition}[theorem]{Definition}
\theoremstyle{remark}
\newtheorem{claim}[theorem]{Claim}
\newtheorem*{claim*}{Claim}
\newtheorem{remark}[theorem]{Remark}
\newtheorem*{remark*}{Remark}
\newtheorem*{observation*}{Observation}
\newcommand{\savehyperref}[2]{\texorpdfstring{\hyperref[#1]{#2}}{#2}}
\newcommand{\Sref}[1]{\hyperref[#1]{\S\ref*{#1}}}
\newcommand{\Authornote}[2]{{\sffamily\small\color{red}{[#1: #2]}}}
\newcommand{\Authornotecolored}[3]{{\sffamily\small\color{#1}{[#2: #3]}}}
\newcommand{\Authorcomment}[2]{{\sffamily\small\color{gray}{[#1: #2]}}}
\newcommand{\Authorstartcomment}[1]{\sffamily\small\color{gray}[#1: }
\newcommand{\Authorfnote}[2]{\footnote{\color{red}{#1: #2}}}
\newcommand{\Authorfixme}[1]{\Authornote{#1}{\textbf{??}}}
\newcommand{\Authormarginmark}[1]{\marginpar{\textcolor{red}{\fbox{\Large #1:!}}}}
\newcommand{\Authornote}[2]{}
\newcommand{\Authornotecolored}[3]{}
\newcommand{\Authorcomment}[2]{}
\newcommand{\Authorstartcomment}[1]{}
\newcommand{\Authorfnote}[2]{}
\newcommand{\Authorfixme}[1]{}
\newcommand{\Authormarginmark}[1]{}
\newcommand{\Cnote}{\Authornote{C}}
\newcommand{\textparen}[1]{\text{(#1)}}
\newcommand{\because}[1]{\textparen{because #1}}
\renewcommand{\because}[1]{\textparen{because #1}}
\newcommand\bdot\bullet
\DeclareMathOperator{\OPT}{OPT}
\newcommand{\problemmacro}[1]{\texorpdfstring{\textsc{#1}}{#1}\xspace}
\newcommand{\cC}{\mathcal C}
\newcommand{\cH}{\mathcal H}
\newcommand{\cL}{\mathcal L}
\newcommand{\cN}{\mathcal N}
\newcommand{\cP}{\mathcal P}
\newcommand{\cR}{\mathcal R}
\renewcommand{\leq}{\leqslant}
\renewcommand{\geq}{\geqslant}
\let\epsilon=\varepsilon
\numberwithin{equation}{section}
\newcommand{\MYstore}[2]{%
  \global\expandafter \def \csname MYMEMORY #1 \endcsname{#2}%
}
\newcommand{\MYload}[1]{%
  \csname MYMEMORY #1 \endcsname%
}
\newcommand{\MYnewlabel}[1]{%
  \newcommand\MYcurrentlabel{#1}%
  \MYoldlabel{#1}%
}
\newcommand{\MYdummylabel}[1]{}
\newcommand{\torestate}[1]{%
  \let\MYoldlabel\label%
  \let\label\MYnewlabel%
  #1%
  \MYstore{\MYcurrentlabel}{#1}%
  \let\label\MYoldlabel%
}
\newcommand{\restatetheorem}[1]{%
  \let\MYoldlabel\label
  \let\label\MYdummylabel
  \begin{theorem*}[Restatement of \prettyref{#1}]
    \MYload{#1}
  \end{theorem*}
  \let\label\MYoldlabel
}
\newcommand{\restatelemma}[1]{%
  \let\MYoldlabel\label
  \let\label\MYdummylabel
  \begin{lemma*}[Restatement of \prettyref{#1}]
    \MYload{#1}
  \end{lemma*}
  \let\label\MYoldlabel
}
\newcommand{\restateprop}[1]{%
  \let\MYoldlabel\label
  \let\label\MYdummylabel
  \begin{proposition*}[Restatement of \prettyref{#1}]
    \MYload{#1}
  \end{proposition*}
  \let\label\MYoldlabel
}
\newcommand{\restatefact}[1]{%
  \let\MYoldlabel\label
  \let\label\MYdummylabel
  \begin{fact*}[Restatement of \prettyref{#1}]
    \MYload{#1}
  \end{fact*}
  \let\label\MYoldlabel
}
\newcommand{\restate}[1]{%
  \let\MYoldlabel\label
  \let\label\MYdummylabel
  \MYload{#1}
  \let\label\MYoldlabel
}
\let\origparagraph\paragraph
\renewcommand{\paragraph}[1]{\origparagraph{#1.}}
\newcommand{\DP}[3]{\ensuremath{\DPm_{#1}(#2,#3)}}
\newcommand{\FLOW}[3]{\ensuremath{\FLOWm_{#1}(#2,#3)}}
\newcommand{\conf}[1]{\ensuremath{\mathcal{C}(#1)}}
\newcommand{\res}{\ensuremath{\mathcal{R}}\xspace}
\newcommand{\players}{\ensuremath{\mathcal{P}}\xspace}
\DeclareMathOperator{\DPm}{DP}
\DeclareMathOperator{\FLOWm}{F}
\title{\bfseries Combinatorial Algorithm for Restricted Max-Min Fair Allocation}
\author{
Chidambaram Annamalai\thanks{Department of Computer Science,
    ETH Zurich. Email:
\href{mailto:cannamalai@inf.ethz.ch}{cannamalai@inf.ethz.ch}. \newline Work performed while the author was at the School of Basic Sciences, EPFL.},
Christos Kalaitzis\thanks{School of Computer and Communication Sciences, EPFL. 
Email:
\href{mailto:christos.kalaitzis@epfl.ch}{christos.kalaitzis@epfl.ch}.  Supported by ERC Starting Grant 335288-OptApprox.},
Ola Svensson\thanks{School of Computer and Communication Sciences, EPFL.
Email:
\href{mailto:ola.svensson@epfl.ch}{ola.svensson@epfl.ch}. Supported by ERC Starting Grant 335288-OptApprox.}
}
\begin{document}

\maketitle
\thispagestyle{empty}

\begin{abstract}
We study the basic allocation problem of assigning  resources to players so as to maximize
fairness. 
This is one of the few natural problems that enjoys the intriguing status of having a better estimation
algorithm than approximation algorithm. Indeed, 
 a certain
Configuration-LP can be used to estimate the value of the optimal allocation to within a factor of $4+\epsilon$.
In contrast, however, the best known approximation algorithm for the problem has an unspecified large constant guarantee.





In this paper we significantly narrow this gap by giving a $13$-approximation
algorithm for the problem. 
%
Our approach develops a local search technique introduced by Haxell~\cite{haxell1995condition} for
hypergraph matchings, and later used in this context by
Asadpour, Feige, and Saberi~\cite{asadpour2012santa}. For our local search procedure to terminate in polynomial time, we introduce several
new ideas such as \emph{lazy updates} and \emph{greedy players}. Besides the improved approximation guarantee, 
the highlight of our approach 
is that it is purely combinatorial and uses the Configuration-LP only in the analysis. 





\end{abstract}

\medskip
\noindent
{\small \textbf{Keywords:}
approximation algorithms, fair allocation, efficient local search
}

\clearpage

\ifnum\showtableofcontents=1
{
\tableofcontents
\thispagestyle{empty}
 }
\fi

\clearpage

\setcounter{page}{1}

\section{Introduction}
We consider the \problemmacro{Max-Min Fair Allocation} problem, a basic combinatorial optimization
problem, that captures the dilemma of how to allocate resources to players in a fair manner.
A problem instance is defined by a set $\cR$ of indivisible resources, a set $\cP$ of players, and a set of nonnegative values $\{v_{ij}\}_{i\in\cP, j\in\cR}$
 where each player $i$ has a value $v_{ij}$ for a resource $j$.
An \emph{allocation} is simply a partition $\{R_i\}_{i \in \cP}$ of
the resource set and the valuation function
$v_i:2^{\mathcal{R}}\mapsto \mathbb{R}$ for any player $i$ is
additive, i.e., $v_i(R_i) = \sum_{j \in R_i} v_{ij}.$ Perhaps the most
natural fairness criterion in this setting is the max-min objective
which tries to find an allocation that maximizes the minimum value of
resources received by any player in the allocation.  Thus, the goal in this problem is to find an allocation $\{R_i\}_{i\in\cP}$ that maximizes
$$
\min_{i\in \cP} \sum_{j\in R_i} v_{ij}.
$$
This problem has also been given the name \problemmacro{Santa Claus problem} as interpreting the players as kids and the resources as presents leads to Santa's annual allocation problem of making the least happy kid as happy as possible.

A closely related problem is the classic scheduling problem of
\problemmacro{Scheduling on Unrelated Parallel Machines to Minimize Makespan}. That problem has
the same input as above and the only difference is the objective function: instead of
maximizing the minimum we wish to minimize the maximum. In the scheduling context, this corresponds
to minimizing the time at which all jobs (resources) are completed by the machines (players) they were scheduled on. In a
seminal paper, Lenstra, Shmoys, and Tardos~\cite{lenstra1990approximation} showed that the scheduling problem admits a
$2$-approximation algorithm by rounding a certain linear
programming relaxation often referred to as the Assignment-LP.
Their approximation algorithm  in fact has the often
 stronger guarantee
that the returned  solution has value
at most $\OPT + v_{\max}$, where 
$v_{\max} := \max_{i\in \cP, j\in \cR} v_{ij}$ is the maximum value of a job (resource).

From the similarity between the two problems, it is natural to expect that 
the techniques developed for the scheduling problem are also applicable in this context. What is
perhaps surprising is that the guarantees have not carried over so
far, contrary to expectation. While a rounding of the
Assignment-LP has been shown~\cite{bezakova2005allocating} to provide an allocation of value at
least $\OPT-v_{\max}$, this guarantee deteriorates with increasing $v_{\max}$. Since in hard
instances of the problem (when $v_{\max} \approx OPT$) there can be players who are assigned only one resource in an optimal allocation, this result provides no guarantee in general. 
The lack of
guarantee is in fact intrinsic
to the Assignment-LP for \problemmacro{Max-Min Fair Allocation} as the
relaxation is quite weak. It has an
unbounded integrality gap i.e., the optimal value of the linear program can be
a polynomial factor larger than the optimal value of an integral solution.

To overcome the limitations of the Assignment-LP, Bansal and Sviridenko~\cite{bansal2006santa} proposed to use a 
stronger relaxation, called Configuration-LP, for \problemmacro{Max-Min Fair Allocation}. Their paper
contains several results on the strength of the Configuration-LP, one negative and many
positive. The negative result says that even the stronger Configuration-LP has 
an integrality gap that grows as $\Omega(\sqrt{|\cP|})$. Their positive results apply for the interesting case when $v_{ij} \in \{0, v_j\}$, called
\problemmacro{Restricted Max-Min
  Fair Allocation}. For this case they give an $O(\log \log |\cP|/ \log \log \log |\cP|)$-approximation
algorithm, a substantial improvement over the integrality gap of the Assignment-LP. Notice that the
restricted version has the following natural interpretation: each resource $j$ has a
fixed value $v_j$
but it is interesting only for some subset of the players. 

Bansal and Sviridenko further showed that the solution to a certain combinatorial problem on set
systems would imply a constant integrality gap. This was later settled positively by
Feige~\cite{feige2008allocations} using a proof technique that repeatedly used the Lov\'asz Local
Lemma. At the time of Feige's result, however, it was not known if his arguments were constructive,
i.e., if it led to a polynomial time algorithm for finding a solution with the same guarantee.
This was later shown to be the case by Haeupler et al.~\cite{haeupler2011new}, who constructivized the
various applications of the Lov\'asz Local Lemma in the paper by Feige~\cite{feige2008allocations}.
This led to the first constant factor approximation algorithm for \problemmacro{Restricted Max-Min
  Fair Allocation}, albeit with a large and unspecified constant.
This approach also requires the solution of the exponentially large Configuration-LP obtained by using the ellipsoid algorithm.

A different viewpoint and rounding approach for the problem was initiated by Asadpour, Feige, and
Saberi~\cite{asadpour2012santa}. This approach uses the perspective of
hypergraph matchings where one can naturally interpret the problem as a
bipartite hypergraph matching problem with bipartitions $\cP$ and $\cR$. Indeed, in a
solution of value $\tau$, each player $i$ is matched to a subset $R_i$ of resources of total value
at least $\tau$ which corresponds to a hyperedge $(i,R_i)$. Previously, Haxell~\cite{haxell1995condition} provided sufficient
conditions for bipartite hypergraphs to admit a perfect matching,
generalizing the well known graph analog, viz., Hall's theorem. Her proof is
algorithmic in the sense that when the sufficient conditions hold, then a perfect matching can be
found using a local search procedure that will terminate after at most exponentially many iterations. Haxell's techniques were successfully adapted by Asadpour et
al.~\cite{asadpour2012santa} to the \problemmacro{Restricted
  Max-Min Fair Allocation} problem to obtain a beautiful proof showing that the Configuration-LP has an integrality gap of at most $4$. As the Configuration-LP can be solved to any
desired accuracy in polynomial time, this gives a polynomial time algorithm to estimate the
value of an optimal allocation up to a factor of $4+\epsilon$, for any
$\epsilon >0$. Tantalizingly, however, the techniques
of~\cite{asadpour2012santa} do not yield an efficient algorithm for
finding an allocation with the same guarantee.

The above results lend the \problemmacro{Restricted Max-Min Fair Allocation} problem an intriguing
status that few other natural problems enjoy (see~\cite{feige2008estimation} for a comprehensive discussion on the
difference between estimation and approximation algorithms). Another problem with a similar status is the
restricted version of the aforementioned scheduling problem. The techniques in~\cite{asadpour2012santa} inspired
the last author to show~\cite{svensson2012santa} that the Configuration-LP estimates the optimal value within a factor
$33/17+\epsilon$ improving on the factor of $2$ by Lenstra et al.~\cite{lenstra1990approximation}. Again, the algorithm in~\cite{svensson2012santa} is
not known to terminate in polynomial time. We believe that this situation illustrates the need for new tools that improve our understanding of the Configuration-LP especially in the context of basic allocation problems in combinatorial optimization.

\paragraph{Our results}
Our main result improves the approximation guarantee for the \problemmacro{Restricted 
 Max-Min Fair Allocation} problem. Note that $6+2\sqrt{10} \approx 12.3$.

\begin{theorem}\label{thm:main}
  For every $ \epsilon>  0 $, there exists a combinatorial $(6+2\sqrt{10}+\epsilon)$-approximation
  algorithm for the \problemmacro{Restricted
  Max-Min Fair Allocation} problem that
  runs in time $n^{O(1/\epsilon^2 \log(1/\epsilon))}$ where $n$ is the size of the instance.

\end{theorem}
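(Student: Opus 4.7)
The overall plan is to reduce the theorem, in polynomial time, to the following task: given a target value $\tau$ at most the optimum $\tau^{\ast}$ of the configuration-LP, produce an allocation in which every player receives resources of total value at least $\tau/(6+2\sqrt{10}+\epsilon)$. Since the configuration-LP can be solved to within a $(1+\epsilon)$-factor in polynomial time and its value upper bounds the integral optimum, a geometric search over $\tau$ then yields the claimed approximation guarantee. After rescaling to $\tau = 1$, I would split resources into \emph{fat} (value at least $1/\alpha$) and \emph{thin} (value less than $1/\alpha$); a player is then declared \emph{satisfied} when it either holds one fat resource or a thin bundle of value at least $1/\beta$, and the two parameters $\alpha,\beta$ are what eventually set the approximation ratio.

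The heart of the argument is a modified version of the Haxell-style local search used by Asadpour, Feige, and Saberi. Starting from a partial allocation, I would iteratively pick an unsatisfied player $i_{0}$ and grow an alternating tree: at each level one either adds an \emph{addable edge}, namely a candidate fat resource or thin bundle for a player already in the tree, or a \emph{blocking edge}, namely the bundle currently assigned to a player whose resources are needed by an addable edge. A Hall-type lemma, obtained by plugging the feasibility of the configuration-LP at level $\tau$ into a volume inequality on the edges of the tree, guarantees that as long as no augmentation is possible the tree can be extended by a fresh addable edge; eventually an augmenting sequence materializes and its execution strictly increases the number of satisfied players. Repeating this $|\cP|$ times produces a feasible allocation of per-player value $\tau/\max(\alpha,\beta)$.

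The main obstacle, and the real novelty to be established, is polynomial termination. Haxell's original proof only bounds the number of moves by an exponential function of $|\cP|$, because each augmentation can cascade and force the tree to be rebuilt almost from scratch. I would resolve this via the \emph{lazy updates} and \emph{greedy players} promised in the abstract. Lazy updates mean that augmentations are performed only on the deepest valid layer of the tree so that the shallower layers remain usable across successive iterations, and greedy players are players who, once added to the tree via a thin bundle, commit irrevocably to a bundle whose value falls into a narrow window around $1/\beta$, so that later moves cannot undo their contribution by much. Together these ideas should support a lexicographic potential on the layer-sizes of the tree that strictly improves at every step and takes at most $n^{O(1/\epsilon^{2}\log(1/\epsilon))}$ distinct values, yielding the stated runtime.

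The final step is to tune $\alpha$ and $\beta$ under the two competing constraints imposed by this analysis: the Hall-type lemma requires $\alpha,\beta$ large enough for the fractional solution to always supply an addable edge, while the potential argument requires the surplus used by greedy players to be small enough that a single augmentation cannot cancel too much progress. Balancing these constraints, the best symmetric choice of the ratio $\alpha/\beta$ and the absorption slack produces the approximation factor $6+2\sqrt{10}$ and, with $\epsilon$-slack everywhere, the running time stated in the theorem.
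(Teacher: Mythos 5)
You have the right high-level skeleton---alternating trees, binary search over $\tau$, a Hall-type lemma from configuration-LP feasibility, and a lexicographic potential on layer sizes---but your characterizations of the two headline ideas are inverted, and a crucial algorithmic component is missing. First, \emph{greedy players} means the opposite of what you describe: an addable edge claims a thin bundle of value $\tau/\alpha$ with $\alpha=2$, far \emph{larger} than the $\tau/\beta\approx\tau/12.3$ the player ultimately keeps, not a bundle in a narrow window around $1/\beta$. This surplus is precisely what forces the number of blockers to grow geometrically from layer to layer and bounds the tree depth by $O(\log n)$. Second, \emph{lazy updates} means deferring a collapse of layer $L_i$ until at least a $\mu$-fraction of the players in $P_i$ can simultaneously reach immediately addable edges along vertex-disjoint alternating paths, and then collapsing the \emph{earliest} such layer; it is not about restricting augmentations to the deepest layer. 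With your reading, neither the exponential-growth lemma nor the strict lexicographic decrease of the signature vector follows.

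There is also a gap around fat resources, which is exactly what makes the algorithm combinatorial as the theorem claims. Your proposal binary-searches against a solved configuration-LP; the paper only uses the LP in the analysis (aborting on a failed invariant during the binary search) and never solves it. The warm-up algorithm in the paper that does look like yours relies on the Bansal--Sviridenko clustering step, which needs an LP solution; the final algorithm instead introduces a disjoint-path flow network over players and fat resources, canonical decompositions of immediately addable edges, and alternating paths that reroute fat edges while preserving a maximum fat matching as an invariant. That machinery is what lets the Hall-type counting lemma be proved by exhibiting an unbounded dual solution, using a Menger cut to separate the tree's players from the sinks. Without spelling this out, you have at best reproduced the non-combinatorial clustering-based variant, and the ``combinatorial'' part of the statement is unproved.
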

Our algorithm has the advantage of being completely combinatorial. It does not solve
the exponentially large Configuration-LP. Instead, we use it only in
the analysis to compare the value of the allocation returned by our
algorithm against the optimum. As  our hidden
constants are small, we believe that our algorithm is more attractive
than solving the Configuration-LP for a moderate $\epsilon$. Our approach is
based on the local search procedure introduced in this context by Asadpour et
al.~\cite{asadpour2012santa}, who in turn were inspired by the work of
Haxell~\cite{haxell1995condition}. Asadpour et al. raised the natural question if local
search procedures based on alternating trees can be made to run in polynomial time. 
%
Prior to this work, the best running time guarantee was a
quasi-polynomial time alternating tree algorithm by Pol\'{a}\v{c}ek and Svensson~\cite{polacek2012quasi}. The main
idea in that paper was to show that the local
 search can be restricted to alternating paths of length $O(\log n)$ (according
to a carefully chosen length function), where $n$ is the number of players and resources. This restricts the search space of the local search 
giving the running time of $n^{O(\log n)}$. To further reduce the search space seems highly
non-trivial and it is not where our improvement comes from. Rather, in contrast to the previous
local search algorithms, we do not update the partial matching as soon as an alternating path is
found. Instead, we wait until we are guaranteed a significant number of alternating paths, which then intuitively
guarantees large progress. We refer to this concept as \emph{lazy updates}. At the same time, we
ensure that our alternating paths are short by introducing 
\emph{greedy players} into our alternating tree: a player may claim more resources than she needs in an approximate solution.  

To best illustrate these ideas we have chosen to first present a simpler
algorithm in Section~\ref{sec:simplealgo2}. The result of that section still gives an improved approximation guarantee and a polynomial time local search algorithm. However, it is not
combinatorial as it relies on a preprocessing step which in turn uses the solution of the Configuration-LP. Our combinatorial algorithm 
is then presented in
Section~\ref{sec:combinatorialalgorithm}.  The virtue of explaining the simpler algorithm first is that it allows us to postpone some of the complexities of the combinatorial algorithm until later, while still demonstrating the key ideas mentioned above.

\paragraph{Further related work}
As mentioned before, the Configuration-LP has an integrality gap of $\Omega\left(\sqrt{|\cP|}\right)$ for the
general \problemmacro{Max-Min Fair Allocation} problem.  Asadpour and Saberi~\cite{AS07} almost matched
this bound by giving a $O(\sqrt{|\cP|} \log^3(|\cP|))$-approximation algorithm; later improved by
Saha and Srinivisan~\cite{SahaS10} to $O(\sqrt{|\cP|\log  |\cP|}/\log\log
|\cP|)$. The current best approximation is $O(n^\epsilon)$ due to Bateni et al.~\cite{Bateni09} and  Chakraborty et al.~\cite{Chakrabarty09}; for any $\epsilon>0$ their algorithms run in time
$O(n^{1/\epsilon})$. This leaves a large gap in the approximation guarantee for the general version of the problem as the only known
hardness result says that it is NP-hard to approximate the problem to within a factor less than
$2$~\cite{bezakova2005allocating}. The same hardness also holds for the restricted version.

\section{The Configuration-LP}\label{sec:CLP}
Recall that a solution to the  \problemmacro{Max-Min Fair Allocation} problem of value $\tau$ is a
partition $\{R_i\}_{i \in \cP}$ of the set of resources so that each player receives a set of
value at least $\tau$, i.e., $v_i(R_i) \geq \tau$ for $i\in \cP$. Let $\conf{i,\tau} = \{C \subseteq \cR: v_i(C) \geq \tau\}$ be the set of
configurations that player $i$ can be allocated in a solution of value $\tau$.
 The
 Configuration-LP has a decision variable $x_{iC}$ for each player $i\in \cP$ and each $C \in \conf{i,\tau}$. The intuition is that
the variable $x_{iC}$
takes value $1$ if and only if she is assigned the bundle $C$. The Configuration-LP is now a
feasibility linear program with two sets of constraints: the first set says that each player should receive
(at least)
one configuration and the second set says that each item should be assigned to at most one player. The
formal definition is given in the left box of Figure~\ref{fig:lp}.

\begin{figure}[t]
\begin{equation*}
\boxed{%
       \begin{minipage}{8cm}%
          \begin{align*}
            \sum_{C\in \conf{i,\tau}} x_{iC} &\geq 1,  &  \forall \; i \in \players, \\[0mm]
            \sum_{i,C: j\in C, C\in \conf{i,\tau}} x_{iC} &\leq 1, & \forall \; j \in \res, \\[0mm]
            x & \geq 0.
          \end{align*}
          \vspace{-0.09cm}
        \end{minipage}%
}
\boxed{%
        \begin{minipage}{8cm}%
          \begin{align*}
            \max & \mbox{  } \sum_{i\in \players} y_i - \sum_{j\in \res} z_j  && \\[2mm]
            y_i &\leq             \sum_{j\in C} z_j, &  \forall i\in \players, \forall C \in
            \conf{i, \tau},\\[1mm]
            y,z & \geq 0.
          \end{align*}
          \vspace{-0.4cm}
        \end{minipage}%
}
\end{equation*}
 \caption{The Configuration-LP for a guessed optimal value $\tau$ on the left and its dual on the
   right.}
\label{fig:lp}
\end{figure}
It is easy to see that if $CLP(\tau_0)$ is feasible, then so is $CLP(\tau)$ for all
$\tau\leq \tau_0$.
 We say that the value of the Configuration-LP is $\tau_{OPT}$
if it is the largest value such that the above program is feasible. Since every
feasible allocation is a feasible solution of the Configuration-LP,
$\tau_{OPT}$ is an upper bound on the value of the optimal allocation and therefore $CLP(\tau)$ constitues a valid relaxation.

We note that the LP has exponentially many variables; however, it is known that one can
approximately solve it to any desired accuracy by designing a polynomial time (approximate)
separation algorithm for the dual~\cite{bansal2006santa}. For our combinatorial algorithm, the dual
shall play an important role in our analysis.  By associating the sets of variables $\{y_i\}_{i\in
  \cP}$ and $\{z_j\}_{j\in \cR}$ to the constraints in the primal corresponding to players and
resources respectively, and letting the primal have the objective function of minimizing the zero
function, we obtain the dual of $CLP(\tau)$ shown in the right box of Figure~\ref{fig:lp}.

\section{Polynomial time algorithm}\label{sec:simplealgo2}
To illustrate our key ideas we first describe a
simpler algorithm that works on
\emph{clustered} instances. This setting, while equivalent to the
general problem up to constant factors, allows for a simpler exposition
of our key ideas. Specifically, we will prove the following theorem in
this section.

\begin{theorem}\label{thm:simplealgo2}
  There is a polynomial time $36$-approximation for \problemmacro{restricted max-min fair allocation}.
\end{theorem}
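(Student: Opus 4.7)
The plan is to combine three ingredients: the Bansal--Sviridenko clustering preprocessing (which depends on the configuration-LP, hence the non-combinatorial nature), the alternating-tree local search of Asadpour, Feige and Saberi, and the new ideas of \emph{lazy updates} and \emph{greedy players} that together yield a polynomial-time local search.

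First, I would (approximately) solve the configuration-LP for a guessed target $\tau$ close to $\tau_{OPT}$, and apply the Bansal--Sviridenko clustering preprocessing. Items are split into \emph{fat} items (with $v_j \geq \tau/\alpha$ for a constant $\alpha$) and \emph{thin} items (the rest). Using the LP solution, players are then grouped into clusters with the property that within each cluster of size $s$, there exist $s-1$ distinct fat items, each valued by a distinct player of the cluster, that can be simultaneously assigned to those players; the remaining \emph{left-over} representative of each cluster must be served entirely by a bundle of thin items. This clustered instance is the input to the main algorithm.

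Second, to assign the thin items I would run an alternating-tree local search rooted at an unmatched left-over player, following Haxell and Asadpour et al. The tree alternates between \emph{addable sets} (bundles of thin items of value at least $\tau/\beta$ that could serve a new player) and \emph{blocking sets} (minimal subsets of the current matching whose removal would free the items needed by an addable set). When the tree cannot grow any further, the configuration-LP must be infeasible at value $\tau$, contradicting our guess; hence an augmenting structure is eventually produced and the number of matched left-over players can be increased by one.

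The main obstacle is that this procedure, in its classical form, may perform exponentially many updates because each swap may achieve only a single unit of progress. I would overcome this with two ideas. First, \emph{lazy updates}: rather than updating the current matching upon finding the first augmenting structure, I accumulate a large collection of alternating paths within the tree before triggering a coordinated swap, chosen so that a carefully designed potential (for instance, a weighted count of the players currently present in all alternating trees) drops by a constant \emph{factor} per coordinated update, bounding the total number of updates by $\poly(n)$. Second, \emph{greedy players}: I allow intermediate players in the alternating tree to claim thin-item bundles of value noticeably larger than the strict minimum $\tau/\beta$, which provides enough slack to bound the branching and depth of the alternating tree by a polynomial in $n$. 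Combined, these two ideas keep each tree polynomially sized and the overall local search terminates in $\poly(n)$ steps.

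Finally, for the approximation ratio, every non-representative member of a cluster is assigned a single fat item of value at least $\tau/\alpha$, while every left-over representative is served by a thin bundle of value at least $\tau/\beta$. Choosing $\alpha$, $\beta$, and the greedy-player threshold to balance these two losses yields an allocation of value at least $\tau/36 \geq \OPT/36$, which establishes \prettyref{thm:simplealgo2}. The hardest part of the execution will be making the potential-function analysis of the lazy-update rule work out, since it must simultaneously certify geometric progress and remain consistent with the greedy-player relaxation used to cap the tree size.
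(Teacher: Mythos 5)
Your proposal captures the correct high-level architecture of the paper's argument: Bansal--Sviridenko clustering on top of the configuration-LP, an alternating-tree local search in the style of Haxell and Asadpour et al., and the two new mechanisms of lazy updates and greedy players. The approximation-ratio bookkeeping at the end (fat items for non-representatives, thin bundles for the one left-over player per cluster, binary search on $\tau$) is also the right idea. Two details are off: the paper splits fat and thin at threshold $\tau/\beta$, not $\tau/\alpha$ (the parameter $\alpha=5/2$ governs the size of greedy addable edges, not the fat/thin cutoff or the clustering), and it is Bansal--Sviridenko's clustering that fixes this threshold for you.

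The genuine gap is in the running-time analysis, and you yourself flag it as the hardest part. The paper does \emph{not} maintain a single scalar potential that drops by a constant factor per update, and such a potential would be hard to construct: during build phases the number of players in the tree grows, and a collapse only shrinks one layer $P_t$ by a $\mu$-fraction while discarding all deeper layers wholesale. Instead the paper tracks a \emph{signature vector} $s = (s_0,\dots,s_\ell,\infty)$ with $s_i = \lfloor \log_{1/(1-\mu)} |P_i| \rfloor$ and argues two things: (a) the vector decreases lexicographically across every iterative step (a new layer appends a coordinate before $\infty$; a collapse strictly decreases $s_t$ and truncates), and (b) greedy players force a ``many addable edges'' lemma ($|A_{i+1}| \geq |P_{\leq i}|/5$, proved by an LP-duality/counting argument) which drives exponential growth of layer sizes, so the depth $\ell$ and each coordinate are $O(\log |\cP|)$. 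Then the number of distinct non-decreasing signature vectors with coordinate sum $O(\log^2 |\cP|)$ is bounded via Hardy--Ramanujan's integer-partition estimate $e^{O(\sqrt{N})}$, giving a polynomial bound on the number of iterations. This combinatorial accounting of lexicographic progress among partitions is the mechanism that makes lazy updates pay off; without it, ``drops by a constant factor'' is an assertion rather than a proof. Relatedly, the role of greedy players in the paper is not to bound ``branching'' directly but to furnish the many-addable-edges lemma that in turn yields logarithmic tree depth, so any fleshed-out version of your argument needs that LP-duality step explicitly.
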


We note, however, that producing such clustered instances requires
solving the Configuration-LP. To avoid solving it,
and get a purely combinatorial algorithm, we will show how to bypass the
clustering step in Section~\ref{sec:combinatorialalgorithm}.

Before describing our algorithm formally, we begin by giving an informal overview of how it works, while pointing out the key ideas
behind it.

\subsection{Intuitive Algorithm Description and Main Ideas}
\label{sec:simpleint}
Our first step towards recovering an approximate solution to an
instance of \problemmacro{restricted max-min fair allocation}, is
guessing the value of the Configuration-LP $\tau_{OPT}$ by performing
a binary search over the range of its possible values. For a
particular guess $\tau$, assuming that $CLP(\tau)$ is feasible, our
goal now is to {\em approximately} satisfy each player. That is, we will allocate for each player a disjoint
collection of resources, whose value for that player is at least $\tau/36$.
Towards this end, we design a local search procedure, that we
will apply iteratively in order to find such a $36$-approximate
allocation. The input to this procedure will be a partial allocation
that satisfies some (possibly empty) subset of the players, and an
unsatisfied player. Then, our local search procedure will extend the
allocation in order to satisfy the input player as well; hence,
applying this procedure iteratively will satisfy all the players.

We now illustrate some key aspects of this local search procedure
through an example that appears in Figure~\ref{fig:cluster-sketch};
for simplicity, in this example we consider only resources of value
less than $\tau/36$. Given a partial allocation
of resources to a subset of the players, we wish to extend this to
satisfy an additional player $p$.  If there are free resources (i.e., not already
appearing in our partial allocation) of total value $\tau/36$ for $p$,
then we just satisfy $p$ by assigning those resources to her.
Otherwise, we find a set of resources whose value for $p$ is at least
$2\tau/5$; these resources constitute a bundle (or, as we will refer
to it later on, an {\em edge}) $e_p$ we would wish to include in our
partial allocation in order to satisfy $p$. However, we cannot include
this edge right away because there already exist edges in our partial
allocation that share resources with $e_p$; in other words, such
edges are {\em blocking} the inclusion of $e_p$ into our partial
allocation. In Figure~\ref{fig:cluster-sketch}(a), $e_p$ is the gray
 edge, and its blocking edges are the white ones.

At this point, we should make note of the fact that the size of $e_p$ is
considerably larger than our goal of $\tau/36$; this is by design and
due to our {\em greedy} strategy. By considering edges whose size
exceeds our goal, we are able to increase the rate at which blocking
edges are inserted into our local search; indeed, in
Figure~\ref{fig:cluster-sketch}(a), a single greedily-constructed edge
($e_p$) introduced 3 blocking edges. Ultimately, this will
allow us to bound the running time of our local search.

Now, since our goal is to include $e_p$ in our partial allocation, we
are required to free up some of $e_p$'s resources by finding an
alternative way of satisfying the players included in $e_p$'s blocking
edges. The steps we take towards this end appear in
Figure~\ref{fig:cluster-sketch}(b): for each player in $e_p$'s
blocking edges, we find a new edge that we would wish to include into
our partial allocation. But these new gray edges might also be blocked
by existing edges in our partial allocation. Therefore this step
introduces a second {\em layer} of edges comprising a set of edges we
would like to include in our allocation, and their corresponding
blocking edges; these layers are separated by dashed lines in the
example.

Next, we observe that 2 of the 3 gray edges in the second layer
actually have a lot of resources that do not appear in any blocking
edge. In this case, as one can see in
Figure~\ref{fig:cluster-sketch}(c), we select a subset of free resources
from each edge of size at least $\tau/36$ (drawn with dashed lines), and swap these edges for
the existing white edges in our partial allocation. We call
this operation a {\em collapse} of the first layer, only to be left
with $e_p$ and a single blocking edge in the first layer. The way we
decide when to collapse a layer, is dictated by our strategy of {\em lazy updates}:
similar to Figure~\ref{fig:cluster-sketch}(c), we will only collapse a
layer if that would mean that a large fraction of its
blocking edges will be removed.

Finally, in Figure~\ref{fig:cluster-sketch}(d), a significant amount
of resources of $e_p$ has now been freed up. Then, we choose a subset of
these resources (again, drawn with a dashed line), and allocate them
to $p$. At this point, we have satisfied $p$, and managed to extend
our partial allocation to satisfy one more player.

\begin{figure*}[t!]
\begin{minipage}[t]{\linewidth}
    \begin{tikzpicture}[scale=0.3]

\draw [draw=white] (-12,0) -- (-10,0);
\begin{scope}
\draw (-3,0) rectangle (16,18);
\draw [draw=white] (6,0) -- (6,0) node[anchor=north] {(a)};
\draw [rounded corners=1.5mm,fill=lightgray]  (6,6) -- (9,6) -- (6,0) -- (3,6) -- (6,6);
\filldraw (5.85,2.3) rectangle (6.15,2) node[anchor=south] {$p$};
\draw (3.5,5.5)  [fill=black]  circle [radius=0.12];
\draw (4,5.5)  [fill=black]  circle [radius=0.12];
\draw (4.5,5.5)  [fill=black]  circle [radius=0.12];
\draw (5,5.5)  [fill=black]  circle [radius=0.12];
\draw (5.5,5.5)  [fill=black]  circle [radius=0.12];
\draw (6,5.5)  [fill=black]  circle [radius=0.12];
\draw (6.5,5.5)  [fill=black]  circle [radius=0.12];
\draw (7,5.5)  [fill=black]  circle [radius=0.12];
\draw (7.5,5.5)  [fill=black]  circle [radius=0.12];
\draw (8,5.5)  [fill=black]  circle [radius=0.12];
\draw (8.5,5.5)  [fill=black]  circle [radius=0.12];

\begin{scope}[shift={(0.05,0)}]
\draw [rounded corners=1.5mm]  (9.,4.7) -- (10.35,4.7) -- (9.0,10.7) -- (7.65,4.7) -- (9.0,4.7);
\filldraw (8.85,8.7) rectangle (9.15,9);
\end{scope}
\draw (9,5.5)  [fill=black]  circle [radius=0.12];
\draw (9.5,5.5)  [fill=black]  circle [radius=0.12];
\draw (10,5.5)  [fill=black]  circle [radius=0.12];

\begin{scope}[shift={(-3,0)}]
\draw [rounded corners=1.5mm]  (9.,4.7) -- (10.35,4.7) -- (9.0,10.7) -- (7.65,4.7) -- (9.0,4.7);
\filldraw (8.85,8.7) rectangle (9.15,9);
\end{scope}

\begin{scope}[shift={(-6,0)}]
\draw [rounded corners=1.5mm]  (9.,4.7) -- (10.35,4.7) -- (9.0,10.7) -- (7.65,4.7) -- (9.0,4.7);
\filldraw (8.85,8.7) rectangle (9.15,9);
\end{scope}

\draw (3,5.5)  [fill=black]  circle [radius=0.12];
\draw (2.5,5.5)  [fill=black]  circle [radius=0.12];
\draw (2,5.5)  [fill=black]  circle [radius=0.12];
\end{scope}

\begin{scope}[shift={(20,0)}]
\draw (-3,0) rectangle (16,18);

\draw [draw=white] (6,0) -- (6,0) node[anchor=north] {(b)};
\begin{scope}[shift={(0,7)}]
\draw [rounded corners=1.5mm,fill=lightgray]  (6,6) -- (8.5,6) -- (6,0) -- (3.5,6) -- (6,6);
\filldraw (5.85,1.7) rectangle (6.15,2);
\draw (4,5.5)  [fill=black]  circle [radius=0.12];
\draw (4.5,5.5)  [fill=black]  circle [radius=0.12];
\draw (5,5.5)  [fill=black]  circle [radius=0.12];
\draw (5.5,5.5)  [fill=black]  circle [radius=0.12];
\draw (6,5.5)  [fill=black]  circle [radius=0.12];
\draw (6.5,5.5)  [fill=black]  circle [radius=0.12];
\draw (7,5.5)  [fill=black]  circle [radius=0.12];
\draw (7.5,5.5)  [fill=black]  circle [radius=0.12];
\draw (8,5.5)  [fill=black]  circle [radius=0.12];
\end{scope}

\begin{scope}[shift={(3,7)}]
\draw [rounded corners=1.5mm,fill=lightgray]  (8,6) -- (11,6) -- (5.65,0) -- (6,6) -- (10,6);
\filldraw (5.85,1.7) rectangle (6.15,2);
\draw (6.5,5.5)  [fill=black]  circle [radius=0.12];
\draw (7,5.5)  [fill=black]  circle [radius=0.12];
\draw (7.5,5.5)  [fill=black]  circle [radius=0.12];
\draw (8,5.5)  [fill=black]  circle [radius=0.12];
\draw (8.5,5.5)  [fill=black]  circle [radius=0.12];
\draw (9,5.5)  [fill=black]  circle [radius=0.12];
\draw (9.5,5.5)  [fill=black]  circle [radius=0.12];
\draw (10,5.5)  [fill=black]  circle [radius=0.12];
\draw (10.5,5.5)  [fill=black]  circle [radius=0.12];
\draw (11,5.5)  [fill=black]  circle [radius=0.12];
\draw (11.5,5.5)  [fill=black]  circle [radius=0.12];
\draw (12,5.5)  [fill=black]  circle [radius=0.12];
\end{scope}

\begin{scope}[shift={(-3,7)}]
\draw [rounded corners=1.5mm,fill=lightgray]  (3,6) -- (1,6) -- (6.35,0) -- (6,6) -- (3,6);
\filldraw (5.85,1.7) rectangle (6.15,2);
\draw (2,5.5)  [fill=black]  circle [radius=0.12];
\draw (2.5,5.5)  [fill=black]  circle [radius=0.12];
\draw (3,5.5)  [fill=black]  circle [radius=0.12];
\draw (3.5,5.5)  [fill=black]  circle [radius=0.12];
\draw (4,5.5)  [fill=black]  circle [radius=0.12];
\draw (4.5,5.5)  [fill=black]  circle [radius=0.12];
\draw (5,5.5)  [fill=black]  circle [radius=0.12];
\draw (5.5,5.5)  [fill=black]  circle [radius=0.12];
\draw (6.25,5.5)  [fill=black]  circle [radius=0.12];
\end{scope}

\begin{scope}[shift={(-6,7)}]
\draw [rounded corners=1.5mm]  (9.,4.7) -- (10.35,4.7) -- (9.0,10.7) -- (7.65,4.7) -- (9.0,4.7);
\filldraw (8.85,8.7) rectangle (9.15,9);
\end{scope}

\begin{scope}[shift={(-9.9,7)}]
\draw [rounded corners=1.5mm]  (9.,4.7) -- (10.35,4.7) -- (9.0,10.7) -- (7.65,4.7) -- (9.0,4.7);
\filldraw (8.85,8.7) rectangle (9.15,9);
\end{scope}

\begin{scope}[shift={(5,7)}]
\draw [rounded corners=1.5mm]  (9.,4.7) -- (10.35,4.7) -- (9.0,10.7) -- (7.65,4.7) -- (9.0,4.7);
\filldraw (8.85,8.7) rectangle (9.15,9);
\end{scope}

\draw [rounded corners=1.5mm,fill=lightgray]  (6,6) -- (9,6) -- (6,0) -- (3,6) -- (6,6);
\filldraw (5.85,1.7) rectangle (6.15,2)  node[anchor=south] {$p$};
\draw (3.5,5.5)  [fill=black]  circle [radius=0.12];
\draw (4,5.5)  [fill=black]  circle [radius=0.12];
\draw (4.5,5.5)  [fill=black]  circle [radius=0.12];
\draw (5,5.5)  [fill=black]  circle [radius=0.12];
\draw (5.5,5.5)  [fill=black]  circle [radius=0.12];
\draw (6,5.5)  [fill=black]  circle [radius=0.12];
\draw (6.5,5.5)  [fill=black]  circle [radius=0.12];
\draw (7,5.5)  [fill=black]  circle [radius=0.12];
\draw (7.5,5.5)  [fill=black]  circle [radius=0.12];
\draw (8,5.5)  [fill=black]  circle [radius=0.12];
\draw (8.5,5.5)  [fill=black]  circle [radius=0.12];

\begin{scope}[shift={(0.05,0)}]
\draw [rounded corners=1.5mm]  (9.,4.7) -- (10.35,4.7) -- (9.0,10.7) -- (7.65,4.7) -- (9.0,4.7);
\filldraw (8.85,8.7) rectangle (9.15,9);
\end{scope}
\draw (9,5.5)  [fill=black]  circle [radius=0.12];
\draw (9.5,5.5)  [fill=black]  circle [radius=0.12];
\draw (10,5.5)  [fill=black]  circle [radius=0.12];

\begin{scope}[shift={(-3,0)}]
\draw [rounded corners=1.5mm]  (9.,4.7) -- (10.35,4.7) -- (9.0,10.7) -- (7.65,4.7) -- (9.0,4.7);
\filldraw (8.85,8.7) rectangle (9.15,9);
\end{scope}

\begin{scope}[shift={(-6,0)}]
\draw [rounded corners=1.5mm]  (9.,4.7) -- (10.35,4.7) -- (9.0,10.7) -- (7.65,4.7) -- (9.0,4.7);
\filldraw (8.85,8.7) rectangle (9.15,9);
\end{scope}

\draw (3,5.5)  [fill=black]  circle [radius=0.12];
\draw (2.5,5.5)  [fill=black]  circle [radius=0.12];
\draw (2,5.5)  [fill=black]  circle [radius=0.12];
\draw[dashed] (-2.5,1.85) -- (15.5,1.85);
\draw[dashed] (-2.5,8.85) -- (15.5,8.85);
\draw[dashed] (-2.5,15.85) -- (15.5,15.85);

\end{scope}

\begin{scope}[shift={(0,-20)}]
\draw (-3,0) rectangle (16,18);
\draw [draw=white] (6,0) -- (6,0) node[anchor=north] {(c)};
\begin{scope}[shift={(0,7)}]
\draw [rounded corners=1.5mm,fill=lightgray]  (6,6) -- (8.5,6) -- (6,0) -- (3.5,6) -- (6,6);
\begin{scope}[shift={(0,.5)}]
\draw [rounded corners=1.5mm,dashed]  (6,6) -- (7.6,6) -- (6,0) -- (4.5,6) -- (6,6);
\end{scope}
\filldraw (5.85,1.7) rectangle (6.15,2);
\draw (4,5.5)  [fill=black]  circle [radius=0.12];
\draw (4.5,5.5)  [fill=black]  circle [radius=0.12];
\draw (5,5.5)  [fill=black]  circle [radius=0.12];
\draw (5.5,5.5)  [fill=black]  circle [radius=0.12];
\draw (6,5.5)  [fill=black]  circle [radius=0.12];
\draw (6.5,5.5)  [fill=black]  circle [radius=0.12];
\draw (7,5.5)  [fill=black]  circle [radius=0.12];
\draw (7.5,5.5)  [fill=black]  circle [radius=0.12];
\draw (8,5.5)  [fill=black]  circle [radius=0.12];
\end{scope}

\begin{scope}[shift={(3,7)}]
\draw [rounded corners=1.5mm,fill=lightgray]  (8,6) -- (11,6) -- (5.65,0) -- (6,6) -- (10,6);
\begin{scope}[shift={(0,.5)}]
\draw [rounded corners=1.5mm,dashed]  (8,6) -- (9.4,6) -- (5.65,0) -- (6.2,6) -- (8,6);
\end{scope}
\filldraw (5.85,1.7) rectangle (6.15,2);
\draw (6.5,5.5)  [fill=black]  circle [radius=0.12];
\draw (7,5.5)  [fill=black]  circle [radius=0.12];
\draw (7.5,5.5)  [fill=black]  circle [radius=0.12];
\draw (8,5.5)  [fill=black]  circle [radius=0.12];
\draw (8.5,5.5)  [fill=black]  circle [radius=0.12];
\draw (9,5.5)  [fill=black]  circle [radius=0.12];
\draw (9.5,5.5)  [fill=black]  circle [radius=0.12];
\draw (10,5.5)  [fill=black]  circle [radius=0.12];
\draw (10.5,5.5)  [fill=black]  circle [radius=0.12];
\draw (11,5.5)  [fill=black]  circle [radius=0.12];
\draw (11.5,5.5)  [fill=black]  circle [radius=0.12];
\draw (12,5.5)  [fill=black]  circle [radius=0.12];
\end{scope}

\begin{scope}[shift={(-3,7)}]
\draw [rounded corners=1.5mm,fill=lightgray]  (3,6) -- (1,6) -- (6.35,0) -- (6,6) -- (3,6);
\filldraw (5.85,1.7) rectangle (6.15,2);
\draw (2,5.5)  [fill=black]  circle [radius=0.12];
\draw (2.5,5.5)  [fill=black]  circle [radius=0.12];
\draw (3,5.5)  [fill=black]  circle [radius=0.12];
\draw (3.5,5.5)  [fill=black]  circle [radius=0.12];
\draw (4,5.5)  [fill=black]  circle [radius=0.12];
\draw (4.5,5.5)  [fill=black]  circle [radius=0.12];
\draw (5,5.5)  [fill=black]  circle [radius=0.12];
\draw (5.5,5.5)  [fill=black]  circle [radius=0.12];
\draw (6.25,5.5)  [fill=black]  circle [radius=0.12];
\end{scope}

\begin{scope}[shift={(-6,7)}]
\draw [rounded corners=1.5mm]  (9.,4.7) -- (10.35,4.7) -- (9.0,10.7) -- (7.65,4.7) -- (9.0,4.7);
\filldraw (8.85,8.7) rectangle (9.15,9);
\end{scope}

\begin{scope}[shift={(-9.9,7)}]
\draw [rounded corners=1.5mm]  (9.,4.7) -- (10.35,4.7) -- (9.0,10.7) -- (7.65,4.7) -- (9.0,4.7);
\filldraw (8.85,8.7) rectangle (9.15,9);
\end{scope}

\begin{scope}[shift={(5,7)}]
\draw [rounded corners=1.5mm]  (9.,4.7) -- (10.35,4.7) -- (9.0,10.7) -- (7.65,4.7) -- (9.0,4.7);
\filldraw (8.85,8.7) rectangle (9.15,9);
\end{scope}

\draw [rounded corners=1.5mm,fill=lightgray]  (6,6) -- (9,6) -- (6,0) -- (3,6) -- (6,6);
\filldraw (5.85,1.7) rectangle (6.15,2)  node[anchor=south] {$p$};
\draw (3.5,5.5)  [fill=black]  circle [radius=0.12];
\draw (4,5.5)  [fill=black]  circle [radius=0.12];
\draw (4.5,5.5)  [fill=black]  circle [radius=0.12];
\draw (5,5.5)  [fill=black]  circle [radius=0.12];
\draw (5.5,5.5)  [fill=black]  circle [radius=0.12];
\draw (6,5.5)  [fill=black]  circle [radius=0.12];
\draw (6.5,5.5)  [fill=black]  circle [radius=0.12];
\draw (7,5.5)  [fill=black]  circle [radius=0.12];
\draw (7.5,5.5)  [fill=black]  circle [radius=0.12];
\draw (8,5.5)  [fill=black]  circle [radius=0.12];
\draw (8.5,5.5)  [fill=black]  circle [radius=0.12];

\begin{scope}[shift={(0.05,0)}]
\draw [rounded corners=1.5mm]  (9.,4.7) -- (10.35,4.7) -- (9.0,10.7) -- (7.65,4.7) -- (9.0,4.7);
\filldraw (8.85,8.7) rectangle (9.15,9);
\end{scope}
\draw (9,5.5)  [fill=black]  circle [radius=0.12];
\draw (9.5,5.5)  [fill=black]  circle [radius=0.12];
\draw (10,5.5)  [fill=black]  circle [radius=0.12];

\begin{scope}[shift={(-3,0)}]
\draw [rounded corners=1.5mm]  (9.,4.7) -- (10.35,4.7) -- (9.0,10.7) -- (7.65,4.7) -- (9.0,4.7);
\filldraw (8.85,8.7) rectangle (9.15,9);
\end{scope}

\begin{scope}[shift={(-6,0)}]
\draw [rounded corners=1.5mm]  (9.,4.7) -- (10.35,4.7) -- (9.0,10.7) -- (7.65,4.7) -- (9.0,4.7);
\filldraw (8.85,8.7) rectangle (9.15,9);
\end{scope}

\draw (3,5.5)  [fill=black]  circle [radius=0.12];
\draw (2.5,5.5)  [fill=black]  circle [radius=0.12];
\draw (2,5.5)  [fill=black]  circle [radius=0.12];

\end{scope}
\begin{scope}[shift={(20,-20)}]
\draw (-3,0) rectangle (16,18);
\draw [draw=white] (6,0) -- (6,0) node[anchor=north] {(d)};
\draw [rounded corners=1.5mm,fill=lightgray]  (6,6) -- (9,6) -- (6,0) -- (3,6) -- (6,6);
\filldraw (5.85,2.3) rectangle (6.15,2) node[anchor=south] {$p$};
\draw (3.5,5.5)  [fill=black]  circle [radius=0.12];
\draw (4,5.5)  [fill=black]  circle [radius=0.12];
\draw (4.5,5.5)  [fill=black]  circle [radius=0.12];
\draw (5,5.5)  [fill=black]  circle [radius=0.12];
\draw (5.5,5.5)  [fill=black]  circle [radius=0.12];
\draw (6,5.5)  [fill=black]  circle [radius=0.12];
\draw (6.5,5.5)  [fill=black]  circle [radius=0.12];
\draw (7,5.5)  [fill=black]  circle [radius=0.12];
\draw (7.5,5.5)  [fill=black]  circle [radius=0.12];
\draw (8,5.5)  [fill=black]  circle [radius=0.12];
\draw (8.5,5.5)  [fill=black]  circle [radius=0.12];

\begin{scope}[shift={(-6,0)}]
\draw [rounded corners=1.5mm]  (9.,4.7) -- (10.35,4.7) -- (9.0,10.7) -- (7.65,4.7) -- (9.0,4.7);
\filldraw (8.85,8.7) rectangle (9.15,9);
\end{scope}

\draw (3,5.5)  [fill=black]  circle [radius=0.12];
\draw (2.5,5.5)  [fill=black]  circle [radius=0.12];
\draw (2,5.5)  [fill=black]  circle [radius=0.12];

\begin{scope}[shift={(0,1)}]
\draw [rounded corners=1.5mm,dashed]  (6,6) -- (7.7,6) -- (6,0) -- (4.4,6) -- (6,6);
\end{scope}

\end{scope}

       \end{tikzpicture}
    \end{minipage}%
\caption{An example execution of our local search algorithm. In this figure, boxes correspond to players, and circles correspond to resources.}
    
    \label{fig:cluster-sketch}

\end{figure*}

We proceed by formally defining and analyzing the
local search algorithm we sketched above.


\subsection{Parameters} Let $\tau > 0$ be a guess on the value of the
Configuration-LP. Our algorithm will use the following setting of parameters:
\begin{align}
  \begin{split}
    \beta &:= 36, \\
    \alpha &:= 5/2, \\
    \mu &:= 1/500.
  \end{split}\label{eqn:parametersforsimplealgo}
\end{align}
Here, $\beta$ is the approximation guarantee, $\alpha$ determines the
``greediness'' of the edges introduced into the layers, and $\mu$
determines the ``laziness'' of the updates of our algorithm. As our
goal is to expose the main ideas, we have not optimized the constants
in this section.

We shall show that whenever $CLP(\tau)$ is feasible, our algorithm will terminate
with a solution of value at least $\tau/\beta$ for the given instance of
\problemmacro{restricted max-min fair allocation}. Combining this with
a standard binary search then yields a $\beta$-approximation algorithm.

\subsection{Thin and fat edges, and matchings}\label{subsec:thinandfat}
We partition the resource set $\cR$ into
$\cR_f := \{i\in \cR: v_i \geq \tau/\beta\}$ and
$\cR_t := \{i\in \cR \mid v_i < \tau/\beta\}$, fat and thin resources
respectively. Note that in a $\beta$-approximate solution, a player is
satisfied if she is assigned a single fat resource whereas she needs
several thin resources. We will call a pair $(p,R)$, for any
$p\in\players$ and $R\subseteq\cR$ such that $v_p(R) = v(R)$ where $v(R) = \sum_{j\in R} v_j$, an {\em
  edge}. Notice that this definition implies that 
  every resource in $R$ is a resource that player $p$
is interested in. We now define \emph{thin} and \emph{fat edges}.

\begin{definition}[Thin and fat edges]
We will call an edge $(p,R)$, where $p\in\cP$ and $R\subseteq\cR$, {\em fat}, if $\{j\}=R\subseteq \cR_f $ contains a single fat resource
that $p$ is interested in; this already implies that $v_p(R) \geq \tau/\beta$.
On the other hand, we will call an edge $(p,R)$, where $p\in\cP$ and $R\subseteq\cR$, {\em thin}, if $R\subseteq \cR_t$ is a
set of thin resources that $p$ is interested in.

Finally, for any $\delta \geq 1$, we will call an edge $(p,R)$, where $p\in\cP$ and $R\subseteq\cR$, a $\delta$-edge, if
$R$ is a minimal set  (by inclusion) of resources  of value at least $\tau/\delta$ for $p$, i.e.,  $v_p(R) \geq
\tau/\delta$. 
\end{definition}
\begin{remark} 
  A thin $\delta$-edge has value at most $\tau/\delta + \tau/\beta$ due to the
  minimality of the edge. 
\end{remark}

As we have already mentioned, the goal of our local search algorithm is to iteratively extend a {\em partial matching}:
\begin{definition}[Matchings]
  A set $M$ of $\beta$-edges is called a \emph{matching} if each player appears
  in at most one edge and the set of resources used by the edges in $M$ are
  pairwise disjoint.  We say that $M$ \emph{matches} a player $p \in \players$
  if there exists an edge in $M$ that contains $p$. Moreover, it is called
  a \emph{perfect matching} if each player is matched by $M$, and otherwise it
  is called a \emph{partial matching}.
\end{definition}

Using the above terminology, our goal is to find a perfect matching
yielding our desired allocation of value $\tau/\beta$.
Our approach will be to show that as long as the matching $M$ does not match all
players in $\cP$ we can extend it to obtain a matching that matches one more player. This ensures that
starting with an empty partial matching and repeating this procedure
$|\cP|$ times we will obtain an allocation of value at least
$\tau/\beta$. Thus, it suffices to develop such an
algorithm. This is precisely what our  algorithm will
do. 
We first state a preprocessing step in Section~\ref{subsec:cluster} before
describing the  algorithm in
Section~\ref{subsec:descriptionofsimplealgo}.

\subsection{Clustering step}\label{subsec:cluster}
This preprocessing phase produces the
\emph{clustered} instances referred to earlier. The clustering step
that we use is the following reduction due to Bansal and Sviridenko.

\begin{theorem}[Clustering Step~\cite{bansal2006santa}]\label{thm:cluster}
	Assuming that $CLP(\tau)$ is feasible, we can partition the set of players $\cP$ into $m$ clusters $N_1,\dots,N_m$ in polynomial time such that 
	\begin{enumerate}
		\item Each cluster $N_k$ is associated with a distinct subset of $|N_k|-1$ fat items from $\cR_f$
		such that they can be assigned to any subset of  $|N_k|-1$ players in $N_k$, and
		\item there is a feasible solution $x$ to $CLP(\tau)$ such that $\sum_{i \in N_k} \sum_{C \in \mathcal{C}_{t}(i,\tau)} x_{iC}  = 1/2$ for each cluster $N_k$, where $\cC_t(i,\tau)$ denotes the set of configurations for player $i$ comprising only thin items.
	\end{enumerate}
\end{theorem}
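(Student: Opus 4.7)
The plan is to construct the clustering from a feasible LP solution $x^*$ to $CLP(\tau)$. For each player $i$ and fat resource $j \in \cR_f$, define
\[
y_{ij} \;=\; \sum_{C \in \conf{i,\tau}\,:\, j \in C} x^*_{i,C},
\]
and let $H$ be the bipartite graph on $\cP \cup \cR_f$ whose edges are the pairs $(i,j)$ with $y_{ij} > 0$. LP feasibility gives $\sum_i y_{ij} \leq 1$ for every fat resource $j$, and $\sum_j y_{ij}$ equals the LP mass of player $i$ on configurations containing at least one fat resource.

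The core idea is to iteratively peel off clusters from $H$ via minimal Hall-type violators. At each iteration, I would identify a setwise-minimal subset $S$ of not-yet-clustered players satisfying $|N_H(S)| \leq |S|-1$. Minimality forces $|N_H(S)| = |S|-1$, and, applied to every proper subset $S' \subsetneq S$, Hall's theorem yields a perfect matching of $S \setminus \{i\}$ into $N_H(S)$ for every choice of ``left-out'' player $i \in S$ --- precisely the matching property required by item~1. One then sets $N_k := S$ and associates the $|S|-1$ fat items $F_k := N_H(S)$ with this cluster, removes both from $H$, and iterates. When no Hall violator remains, the residual graph admits a matching saturating all remaining players by Hall's theorem, and these can be folded into singleton clusters (for players whose LP mass is predominantly on thin configurations) or merged with existing clusters while preserving the matching property.

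To verify item~2, I would bound the fat LP mass emanating from the players in $N_k$ by the LP capacity of the associated fat items: $\sum_{i \in N_k} \sum_{C\,:\, C \cap \cR_f \neq \emptyset} x^*_{i,C} \leq |F_k| = |N_k| - 1$, using that the peeling separates the fat resources assigned to distinct clusters. Since the LP assigns at least $|N_k|$ units of total configuration mass to players in $N_k$, the residual mass on configurations not using fat items outside the cluster is at least $1$, and a simple rescaling argument --- accounting for configurations that mix fat and thin resources --- yields the claimed thin-mass lower bound of $1/2$.

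The main obstacle is the accounting for item~2, since fat LP mass can in principle ``leak'' across clusters when a player's configurations combine fat items assigned to different clusters. The technical step that resolves this is to arrange the peeling so that each cluster's thin mass is controlled in isolation, typically by restricting the LP support at each iteration to configurations that are internally consistent with the current cluster structure, at a constant-factor cost absorbed into the $1/2$ threshold. The running time is polynomial because $CLP(\tau)$ admits a polynomial-time approximate separation oracle (\Sref{sec:CLP}), minimal Hall violators are found via standard bipartite matching routines, and the peeling terminates in at most $|\cP|$ iterations.
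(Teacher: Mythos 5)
The paper does not prove this statement; it cites it from Bansal and Sviridenko, so there is no internal proof to compare against. Evaluated on its own, your proposal starts from the right object (the bipartite fat‑assignment graph built from the LP masses $y_{ij}$) and your Hall argument correctly establishes item~1 for a single peeled minimal violator: if $S$ is minimal with $|N_H(S)|=|S|-1$, then for every $i\in S$ the set $S\setminus\{i\}$ satisfies Hall into $N_H(S)$. But the construction as written has two genuine gaps, both in item~2.

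First, the accounting leaks across peel stages. Your bound $\sum_{i\in N_k}\sum_{j}y_{ij}\leq |F_k|$ requires that every fat item carrying LP mass from a player of $N_k$ lie in $F_k$. This holds for the \emph{first} minimal violator peeled (its neighborhood is exactly $F_k$ by minimality) but fails for subsequent ones: a later cluster's players may have most of their fat mass on fat items already removed with earlier clusters. Concretely, take players $i_1,\dots,i_4$ and fat items $j_1,j_2$ with $y_{i_1j_1}=y_{i_2j_1}=0.1$, $y_{i_3j_1}=0.4$, $y_{i_3j_2}=0.6$, $y_{i_4j_1}=0.4$, $y_{i_4j_2}=0.4$ (item loads sum to $1$). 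Peeling the minimal violator $\{i_1,i_2\}$ with $\{j_1\}$ leaves $\{i_3,i_4\}$ as the next minimal violator with $\{j_2\}$, but $\sigma_{i_3}+\sigma_{i_4}=0+0.2<1/2$: item~2 fails. (Cycle‑cancelling to a forest first does not change this instance's behaviour under your peeling rule.) Your proposed fix---restricting the LP support to ``internally consistent'' configurations at a constant‑factor cost---is not an argument, and it is precisely this step that the construction must settle.

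Second, the residual is not handled. After all violators are peeled, the residual graph satisfies Hall, but a residual player can still have $\sigma_i$ arbitrarily small (e.g.\ $f_i$ close to $1$ with at least one surviving fat neighbour). Making such a player a singleton violates item~2. Merging it into an existing cluster $N_k$ as $N_k\cup\{i\}$ with $F_k\cup\{j_i\}$ breaks item~1: leaving $i$ out now requires a perfect matching of all of $N_k$ into $F_k\cup\{j_i\}$, which is impossible when $N_H(N_k)=F_k$ has cardinality $|N_k|-1$ and no player of $N_k$ is adjacent to $j_i$. In Bansal--Sviridenko the clusters are obtained from a different decomposition: the fractional fat assignment is first turned into a forest by cycle‑cancelling (preserving all vertex degrees $f_i$ and item loads), and the clusters are then extracted by a bottom‑up traversal of each tree that groups players according to accumulated mass. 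The forest structure together with the traversal order is exactly what controls the leakage your peeling does not, and it also forces every player to land in a cluster with enough thin mass rather than leaving a problematic residual; the minimal‑Hall‑violator framing discards this structure.
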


Note that the player that is not assigned a fat item can be chosen
arbitrarily and independently for each cluster in the above
theorem. Therefore, after this reduction, it suffices to allocate a
thin $\beta$-edge for one player in each cluster to obtain a
$\beta$-approximate solution for the original instance. Indeed,
Theorem~\ref{thm:cluster} guarantees that we can assign fat edges for
the remaining players. For the rest of the section we assume that our
instance has been grouped into clusters $N_1,\dots,N_m$ by an
application of Theorem~\ref{thm:cluster}. The second property of
these clusters is that each cluster is fractionally
assigned  $1/2$ LP-value of thin configurations. We will use this
to prove the key lemma in this section,
Lemma~\ref{lem:manyaddableedges1}.

We now focus only on allocating one thin $\beta$-edge per cluster and
forget about fat items completely. This makes the algorithm in
Section~\ref{subsec:descriptionofsimplealgo} simpler than our final
combinatorial algorithm, where we also need to handle the
assignment of fat items to players.

\subsection{Description of the algorithm}\label{subsec:descriptionofsimplealgo}

\begin{algorithm}[t!]
\begin{small}
\parbox{15.5cm}{


{\it Input:}  A partial matching $M$ and an unmatched cluster $N_0$.

{\it Output:} A matching $M'$ that matches all clusters matched by $M$ and also  matches $N_0$.

\begin{enumerate}

  \item \emph{(Initialization)}  Select an arbitrary player $p_0 \in N_0$ and let $A_0 = \emptyset, B_0 = \{(p_0, \emptyset)\}, \ell = 0$, $\cL = (A_0, B_0)$. 
\end{enumerate}
\emph{(Iterative step)} Repeat the following until $N_0$ is matched by $M$:

 \begin{enumerate}\itemsep4mm
  \setcounter{enumi}{1}
\item \emph{(Build phase)} Initialize $A_{\ell+1} =\emptyset$; then for each cluster $N_k$ with a player in $P_\ell$ do:
    \begin{itemize}
      \item  If there is a thin $\alpha$-edge $(p, R)$  with $p\in N_k$ and $R\cap \res(A_{\leq \ell+1} \cup B_{\leq \ell}) = \emptyset$ then \\
        $A_{\ell+1} = A_{\ell+1} \cup \{(p,R)\}.$
    \end{itemize}
    At the end of the build phase, let first $B_{\ell+1}$  be the edges of $M$
    that are blocking the edges in $A_{\ell+1}$. Then update the state of the algorithm by appending $(A_{\ell+1}, B_{\ell+1})$ to $\cL$ and by incrementing $\ell$ by
    one.
  \item \emph{(Collapse phase)} While  $\exists t : I_{t+1}  = \{(p, R) \in  A_{t+1} : v_p(R \setminus \res(B_{t+1})) \geq \tau/\beta\}$ has cardinality $\geq \mu|P_t|$: 
    \begin{itemize}
      \item Choose the smallest such $t$.
      
     { \hspace*{-50pt} { \footnotesize\em //We refer to the following steps as collapsing layer $t$.}}
      \item For each cluster $N_k$ with players $q,p$ satisfying $q\in P_t$ and $(p,R)\in I_{t+1}$:
        \begin{itemize}
          \item Let $(q,R_q) \in B_t\cap M$ be the edge containing $q$. 
          \item Replace $(q,R_q)$ in $M$ with  edge $(p,R')$, where $R'$ is a $\tau/\beta$-minimal subset of $R\setminus \res(B_{t+1})$, i.e., update $M\leftarrow M \setminus \{(q,R_q)\} \cup \{(p, R')\}$. 
          \item Finally, remove $(q,R_q)$ from $B_t$.
        \end{itemize}
      \item Discard $(A_i, B_i)$ from $\cL$ for all $i>t$ and set $\ell = t$.
    \end{itemize}
\end{enumerate}
Output the matching $M$ that also matches $N_0$.
}
\end{small}
\caption{\textsc{Polynomial Time Algorithm for Clustered Instances}}
\label{alg1}
\end{algorithm}

\textbf{Notation:} Recall that it suffices to match exactly one player from each cluster with
a thin $\beta$-edge. With this in mind, we say that a cluster $N_k$ is matched
by $M$ if there exists some player $p \in N_k$ such that $p$ is matched by $M$.
For a set $S$ of edges, we let $\res(S) = \bigcup_{(p,R) \in S} R$ denote the
union of the resources of these edges and we let $\players(S) = \bigcup_{(p, R) \in S} \{p\}$ denote the union of players of these edges. To ease notation, we abbreviate $\players(B_t)$ by $P_t$ in the description of the algorithm and its analysis. Finally, for any family of sets $S_0, S_1,
\ldots, S_\ell$ we  denote $S_0 \cup S_1 \cup \cdots \cup S_t$ by $S_{\leq t}$.

The input to Algorithm~\ref{alg1} is a partial matching $M$ that matches at most one player from each
cluster $N_1,\dots,N_m$, and a cluster $N_0$ that is not matched by $M$; our
goal is to extend our partial matching by matching $N_0$.

The \emph{state of the algorithm} is described by a (dynamic) tuple $(M, \ell, \cL)$,
where  $M$ is the current partial matching and $\cL = ((A_0, B_0), (A_1, B_1), \cdots,
(A_\ell, B_\ell))$ is a list of pairs of sets of ``added'' and ``blocking''
edges that is of length/depth $\ell$. We shall refer to $(A_i, B_i)$ as the
$i$'th \emph{layer}\footnote{The edges of the algorithm naturally form layers
  as described in Section~\ref{sec:simpleint} and as depicted in Figure~\ref{fig:cluster-sketch}.
  The edges in $A_i$ are added so as to try to ``replace'' edges in $B_{i-1}$
in the matching $M$. $B_i$ are then the edges of $M$ that are blocking the
edges in $A_i$.}.

\paragraph{Invariants} The description of our algorithm appears as Algorithm \ref{alg1}. The algorithm is designed to (apart from extending the
matching) maintain the following
invariants at the start of each iterative step: for $i=1, \ldots, \ell$,
  \begin{enumerate}
    \item\label{inv1} $A_i$ is a set of thin $\alpha$-edges that are pairwise
      disjoint, i.e., for two different edges $(p,R) \neq (p', R') \in A_i$  we have $p\neq p'$ and $R\cap R' = \emptyset$.  In addition, each $\alpha$-edge $(p,R) \in A_i$ has $R
      \cap \res(A_{\leq i} \cup B_{\leq i-1} \setminus \{(p,R)\}) = \emptyset$
      (its resources are not shared with edges from earlier iterations or edges
      in $A_i$).

    \item\label{inv2} $B_i =\{(p,R) \in M : \mbox{ $(p,R)$ is blocking an edge  in $A_i$}\}$
      contains those edges of $M$ that blocks edges in $A_i$, where we say that an edge $(p, R) \in M$ blocks an edge $(p', R')$ if $R \cap R' \neq \emptyset$.
    \item\label{inv3} The players of the edges in $A_i$ belong to different clusters and any cluster $N_k$ with a player $p\in N_k$ that  appears in an edge in $A_i$ has a player $q\in N_k$ (that may equal $p$) that
appears in  an edge in $B_{i-1}$.
    \item\label{inv4} $|I_i| < \mu |P_{i-1}|$ where, as in Step 3 of Algorithm~\ref{alg1},
      $I_i  = \{(p, R) \in  A_{i} : v_p(R \setminus \res(B_{i})) \geq
    \tau/\beta\}$ is defined to be those edges that have sufficient amount of
    unblocked resources so as to be added to the matching.
  \end{enumerate}

 In what follows, we further explain the steps of the algorithm and why the
invariants are satisfied. It will then also be clear that the algorithm outputs
an extended matching whenever it terminates. We then analyze its running time
in the next section. 

First the algorithm initializes by selecting an arbitrary player $p_0$ in the cluster $N_0$
that we wish to match.  Then each iteration proceeds in two steps. In the build
phase, the algorithm adds thin $\alpha$-edges (at most one for each cluster with
a player in $P_\ell$) to $A_{\ell+1}$. Notice that the resources of these edges
are disjoint from $\res(A_{\leq \ell} \cup B_{\leq \ell})$ and from each other.
We therefore maintain the first invariant. At the end of the build phase, we define $B_{\ell+1}$ to satisfy the second invariant. The third invariant is also satisfied since we only iterate through the clusters with a player in $P_\ell$ and add at most one edge to $A_{\ell+1}$ for each such cluster.
So after  the build phase, the first three invariants are satisfied. 

The collapse phase will ensure the fourth invariant while not introducing any
violations of the first three.  Indeed, the while-loop runs until the fourth
invariant is satisfied  so we only need to worry about the first three still being satisfied. 
The first and third invariants  remain satisfied because any set $A_i$ that was
affected in the collapse phase is discarded from the algorithm and if $B_i$ was changed then $A_{i+1}$ was also discarded. For the second invariant, note that
after updating the matching $M$, we remove the edge that was removed from the
matching from $B_t$. Hence, $B_t$ still only contains edges of the new
matching that blocks edges in $A_t$. Moreover, by the first invariant, the
newly introduced edge in the matching does not  share any resources with edges
in $A_{\leq t} \cup B_{\leq t}$. Hence, the second invariant also remains true.
Finally, we note that $M$ remains a matching during the update procedure that
matches all clusters that were initially matched. Indeed, when $(q, R_q)$ is removed an edge  $(p,R')$ 
is added to the matching with $p$ being from the same cluster as $q$ (or the algorithm terminates by having successfully matched a player in $N_0$). The added edge is a $\beta$-edge and its resources are disjoint
from all edges in $M$ since (1) $R'$ is a subset of $R\setminus \res(B_{t+1})$, (2)
$B_{t+1}$ contains all blocking edges of $A_{t+1}$ with respect to the matching before the collapse phase, and (3) the edges in $A_{t+1}$ are disjoint so $(p, R')$ is disjoint from any other edges added to the matching in the same collapse phase. We thus
maintain
a valid matching, in which all edges are pairwise disjoint,  and the output is an extended matching that also matches the
cluster $N_0$.

\subsection{Analysis of the algorithm}
We now proceed to show that the algorithm in
Section~\ref{subsec:descriptionofsimplealgo} terminates in polynomial time,
which then implies Theorem~\ref{thm:simplealgo2}. Recall that $\alpha$ is the
parameter that regulates the ``greediness'' of the players  while $\beta$ is
the approximation guarantee, and $\mu$ dictates when we
collapse a layer.

The key lemma that we prove in this section is that in each layer $(A_{i+1}, B_{i+1})$, the number of edges in $A_{i+1}$ is large compared to the number of blocking edges (or, similarly, the number of players) of lower layers. 

\begin{lemma}
  \label{lem:manyaddableedges1}
  Assuming that $CLP(\tau)$ is feasible, at the beginning of each iterative step, $|A_{i+1}| \geq |P_{\leq i}|/5$ for each $i=0,\dots,\ell-1.$
\end{lemma}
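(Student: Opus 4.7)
My plan is to assume for contradiction that $|A_{i+1}| = a < |P_{\leq i}|/5$ and derive a violation of the LP constraints of $CLP(\tau)$, using the per-cluster thin fractional mass guaranteed by Theorem~\ref{thm:cluster} together with the termination of the greedy build phase. The argument is inductive on $i$, and the invariant at smaller indices gives an exponential-growth corollary on the layer sizes that is essential for bounding $v(V) := \sum_{j \in V} v_j$, where $V$ denotes the resources in $A_{\leq i+1} \cup B_{\leq i}$.

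The first step isolates the \emph{bad} clusters, defined as those having a player in $P_i$ but contributing no edge to $A_{i+1}$. By the build rule exactly $a$ clusters are good, and since $M$ has at most one edge per cluster while blockers across different layers are disjoint (as an $A_{\ell+1}$-edge must be disjoint from $B_{\leq \ell}$), the clusters represented in $P_{\leq i}$ form a set of size exactly $|P_{\leq i}|$. The termination of the build phase at cluster $k$ means that every $p \in N_k$ has $v_p(\cR_t \setminus V) < \tau/\alpha$, which combined with $v_p(C) \geq \tau$ for every thin $C \in \cC_t(p,\tau)$ gives $v_p(C \cap V) > \tau(1 - 1/\alpha)$.

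The second step is an LP averaging argument. Using the per-cluster thin fractional mass bound of $1/2$ from Theorem~\ref{thm:cluster}, summing $x_{p,C} \cdot v_p(C \cap V)$ over bad clusters and over thin $C$ lower-bounds the total by $(|P_{\leq i}| - a)(\tau - \tau/\alpha)/2$. Exchanging the order of summation and applying the LP resource constraint $\sum_{(p, C) : j \in C} x_{p, C} \leq 1$ upper-bounds the same total by $v(V)$. For the other direction, the minimality of $\alpha$-edges and $\beta$-edges gives $v(V) < |A_{\leq i+1}|(\tau/\alpha + \tau/\beta) + |B_{\leq i}|(2\tau/\beta)$, and the pairing $|A_j| \leq |P_{j-1}|$ gives $|A_{\leq i+1}| \leq a + |P_{\leq i-1}|$. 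The inductive hypothesis applied to earlier layers yields the exponential-growth consequence $|P_{\leq i-1}| \leq (4/5)|P_{\leq i}|$, so $v(V)$ is bounded by a linear combination of $a$ and $|P_{\leq i}|$ only. Combining the two bounds isolates $a$ on one side and, for the chosen constants $\alpha$ and $\beta$, forces $a \geq |P_{\leq i}|/5$.

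The main obstacle is the delicate coupling of slacks. The LP averaging alone would demand a larger $\alpha$ to close the inequality; the exponential-growth invariant alone merely keeps the tree small. Only when both are combined --- bad clusters consuming nearly $\tau(1 - 1/\alpha)$ of value inside $V$, while the older $A$-layers contribute at most a $(4/5)$-fraction of $|P_{\leq i}|$ edges to the edge count in $V$ --- does the master inequality tighten enough to yield $a \geq |P_{\leq i}|/5$. Setting up the induction so that the invariant self-consistently implies the growth property needed in its own inductive step, and verifying that the collapse condition (regulated by $\mu$) preserves this on the states actually seen at the start of each iterative step, is the most delicate aspect of the proof.
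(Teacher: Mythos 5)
Your overall framework matches the paper's: assume $|A_{\ell+1}|<|P_{\leq\ell}|/5$, use Fact~\ref{fact} (termination of the build phase) to show that unpaired clusters have almost all their fractional thin value inside the tree, sum this against the $1/2$ per-cluster mass from Theorem~\ref{thm:cluster}, and contrast with an upper bound on the value of thin resources in the tree. But there is a decisive quantitative gap in how you bound $|A_{\leq\ell+1}|$.

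You bound $|A_{\leq\ell}|$ by $|P_{\leq\ell-1}|$ via the pairing (each edge of $A_j$ is paired with a distinct player of $P_{j-1}$), and then invoke exponential growth to shrink $|P_{\leq\ell-1}|$ relative to $|P_{\leq\ell}|$. Even with the correct consequence of Lemma~\ref{lem:expgrowth1} (which gives $|P_{\leq\ell-1}|<|P_{\leq\ell}|/2.3$, stronger than the $(4/5)$ factor you wrote), this only yields $|A_{\leq\ell+1}|\leq a+|P_{\leq\ell-1}|<0.2|P_{\leq\ell}|+0.44|P_{\leq\ell}|\approx 0.64|P_{\leq\ell}|$. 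The LP inequality $\frac{(P-A)(1-1/\alpha)}{2}<A(1/\alpha+1/\beta)+P/\beta$ (with $\alpha=5/2,\ \beta=36$, $P=|P_{\leq\ell}|$, $A=|A_{\leq\ell+1}|$) only forces a contradiction when $A\lesssim 0.37P$, so your chain does not close. In fact no choice of target constant can rescue this route: the pairing bound $|A_j|\leq|P_{j-1}|$ is simply too weak. What the paper uses instead is \emph{non-collapsibility}: since no layer $L_j$ is collapsible at the start of an iterative step, fewer than $\mu|P_j|$ edges of $A_{j+1}$ are immediately addable, so the remaining edges each have blocked value at least $\tau/\alpha-\tau/\beta$, and comparing with the total blocker value $\leq 2\tau/\beta\,|P_{j+1}|$ gives $|A_{j+1}|\leq\frac{2\alpha}{\beta-\alpha}|P_{j+1}|+\mu|P_j|$, hence $|A_{\leq\ell}|\leq\left(\frac{2\alpha}{\beta-\alpha}+\mu\right)|P_{\leq\ell}|\approx 0.15|P_{\leq\ell}|$. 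This is the step your proposal is missing; $\mu$ plays no quantitative role in your argument, and without it the constants cannot work.

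Two smaller issues. First, your count of bad clusters is off: the unpaired clusters (to which Fact~\ref{fact} applies) number $|P_{\leq\ell}|-|A_{\leq\ell+1}|$, not $|P_{\leq\ell}|-|A_{\ell+1}|$; the paper is careful to use the former. Second, relying on Lemma~\ref{lem:expgrowth1} to prove Lemma~\ref{lem:manyaddableedges1} is circular in a way the paper deliberately avoids -- Lemma~\ref{lem:expgrowth1} is stated as a \emph{consequence} of Lemma~\ref{lem:manyaddableedges1}, and the paper's proof of the latter never appeals to it, because the non-collapsibility bound on $|A_{\leq\ell}|$ makes it unnecessary.
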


We defer the proof of this statement for now and explain its consequences. As
thin items are of value less than $\tau/\beta = \tau/36$, and each edge in
$A_{\leq \ell}$ is a thin $\alpha$-edge of value at least $\tau/\alpha
= 2\tau/5$, this implies that $B_i$ must be quite large, using $|I_i|
< \mu|P_{i-1}|$ from the fourth invariant. This means that the number of blocking
edges will grow quickly as we prove in the next lemma.

\begin{lemma}[Exponential growth]\label{lem:expgrowth1}
  Assuming that $CLP(\tau)$ is feasible, at the beginning of the iterative step $|P_{i+1}| > 13|P_{\leq i}|/10$ for $i=0,\dots,\ell-1.$
\end{lemma}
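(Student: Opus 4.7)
The plan is to exploit the fact that, at the start of an iterative step, the collapse phase has already terminated, so no layer (in particular layer $L_i$) is collapsible. Together with the previous lemma, this tightly controls the structure of $A_{i+1}$: it must contain many genuinely addable (and not just immediately addable) edges, each of which forces a significant amount of resource mass to sit inside blocking edges of $B_{i+1}$. Quantifying this mass gives the required exponential growth in $|P_{i+1}|$.

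\textbf{Step 1 (many strictly addable edges).} By Lemma~\ref{lem:manyaddableedges1} we have $|A_{i+1}| \geq |P_{\leq i}|/5$. Since layer $L_i$ is not collapsible at the start of the iterative step, the number of \emph{immediately addable} edges in $A_{i+1}$ is strictly less than $\mu|P_i|$. Hence the number of (strictly) addable edges in $A_{i+1}$ is at least
\[
|A_{i+1}| - \mu|P_i| \;\geq\; \tfrac{1}{5}|P_{\leq i}| - \mu|P_i| \;\geq\; \bigl(\tfrac{1}{5} - \mu\bigr)|P_{\leq i}|.
\]

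\textbf{Step 2 (each addable edge contributes blocked mass).} Let $(p,R)\in A_{i+1}$ be an addable but not immediately addable thin $\alpha$-edge. Then $v(R)\geq \tau/\alpha$, and since no subset of $R$ disjoint from $M$ has value $\geq \tau/\beta$, in particular $v(R\setminus M) < \tau/\beta$. Consequently $v(R\cap M) > \tau/\alpha - \tau/\beta$. The resources of $R\cap M$ sit entirely inside blocking edges of $B_{i+1}$, so the total value used up by blockers satisfies
\[
\sum_{(p,R)\in A_{i+1}} v(R\cap M) \;\geq\; \bigl(|A_{i+1}|-\mu|P_i|\bigr)\bigl(\tfrac{\tau}{\alpha}-\tfrac{\tau}{\beta}\bigr).
\]

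\textbf{Step 3 (bounding the blocking mass from above).} Every blocking edge is a thin $\beta$-edge, hence of value at most $\tau/\beta + \tau/\beta = 2\tau/\beta$ by the minimality clause in the definition of a thin edge. Because blocking edges are drawn from $M$ (one per distinct player) we have $|B_{i+1}|=|P_{i+1}|$, so
\[
\sum_{(p,R)\in A_{i+1}} v(R\cap M) \;\leq\; |P_{i+1}|\cdot \tfrac{2\tau}{\beta}.
\]

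\textbf{Step 4 (combining).} Chaining the two inequalities and dividing by $2\tau/\beta$,
\[
|P_{i+1}| \;\geq\; \bigl(\tfrac{1}{5}-\mu\bigr)|P_{\leq i}|\cdot\tfrac{\beta}{2}\bigl(\tfrac{1}{\alpha}-\tfrac{1}{\beta}\bigr) \;=\; \bigl(\tfrac{1}{5}-\mu\bigr)\bigl(\tfrac{\beta}{2\alpha}-\tfrac12\bigr)|P_{\leq i}|.
\]
Plugging in $\alpha=5/2$, $\beta=36$, and $\mu=1/500$ gives $(\tfrac{1}{5}-\mu)(\tfrac{\beta}{2\alpha}-\tfrac12) = \tfrac{99}{500}\cdot\tfrac{67}{10} = \tfrac{6633}{5000} > \tfrac{13}{10}$, which is precisely the claimed bound.

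The only subtle point, which I expect to be the main obstacle in writing this cleanly, is verifying that ``at the beginning of the iterative step'' is the right invariant to use so that no layer is collapsible (justifying Step 1); this is immediate from the description of the collapse phase in Section~\ref{subsec:descriptionofsimplealgo}, which is iterated until no collapsible layer remains. Everything else is a direct value-accounting argument combining Lemma~\ref{lem:manyaddableedges1} with the definitions of thin $\alpha$-edges, thin $\beta$-edges, and immediate addability.
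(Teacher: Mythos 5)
Your proof is correct and follows essentially the same route as the paper: invoke Lemma~\ref{lem:manyaddableedges1} to lower-bound $|A_{i+1}|$, use non-collapsibility of $L_i$ to bound the number of immediately addable edges by $\mu|P_i|$, lower-bound the blocked mass via $v(R\cap M)>\tau/\alpha-\tau/\beta$ for each genuinely addable edge, upper-bound it by $|P_{i+1}|\cdot 2\tau/\beta$ using minimality of thin $\beta$-edges, and combine; your final constant $(1/5-\mu)(\beta/2\alpha-1/2)$ is algebraically identical to the paper's $(\beta-\alpha)(1/5-\mu)/(2\alpha)$, and both evaluate to $6633/5000>13/10$.
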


\begin{proof}
  Fix an $i$ such that $0 \leq i < \ell$. By the fourth invariant, $|I_{i+1}| < \mu |P_i|$  at the beginning of the iterative step. This means that there are at least
  $|A_{i+1}|-\mu|P_{i}|$ many edges in $A_{i+1}$ which are not in $I_{i+1}$. As each 
  edge in $A_{i+1}\setminus I_{i+1}$   has resources of value at least $\tau/\alpha-\tau/\beta$ that are blocked (i.e., contained in $\res(B_{i+1})$), we can lower bound
  the total value of blocked resources appearing in  $A_{i+1}$  by \[ \left( \frac{\tau}{\alpha} -
    \frac{\tau}{\beta} \right) \left( |A_{i+1}| - \mu |P_{i}| \right).\] Further, since each edge in
  $B_{i+1}$ is of value at most $2\tau/\beta$ by minimality, the total value of such resources is upper
  bounded by $|P_{i+1}|\cdot 2\tau/\beta.$ In total, \[ \left( \frac{\tau}{\alpha} -
    \frac{\tau}{\beta} \right) \left( |A_{i+1}| - \mu |P_{i}| \right) \leq
  |P_{i+1}|\frac{2\tau}{\beta} \implies |P_{ i + 1}| \geq \frac{(\beta - \alpha)(1/5 -
    \mu)}{2\alpha} |P_{\leq i}| > 13|P_{\leq i}|/10, \] where we have used
  Lemma~\ref{lem:manyaddableedges1} to bound $|A_{i+1}|$ by $|P_{\leq i}|/5$ from below.
\end{proof}

Since the number of blocking edges grows exponentially as a function
of the layer index, an
immediate consequence of Lemma~\ref{lem:expgrowth1} is that the total number of
layers in the list $\cL$ at any step in the algorithm is at most $O(\log
|\cP|).$ This means that we have to satisfy the condition in the while-loop of
the collapse phase after at most logarithmically many iterative steps. When
this happens, Algorithm~\ref{alg1} selects the smallest $t$ satisfying the
condition and then proceeds to update $A_{t+1}$ and $B_t$. Note that, by the
condition of the while-loop, and since each edge in $I_{t+1}$ will be updated
in the for-loop (using the third invariant), a constant fraction (at
least $\mu$ as defined in (\ref{eqn:parametersforsimplealgo})) of the edges in $B_t$ are removed. We
refer to these steps of the algorithm as the collapse of layer $t$.
Furthermore, due to the algorithm's first invariant, we know that the edges that compose 
$I_{t+1}$ are pairwise disjoint; therefore, we are able to insert all of them simultaneously into
our matching, which means that the size of our matching does not decrease during the collapse operation. 
On the contrary, if $p_0$ is part of the edges that are inserted into $M$, then we have actually achieved 
to extend our matching $M$.
Intuitively we make large progress whenever we update $M$ during the collapse
of a layer. We prove this by maintaining a signature vector $s :=
(s_0,\dots,s_\ell,\infty)$ during the execution of the algorithm, where $$s_i
:= \lfloor \log_{1/(1-\mu)} |P_i| \rfloor.$$

\begin{lemma}\label{lem:signature}
   The signature vector always reduces in lexicographic value across each iterative step, and the coordinates of the signature vector are always non-decreasing, i.e., $s_0 \leq s_1\dots \leq s_\ell$.
\end{lemma}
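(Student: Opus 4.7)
The plan is to prove the two claims separately, treating the signature at the moments it is most naturally defined --- namely at the start of each iterative step, where no collapsible layer remains.

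For the non-decreasing claim I would first observe that, at the start of an iterative step, no layer $L_0, \ldots, L_\ell$ is collapsible: otherwise the preceding collapse phase would not yet have terminated. Hence Lemma~\ref{lem:expgrowth1} applies, yielding $|P_{i+1}| > 13|P_{\le i}|/10 \ge |P_i|$ for every $i < \ell$. Since $x \mapsto \lfloor \log_{1/(1-\mu)} x \rfloor$ is monotone non-decreasing, this immediately gives $s_{i+1} \ge s_i$.

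For the lex-decrease claim I would trace the signature through the two phases of one iterative step. In the build phase the algorithm appends a new layer $L_{\ell+1}$, so the signature changes from $(s_0, \ldots, s_\ell, \infty)$ to $(s_0, \ldots, s_\ell, s_{\ell+1}, \infty)$: the $\infty$ at coordinate $\ell+1$ is replaced by a finite integer, which is already a strict lex decrease. In the collapse phase, each collapse at the earliest collapsible layer $L_t$ removes at least $\mu|P_t|$ players from $P_t$ (by collapsibility), giving $|P_t^{\mathrm{new}}| \le (1-\mu)|P_t^{\mathrm{old}}|$ and hence $s_t^{\mathrm{new}} \le s_t^{\mathrm{old}} - 1$, since $\log_{1/(1-\mu)}(1-\mu) = -1$. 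Layers with index above $t$ are then discarded, while coordinates $0, \ldots, t-1$ remain untouched. So the entry at position $t$ strictly decreases while all prior entries are unchanged --- another strict lex decrease, irrespective of what happens to the truncated coordinates beyond position $t$. Composing the build phase with any sequence of collapses therefore produces a strictly smaller signature at the end of the iterative step than at its start.

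The main subtlety I anticipate is the word ``always'' in the non-decreasing assertion: intermediate states within the collapse phase could in principle violate $s_{t-1} \le s_t^{\mathrm{new}}$ if a single collapse shrinks $|P_t|$ sharply (or even to zero). The clean way to handle this is to note that the collapse phase by definition continues until no collapsible layer remains, at which point Lemma~\ref{lem:expgrowth1} again applies and restores the monotone chain $s_0 \le s_1 \le \cdots \le s_\ell$ at the start of the next iterative step, which is precisely the moment where the lemma is used to bound the running time.
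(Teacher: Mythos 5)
Your proof is correct and takes essentially the same approach as the paper: the strict lexicographic decrease comes from the build phase replacing the terminal $\infty$ by a finite value, or from the factor-$(1-\mu)$ shrinkage of $|P_t|$ at the last collapsed layer, while the monotone-coordinates claim reduces to Lemma~\ref{lem:expgrowth1} evaluated at the start of each iterative step (when no layer is collapsible). Your decomposition into atomic build/collapse operations is a trivially equivalent reformulation of the paper's two-case split, and your remark that ``always'' should be read as ``at the start of each iterative step'' matches how the paper invokes the lemma.
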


\begin{proof}
 Let $s$ and $s'$ be the signature vectors at the beginning and at the
 end of some iterative step.  We now consider two cases depending on
 whether a collapse operation occurs in this iterative step.

  \begin{itemize}
  \item[Case 1.] \textbf{No layer was collapsed.} Clearly, $s' =
    (s_0,\dots,s_\ell,s'_{\ell+1},\infty)$ has smaller lexicographic
    value compared to $s$.

  \item[Case 2.]  \textbf{At least one layer was collapsed.} Let
    $\ell+1$ denote the index corresponding to the newly created layer in
    the build phase. Let $0 \leq t \leq \ell$ be the most recent index
    chosen in the while-loop during the collapse phase. As a result of
    the collapse operation suppose the layer $P_t$ changed to
    $P'_t$. Then we know that $|P'_t| < (1-\mu)|P_t|.$ Since none of
    the layers with indices less than $t$ were affected during this
    procedure, $s' = (s_0,\dots,s_{t-1}, s'_t,\infty)$ where
    $s'_t = \lfloor \log_{1/(1-\mu)} |P'_t| \rfloor \leq \lfloor
    \log_{1/(1-\mu)} |P_t| \rfloor - 1 = s_t - 1.$ This shows that the
    lexicographic value of the signature vector decreases.
  \end{itemize}
  In both cases, the fact that the coordinates of $s'$ are non-decreasing follows from Lemma~\ref{lem:expgrowth1} and the definition of the coordinates of the signature vector.
\end{proof}

Choosing the ``$\infty$'' coordinate of the signature vector to be some value larger than $\log_{1/(1-\mu)} |\cP|$ (so that Lemma~\ref{lem:signature} still holds), we see that each coordinate of the signature vector is at most $U$ and the number of coordinates is also at most $U$ where $U = O(\log |\cP|)$. Thus, the sum of the coordinates of the signature vector is always upper bounded by $U^2$. We now prove that the number of such signature vectors is polynomial in $|\cP|$.

A partition of an integer $N$ is a way of writing $N$ as the sum of
positive integers (ignoring the order of the summands). The number of
partitions of an integer $N$ can be upper bounded by $e^{O(\sqrt{N})}$
by a result of Hardy and Ramanujan~\cite{HardyRam18}\footnote{The asymptotic formula for the number of partitions of $N$ is $ \frac {1} {4N\sqrt3} \exp\left({\pi \sqrt {\frac{2N}{3}}}\right) \mbox { as } N\rightarrow \infty$~\cite{HardyRam18}.}. 
Using that the coordinates of our signature vectors are non-decreasing, each signature vector corresponds to a partition of an integer of value at most $U^2$, and vice versa: given a partition of an integer of size $\ell$, the largest number of the partition will correspond to the $\ell$-th coordinate, the second largest to the $\ell-1$-th coordinate, and so on.
Therefore, we can upper bound the total number of signature vectors by $\sum_{i \leq U^2} e^{O(\sqrt{i})} = |\cP|^{O(1)}.$ Since each iteration of the algorithm takes only polynomial time along with Lemma~\ref{lem:signature} this proves Theorem~\ref{thm:simplealgo2}.

Before we return to the proof of the key lemma in this section,
Lemma~\ref{lem:manyaddableedges1}, let us note an important property
of the algorithm which follows from that in the build-phase we add an
$\alpha$-edge for each cluster as long as it is disjoint from the already
added resources. 

\begin{fact}\label{fact} Let $q$ be a player from some cluster $N_k$. Notice
  that if a player $q$ is part of some blocking edge in the $i^{\text{th}}$ layer, i.e.,
  $q \in P_i$, and further there is no edge $(p,R) \in A_{i+1}$ with $p\in N_k$
  then it means that none of the players in $N_k$ have a set of resources of
  value at least $\tau/\alpha$ disjoint from the resources $\res(B_{\leq i} \cup A_{\leq i+1})$.
\end{fact}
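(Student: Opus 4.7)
The plan is to derive the Fact directly from the termination rule of the Build phase for $L_{i+1}$, combined with a monotonicity property of tree resources under collapse. The key observation is that the Build phase exhausts all candidate edges by design, and the current tree still contains every resource that was in the tree when $L_{i+1}$ was constructed.

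First I would establish a structural invariant: every cluster $N_k$ has at most one representative across the blocker sets $P_0, \ldots, P_\ell$. This holds because each $q' \in P_{\leq \ell}$ is matched by $M$ through some blocking edge, and the algorithm preserves the property that $M$ matches at most one player per cluster. Hence if $q \in P_i \cap N_k$, then $q$ is the unique member of $N_k \cap P_{\leq \ell}$. Consequently, an edge $(p,R) \in A_j$ with $p \in N_k$ can only arise by pairing $p$ with $q$, which by the Build phase description requires $q \in P_{j-1}$, forcing $j = i+1$. So if $q$ is currently unpaired, then no player from $N_k$ appears anywhere in $A_{\leq \ell}$; in particular the ``no player from $N_k$ already in $A_{i+1}$'' clause of the candidate-edge definition is vacuously satisfied.

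Next I would trace back to the most recent Build phase for $L_{i+1}$, i.e., the one that produced the layer currently in the tree. Since $L_{i+1}$ still exists, no collapse at any index $t \le i$ can have occurred since then (such a collapse would have discarded $L_{i+1}$). Hence the layers $L_0, \ldots, L_i$ and the edge set $A_{i+1}$ are exactly as they were at the termination of that Build phase. At termination no candidate edge remained, and combined with the cluster-availability observation above, this means no $p \in N_k$ had a thin $\alpha$-edge $(p,R)$ with $R$ disjoint from $A_{\leq i+1} \cup B_{\leq i}$. Since the only subsequent operations are adding new layers above $L_{i+1}$ or possibly shrinking some $B_j$ with $j \geq i+1$, the resources $A_{\leq i+1} \cup B_{\leq i}$ persist as a subset of the current tree. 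Therefore any set $R$ disjoint from the current tree is a fortiori disjoint from $A_{\leq i+1} \cup B_{\leq i}$, giving the Fact.

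The main obstacle is the bookkeeping around collapse operations: one must check that whenever $L_{i+1}$ is present in the current tree, no modification at a lower index could have removed any resource from $A_{\leq i+1} \cup B_{\leq i}$, so that the Build phase's termination condition continues to certify the non-existence of candidate edges for $N_k$. Once this resource-monotonicity is in place, the Fact follows almost immediately from the Build phase stopping only when there are no candidate edges left, together with the at-most-one-player-per-cluster invariant.
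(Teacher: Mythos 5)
Your proof is correct and matches the implicit reasoning the paper relies on: the paper states this Fact without an explicit proof, treating it as a direct consequence of the build phase terminating only when no candidate edges remain, and your argument carefully supplies exactly the supporting observations needed (the at-most-one-player-per-cluster invariant for $P_{\leq\ell}$, and the monotonicity of the resource set $A_{\leq i+1}\cup B_{\leq i}$ while $L_{i+1}$ remains in the tree). No gaps.
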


\begin{proof}[Proof of Lemma~\ref{lem:manyaddableedges1}]
  Notice that since the set $A_i$ is discarded if it is modified or
  any of the sets $A_0, A_1, \ldots, A_{i-1}$,
  $B_0, B_1 \ldots, B_{i-1}$ is modified, it is sufficient to verify
  the inequality when we construct the new layer
  $(A_{\ell+1}, B_{\ell+1})$ in the build phase. The proof is now by
  contradiction. Suppose $|A_{\ell+1}| < |P_{\leq \ell}|/5$ after the
  build phase. Let $\cN \subseteq \{N_1,\dots,N_{m}\}$ be the clusters
  that have a player in an edge $B_{\leq \ell}$ but no player in an
  edge in $A_{\leq \ell+1}$. We have that,
  $|\cN| = |P_{\leq \ell}|-|A_{\leq \ell+1}|.$

Recall that $\cC_t(i,\tau)$ denotes the set of configurations for
player $i$ comprising only thin items. By Theorem~\ref{thm:cluster}
there exists an $x$ that is feasible for $CLP(\tau)$ such that
$\sum_{i \in N_k} \sum_{C \in \mathcal{C}_{t}(i,\tau)} x_{iC} = 1/2$
for each cluster $N_k$. Now form the bipartite hypergraph $\cH =
(\cN \cup \cR_t, E)$ where we have vertices for clusters in $\cN$
and thin items in $\cR$, and edges $(N_k, C)$ for every cluster $N_k$
and thin configuration $C$ such that $x_{pC} > 0$ and $p \in
N_k$. To each edge $(N_k, C)$ in $\cH$ assign the weight
$\left(\sum_{i \in N_k} x_{iC}\right)\sum_{j\in C} v_j$. The total
weight of edges in $\cH$ is at least $|\cN|\tau/2.$ 
Let $Z  = \res(B_{\leq \ell} \cup A_{\leq \ell+1})$ denote the thin items appearing in the edges of $\cL$ and $A_{\ell+1}$. Let $v(Z) = \sum_{j\in Z} v_j$
denote their value.
Now remove all these items
from the  
hypergraph to form $\cH'$ which has edges $(N_k, C \setminus Z)$ for
each edge $(N_k, C)$ in $\cH$. The weight of $(N_k, C \setminus Z)$ is
similarly defined to be $\left(\sum_{i \in N_k}
x_{iC}\right)\sum_{j\in C\setminus Z} v_j.$

Let us upper bound the total value of thin items appearing in  $Z$. Consider some layer $(A_j, B_j)$. The total value of resources in thin
$\alpha$-edges in $A_j$ is at most  $(\tau/\alpha + \tau/\beta)|A_j|$ by the
minimality of the edges. The value of resources in $B_j$ not already
present in some edge in $A_j$ is at most $(\tau/\beta)|B_j|$ also by minimality of the thin
$\beta$-edges in $B_j$. Therefore, $v(Z)$ is
 at most
\[ \sum_{j=1}^{\ell}\left( (\frac\tau\alpha + \frac\tau\beta)|A_j| + (\frac\tau\beta)|B_j| \right) + |A_{\ell+1}|\left(\frac\tau\alpha + \frac\tau\beta \right) < |A_{\leq \ell+1}|\left(\frac\tau\alpha + \frac\tau\beta \right) + |P_{\leq \ell}|\frac\tau\beta. \] 


As the sum of the edge weights in $\cH$ is at least $(|\cN|/2)(\tau)$, the sum of edge weights in $\cH'$ is at least $|\cN|\tau/2-v(Z)$. And by Fact~\ref{fact}, the sum of edge weights in $\cH'$ must be strictly smaller than $(\cN/2)(\tau/\alpha)$. Thus,

\begin{equation}\label{eqn:prev}
\frac{(|P_{\leq \ell}|- |A_{\leq \ell + 1}|)}{2}\tau - |A_{\leq
  \ell+1}|\left(\frac\tau\alpha + \frac\tau\beta \right) - |P_{\leq
  \ell}|\frac\tau\beta < \frac{(|P_{\leq \ell}|- |A_{\leq \ell +
    1}|)}{2}\frac{\tau}{\alpha}.\tag{*}\end{equation}
Note that $|A_{\leq \ell+1}|$ appears with a larger negative coefficient (in absolute terms) on the left-hand-side than on the
right-hand-side. Therefore, if~\eqref{eqn:prev}  holds then it also holds for an upper bound of
$|A_{\leq \ell+1}|$. We shall compute such a bound and reach a contradiction.

 We start by computing an upper bound on $|A_{j+1}|$ for $j=0,\dots,\ell-1$. The fourth invariant says that except for at most $\mu |P_j|$ edges
 in $A_{j+1}$, the remainder have at least $\tau/\alpha -
 \tau/\beta$ value of resources blocked by the edges in
 $B_{j+1}$. Using this, $$\left( \frac{\tau}{\alpha} -
 \frac{\tau}{\beta} \right) \left( |A_{j+1}| - \mu |P_j| \right) \leq
|P_{j+1}|\frac{2\tau}{\beta} \overset{\text{summing over
     $j$}}{\implies} \left( \frac{\tau}{\alpha} -
 \frac{\tau}{\beta} \right) \left( |A_{\leq \ell}| - \mu |P_{\leq \ell-1}| \right) \leq
 |P_{\leq \ell}|\frac{2\tau}{\beta}.$$

Rearranging terms we have, \[|A_{\leq \ell}| \leq |P_{\leq
   \ell}|\frac{2\alpha}{\beta-\alpha} + \mu |P_{\leq \ell-1}| \leq
 |P_{\leq \ell}|\left( \frac{2\alpha}{\beta-\alpha} + \mu \right).\]

Substituting this upper bound in~(\ref{eqn:prev}) along with our assumption $|A_{\ell+1}| < |P_{\leq \ell}|/5$ we get (after some algebraic manipulations)
\[
  |P_{\leq \ell}|\left(1-\frac1\alpha-\frac2\beta\right) - |P_{\leq
    \ell}|\left( \frac{2\alpha}{\beta-\alpha} + \mu + 1/5
  \right)\left(1+\frac1\alpha+\frac2\beta\right) < 0.\] This is a
contradiction because if we substitute in the values of
$\alpha,\beta,$ and $\mu$ from (\ref{eqn:parametersforsimplealgo})
the left-hand-side is positive.
\end{proof}

\section{Combinatorial Algorithm}
\label{sec:combinatorialalgorithm}

In the previous section, we described a $36$-approximation
algorithm for \problemmacro{restricted max-min fair allocation};
however, this algorithm required us to solve the Configuration-LP. In
this section, we will design and analyze  a purely combinatorial $(6+2\sqrt{10}+\epsilon)
$-approximation algorithm, for any $0<\epsilon\leq 1$ (for reference, note that $6+2\sqrt{10}<13$). This will prove our main result, Theorem \ref{thm:main}.

We start by providing an informal overview of how the combinatorial algorithm works. 

\subsection{Intuitive Algorithm Description}\label{subsec:intuitive}
To begin with, the general framework of our combinatorial algorithm is similar to that of the simpler algorithm we described in Section \ref{sec:simplealgo2}: we guess an optimal value $\tau$ for the Configuration-LP, and we then try to find an allocation of resources which approximately satisfies every player, i.e., assigns to each player a set of resources of total value at least $\tau/13$ for that player.
To do so, we will again design a local search procedure, whose goal will be to extend a given partial allocation of resources, so as to satisfy one more player.

An example execution of our combinatorial algorithm appears in Figure \ref{fig:DPN}: there, given a partial allocation of resources to players, we want to extend this allocation to satisfy player $p$.
Naturally, if there is a set of resources, that do not appear in the given partial allocation, and whose total value for $p$ is at least $\tau/13$, we will assign these resources to player $p$. Otherwise, we find an edge $e_p$ whose total value for $p$ is at least $\tau/2$ (the bottom gray edge in  Figure \ref{fig:DPN}(a)), and consider all the edges in our given partial allocation that share resources with that set (the white edges intersecting $e_p$ in Figure \ref{fig:DPN}(a)); these edges constitute the first layer that is shown in Figure \ref{fig:DPN}(a). 

At this point, we should make note of the fact that, similar to the
simpler algorithm we described in Section \ref{sec:simplealgo2}, we
will again be using a {\em greedy strategy} with respect to the edges
we wish to include in our partial matching. Specifically, even though
we wish to only assign resources of total value at least $\tau/13$ to
each player, the gray edges we attempt to include in our matching are
significantly more valuable (i.e., of total value at least
$\tau/2$). Again, this will imply that every gray edge will intersect
with multiple white/blocking edges, which will eventually help us
prove that the algorithm's running time is polynomial in the size of
the input.

Next, similar to the simpler algorithm we described in Section
\ref{sec:simplealgo2}, we want to free up the resources that appear in
edge $e_p$. We do this by finding disjoint sets of resources that
satisfy the players appearing in the white edges of the first
layer. However, here we encounter the first major difference compared to our previous algorithm: some of the players that appear in the white edges of the first layer can be satisfied by using {\em fat} resources, i.e., resources whose value for their corresponding players is at least $\tau/13$. Since every fat edge we would like to include in our partial allocation can only be blocked by exactly one edge that already belongs to our allocation, {\em alternating paths} of fat edges are created. Such a path, that ends in a gray thin edge, is displayed in Figure \ref{fig:DPN}(b); if we wish to include the gray edge that contains $q_2$ into our partial allocation, then we would have to 
replace the white fat edges with the gray ones.

However, considering such alternating paths of fat edges brings up one issue: since, as is shown in Figure \ref{fig:DPN}(a), the alternating paths that originate at players $p_1$ and $p_2$ end at two distinct gray thin edges, if we were to include both of these edges into our matching, then we would have to guarantee that we will not use the same fat resource to satisfy two different players. In order to do this, we will  include the gray edges that contain players $q_1$ and $q_2$ into our partial allocation, only if the alternating paths that end in these players are \emph{vertex-disjoint}, as is the case in Figure \ref{fig:DPN}(c).

Next, since we have solved the problem of deciding {\em if} we can update our partial matching by replacing white edges with gray ones, the question that arises is {\em when} should we do that. Similar to our simpler algorithm, we will employ the strategy of {\em lazy updates}. In other words, we will be replacing the white edges of some layer with gray ones (or, as we will call this operation, {\em collapse} a specific layer), only if that would mean that a significant amount of the white edges gets replaced. Replacing a significant amount of white (i.e., blocking) edges then implies that we make significant progress towards matching player $p$.

Finally, after we update our partial allocation, by inserting the gray edges containing players $q_1$ and $q_2$, inserting the gray fat edges that belong to the corresponding alternating paths, and removing the white fat edges that belong to  the corresponding alternating paths, we have managed to free up a significant amount of resources of edge $e_p$. Hence, we choose a subset of the resources contained in $e_p$, whose total value is at least $\tau/13$, and include it into our partial allocation. At this point, we have managed to extend our partial allocation to include one more player, namely, player~$p$.

\begin{figure*}[t!]
\begin{minipage}[t]{\linewidth}
    \begin{tikzpicture}[scale=0.25]

 \draw [draw=white] (-15,0) -- (-10,0);
\begin{scope}
\draw (-3,0) rectangle (16,21);

\draw [draw=white] (6,21) -- (6,21) node[anchor=south] {(a)};
\draw [draw=white] (31,21) -- (31,21) node[anchor=south] {(b)};
\draw [draw=white] (6,-23) -- (6,-23) node[anchor=north] {(d)};
\draw [draw=white] (31,-23) -- (31,-23) node[anchor=north] {(c)};
\draw [draw=white] (31.5,0) -- (31.5,0) node[anchor=north] {$\Downarrow$};
\draw [draw=white] (19,10.5) -- (19,10.5) node[anchor=north] {$\Rightarrow$};
\draw [draw=white] (19,-12.5) -- (19,-12.5) node[anchor=north] {$\Leftarrow$};
\begin{scope}[shift={(0,10)}]
\draw [rounded corners=1.5mm,fill=lightgray]  (6,6) -- (7.5,6) -- (6,0) -- (4.5,6) -- (6,6);
\filldraw (5.85,1.7) rectangle (6.15,2);
\draw (5,5.5)  [fill=black]  circle [radius=0.12];
\draw (5.5,5.5)  [fill=black]  circle [radius=0.12];
 \draw (6,5.5)  [fill=black]  circle [radius=0.12];
\draw (6.5,5.5)  [fill=black]  circle [radius=0.12];
\draw (7,5.5)  [fill=black]  circle [radius=0.12];
\end{scope}

\begin{scope}[shift={(3,10)}]
\draw [rounded corners=1.5mm,fill=lightgray]  (7,6) -- (7.5,6) -- (6,0) -- (4.5,6) -- (7,6);
\filldraw (5.85,1.7) rectangle (6.15,2);
\draw (5.5,5.5)  [fill=black]  circle [radius=0.12];
\draw (6,5.5)  [fill=black]  circle [radius=0.12];
\draw (6.5,5.5)  [fill=black]  circle [radius=0.12];
\draw (7,5.5)  [fill=black]  circle [radius=0.12];
\draw (5,5.5)  [fill=black]  circle [radius=0.12];
\end{scope}
 
\begin{scope}[shift={(-3,7)}]
\draw [rounded corners=1.5mm,fill=lightgray]  (4,9) -- (2,9) -- (6.35,0) -- (7,9) -- (4,9);
\filldraw (5.85,1.7) rectangle (6.15,2);
\draw (2.5,8.5)  [fill=black]  circle [radius=0.12];
\draw (3,8.5)  [fill=black]  circle [radius=0.12];
\draw (3.5,8.5)  [fill=black]  circle [radius=0.12];
 \draw (4,8.5)  [fill=black]  circle [radius=0.12];
\draw (4.5,8.5)  [fill=black]  circle [radius=0.12];
\draw (5,8.5)  [fill=black]  circle [radius=0.12];
 \draw (5.5,8.5)  [fill=black]  circle [radius=0.12];
\draw (6,8.5)  [fill=black]  circle [radius=0.12];
\draw (6.5,8.5)  [fill=black]  circle [radius=0.12];
\end{scope}

%
%

\draw [rounded corners=1.5mm,fill=lightgray]  (6,6) -- (9,6) -- (6,0) -- (3,6) -- (6,6);
\filldraw (5.85,1.7) rectangle (6.15,2)  node[anchor=south] {$p$};
\draw (3.5,5.5)  [fill=black]  circle [radius=0.12];
\draw (4,5.5)  [fill=black]  circle [radius=0.12];
\draw (4.5,5.5)  [fill=black]  circle [radius=0.12];
\draw (5,5.5)  [fill=black]  circle [radius=0.12];
\draw (5.5,5.5)  [fill=black]  circle [radius=0.12];
\draw (6,5.5)  [fill=black]  circle [radius=0.12];
\draw (6.5,5.5)  [fill=black]  circle [radius=0.12];
\draw (7,5.5)  [fill=black]  circle [radius=0.12];
\draw (7.5,5.5)  [fill=black]  circle [radius=0.12];
\draw (8,5.5)  [fill=black]  circle [radius=0.12];
\draw (8.5,5.5)  [fill=black]  circle [radius=0.12];

\begin{scope}[shift={(0.05,0)}]
\draw [rounded corners=1.5mm]  (9.,4.7) -- (10.35,4.7) -- (9.0,10.7) -- (7.65,4.7) -- (9.0,4.7);
\filldraw (8.85,8.7) rectangle (9.15,9) node[anchor=north,yshift=-4] {$p_2$};
\filldraw (8.85,11.7) rectangle (9.15,12) node[anchor=south,yshift=4] {$q_2$};
\draw[dashed] (9,10.35) circle[x radius=1, y radius=2];
\end{scope}
\draw (9,5.5)  [fill=black]  circle [radius=0.12];
\draw (9.5,5.5)  [fill=black]  circle [radius=0.12];
\draw (10,5.5)  [fill=black]  circle [radius=0.12];

\begin{scope}[shift={(-3,0)}]
\draw [rounded corners=1.5mm]  (9.,4.7) -- (10.35,4.7) -- (9.0,10.7) -- (7.65,4.7) -- (9.0,4.7);
\filldraw (8.85,8.7) rectangle (9.15,9) node[anchor=north,yshift=-4] {$p_1$};
\filldraw (8.85,11.7) rectangle (9.15,12) node[anchor=south,yshift=4] {$q_1$};
\draw[dashed] (9,10.35) circle[x radius=1, y radius=2];
\end{scope}

\begin{scope}[shift={(-6,0)}]
\draw [rounded corners=1.5mm]  (9.,4.7) -- (10.35,4.7) -- (9.0,10.7) -- (7.65,4.7) -- (9.0,4.7);
\filldraw (8.85,8.7) rectangle (9.15,9);
\end{scope}

\begin{scope}[shift={(-6,10)}]
\draw [rounded corners=1.5mm]  (9.,4.7) -- (10.35,4.7) -- (9.0,10.7) -- (7.65,4.7) -- (9.0,4.7);
\filldraw (8.85,8.7) rectangle (9.15,9);
\end{scope}

\begin{scope}[shift={(-9,10)}]
\draw [rounded corners=1.5mm]  (9.,4.7) -- (10.35,4.7) -- (9.0,10.7) -- (7.65,4.7) -- (9.0,4.7);
\filldraw (8.85,8.7) rectangle (9.15,9);
\end{scope}

\draw (3,5.5)  [fill=black]  circle [radius=0.12];
\draw (2.5,5.5)  [fill=black]  circle [radius=0.12];
\draw (2,5.5)  [fill=black]  circle [radius=0.12];
\draw[dashed] (-2.5,1.85) -- (15.5,1.85);
\draw[dashed] (-2.5,8.85) -- (15.5,8.85);
\draw[dashed] (-2.5,18.85) -- (15.5,18.85);

\end{scope}

\draw[dashed] (9,8.35) -- (31.5,0.3);
\draw[dashed] (9,12.35) -- (31.5,20.7);
\begin{scope}[shift={(25,0)}]

     \begin{scope}[shift={(0,11)}]
\filldraw[fill=lightgray, draw=black, rounded corners=1.5mm] (4.73,10) -- (6.5,3.5) -- (8.27,10);
 \end{scope}
\draw (-3,0) rectangle (16,21);
 \draw[dashed] (6.5,10.5) circle [x radius=5, y radius=10];

 \begin{scope}[shift={(0,1)}]

  \begin{scope}[shift={(0,1)}]
 \draw[fill=lightgray] (6.5,3.5) circle[x radius=1, y radius=2.5];
 \filldraw (6.35,1.85) rectangle (6.65,2.15) node[anchor=north,yshift=-8] {$p_2$};
 \draw (6.5,5)  [fill=black]  circle [radius=0.12];
\draw [rounded corners=1.5mm] (5,-2) -- (6.5,3.5) -- (8,-2);
 \end{scope}
 
 \begin{scope}[shift={(0,7)}]
 \draw[fill=lightgray] (6.5,3.5) circle[x radius=1, y radius=2.5];
 \filldraw (6.35,1.85) rectangle (6.65,2.15);
 \draw (6.5,5)  [fill=black]  circle [radius=0.12];
 \end{scope}
 
    \begin{scope}[shift={(0,4)}]
 \draw (6.5,3.5) circle[x radius=1, y radius=2.5];
\filldraw (6.35,4.85) rectangle (6.65,5.15);
 \end{scope}
 
     \begin{scope}[shift={(0,10)}]
 \draw (6.5,3.5) circle[x radius=1, y radius=2.5];
\filldraw (6.35,4.85) rectangle (6.65,5.15) node[anchor=south,yshift=8] {$q_2$};
 \end{scope}
 
 \end{scope}

\end{scope}

\begin{scope}[shift={(0,-23)}]
\draw (-3,0) rectangle (16,21);

\draw [rounded corners=1.5mm,fill=lightgray]  (6,6) -- (9,6) -- (6,0) -- (3,6) -- (6,6);
\filldraw (5.85,1.7) rectangle (6.15,2)  node[anchor=south] {$p$};
\draw (3.5,5.5)  [fill=black]  circle [radius=0.12];
\draw (4,5.5)  [fill=black]  circle [radius=0.12];
\draw (4.5,5.5)  [fill=black]  circle [radius=0.12];
\draw (5,5.5)  [fill=black]  circle [radius=0.12];
\draw (5.5,5.5)  [fill=black]  circle [radius=0.12];
\draw (6,5.5)  [fill=black]  circle [radius=0.12];
\draw (6.5,5.5)  [fill=black]  circle [radius=0.12];
\draw (7,5.5)  [fill=black]  circle [radius=0.12];
\draw (7.5,5.5)  [fill=black]  circle [radius=0.12];
\draw (8,5.5)  [fill=black]  circle [radius=0.12];
\draw (8.5,5.5)  [fill=black]  circle [radius=0.12];

\begin{scope}[shift={(-6,0)}]
\draw [rounded corners=1.5mm]  (9.,4.7) -- (10.35,4.7) -- (9.0,10.7) -- (7.65,4.7) -- (9.0,4.7);
\filldraw (8.85,8.7) rectangle (9.15,9);
\end{scope}

\draw (3,5.5)  [fill=black]  circle [radius=0.12];
\draw (2.5,5.5)  [fill=black]  circle [radius=0.12];
\draw (2,5.5)  [fill=black]  circle [radius=0.12];
\draw[dashed] (-2.5,1.85) -- (15.5,1.85);
\draw[dashed] (-2.5,8.85) -- (15.5,8.85);
\begin{scope}[shift={(0,.5)}]
\draw [rounded corners=1.5mm,dashed]  (6,6) -- (7.6,6) -- (6,0) -- (4.5,6) -- (6,6);
\end{scope}
\end{scope}

\begin{scope}[shift={(25,-23)}]
\draw (-3,0) rectangle (16,21);

\begin{scope}[shift={(0,1)}]

\begin{scope}[shift={(0,6)}]
  \draw[fill=lightgray,rotate around={25:(6.5,-1)}] (6.5,-1) circle[x radius=4.25, y radius=1];
\draw (6.5,0.5) circle[x radius=3.75, y radius=1];
 \draw[fill=black] (3.18,0.35) rectangle (3.48,0.65) node[anchor=east,xshift=-4] {$q_1$};
 \draw[fill=black] (9.66,0.5) circle [radius=0.12];
\end{scope}

\begin{scope}[shift={(0,3)}]
  \draw[fill=lightgray,rotate around={25:(6.5,-1)}] (6.5,-1) circle[x radius=4.25, y radius=1];
\draw (6.5,0.5) circle[x radius=3.75, y radius=1];
 \draw[fill=black] (3.18,0.35) rectangle (3.48,0.65);
 \draw[fill=black] (9.66,0.5) circle [radius=0.12];
\end{scope}

\begin{scope}
 \draw[fill=black] (3.18,0.35) rectangle (3.48,0.65) node[anchor=east,xshift=-4] {$p_1$};
 \draw[fill=black] (9.66,0.5) circle [radius=0.12];
\end{scope}

\begin{scope}[shift={(0,9)}]
 \draw[fill=black] (3.18,0.35) rectangle (3.48,0.65);
 \draw[fill=black] (9.66,0.5) circle [radius=0.12];
\end{scope}

\begin{scope}[shift={(0,12)}]
 \draw[fill=black] (3.18,0.35) rectangle (3.48,0.65) node[anchor=east,xshift=-4] {$q_2$};
 \draw[fill=black] (9.66,0.5) circle [radius=0.12];
\end{scope}

\begin{scope}[shift={(0,15)}]
\draw[fill=lightgray] (6.5,0.5) circle[x radius=3.75, y radius=1];
 \draw[fill=black] (3.18,0.35) rectangle (3.48,0.65);
 \draw[fill=black] (9.66,0.5) circle [radius=0.12];
  \draw[rotate around={25:(6.5,-1)}] (6.5,-1) circle[x radius=4.25, y radius=1];
\end{scope}

\begin{scope}[shift={(0,18)}]
\draw[fill=lightgray] (6.5,0.5) circle[x radius=3.75, y radius=1];
 \draw[fill=black] (3.18,0.35) rectangle (3.48,0.65) node[anchor=east,xshift=-4] {$p_2$};
 \draw[fill=black] (9.66,0.5) circle [radius=0.12];
 \draw[rotate around={25:(6.5,-1)}] (6.5,-1) circle[x radius=4.25, y radius=1];
\end{scope}
\end{scope}
\end{scope}

       \end{tikzpicture}
    \end{minipage}%
\caption{An illustration of our combinatorial algorithm. In this figure, boxes correspond to players and circles correspond to resources.}
    
    \label{fig:DPN}

\end{figure*}


\subsection{Parameters}\label{subsec:parameters}
 Let $\tau > 0$ be a guess on the value of the
Configuration-LP, and fix some $0<\epsilon\leq 1$. Our algorithm will use the following setting of parameters:
\begin{align}
\begin{split}
\beta &:= 2(3+\sqrt{10})+\epsilon, \\
\alpha &:= 2, \\
\mu &:= \epsilon/100.
\end{split}
\label{eq:param}
\end{align}
Similar to our simpler algorithm, $\beta$ is the approximation guarantee, $\alpha$ determines the
``greediness'' of the algorithm, and $\mu$
determines the ``laziness'' of the updates of our algorithm.

We shall show that whenever $CLP(\tau)$ is feasible, our algorithm will terminate
with a solution of value at least $\tau/\beta$ for the given instance of
\problemmacro{restricted max-min fair allocation}. Combining this with
a standard binary search then yields a $\beta$-approximation algorithm.

\subsection{Description of the Algorithm}\label{subsec:combalgo-description}
We begin by noting that we will be re-using the definitions of {\em fat} and {\em thin} edges, $\delta$-{\em edges}, and {\em (partial) matchings} that we introduced in Section \ref{subsec:thinandfat}. However, we remind the reader that the parameters we used in the above definitions have now changed, see~\eqref{eq:param}.

The goal of our algorithm will be to find a perfect matching. Similar to our simpler algorithm, the way we do this is by designing an augmenting algorithm, that will extend any given partial matching to satisfy one more player. Thus, starting from an empty matching and iteratively applying the augmenting algorithm will yield a perfect matching that corresponds to a $\beta$-approximate allocation. We remark that for the purposes of our algorithm, any partial matching we consider contains the maximum number of fat resources possible. In order to enforce this condition, we find a maximum matching between fat resources and players; this will be our initial partial matching. Starting from this partial matching, we proceed to iteratively extend it, by matching one more player at a time while never decreasing the number of fat items in our allocation.

We proceed to define the concepts of {\em Disjoint Path Networks} and {\em Canonical Decompositions}, that are necessary to state our combinatorial  algorithm.
These concepts will be used to implement the idea of updating our partial matching using \emph{vertex-disjoint} alternating paths, that we mentioned in Section \ref{subsec:intuitive}. We then  state our algorithm formally, and we analyze its running time in the subsequent sections.

\paragraph{Disjoint Path Networks}
 As we discussed in the overview of our combinatorial algorithm, we need a way to ensure that the alternating paths we use to update our partial matching are disjoint. We say that two paths are disjoint if they are vertex-disjoint. To do so, we employ a structure called {\em Disjoint Path Networks}.

Given a partial matching $M$, let $H_M = (\cP \cup \cR_f, E_M)$ be the directed graph defined as follows:
there is a vertex for each player in $\cP$ and each fat resource in
$\cR_f$; and, there is an arc from a
player in $p \in \cP$ to a fat resource $f \in \cR_f$ if $p$ is interested in $f$ unless the arc
$(p, \{f\})$ appears in $M$ in which case there is an arc $(\{f\}, p)$. Note that
the graph $H_M$ depends only on the assignment of fat resources to players in $M$.

Now, let $S,
T \subseteq \cP$ be a set of sources and sinks respectively that are not necessarily disjoint. Let
$\FLOW{M}{S}{T}$ denote the flow network we get if we place unit capacities on the vertices of $H_M$, and use $S$ and $T$ as sources and sinks respectively. Furthermore, let $\DP{M}{S}{T}$ denote the value of an optimal
solution, i.e., the maximum number of disjoint paths from the sources $S$ to the sinks $T$ in the graph $H_M$.

In our algorithm, $S$ and $T$ will contain only vertices in $H_M$ corresponding to players in
$\cP$. However, to specify a sink we sometimes abuse notation and specify an edge since the corresponding sink vertex
can be deduced from it. For example, if we write $\DP{M}{X}{Y}$, for some set of players $X$ and some set of edges $Y$, then we mean the maximum number of disjoint paths that start at a player in $X$ and end in a player that appears in some edge in $Y$. 

For basic concepts related to flows, such as flow networks and augmenting paths, we refer the reader to the textbook by Cormen,
Leiserson, Rivest and Stein~\cite{DBLP:books/daglib/0023376}.
\paragraph{State of the Algorithm}
The \emph{state of the algorithm} is described by a dynamic tuple $(M, \ell, \cL, I)$,
where  $M$ is the current partial matching, $\cL = ((A_0, B_0,d_0), (A_1, B_1,d_1), \cdots,
(A_\ell, B_\ell,d_\ell))$ is a list of $\ell$ {\em layers} and $I$ is a set of "immediately addable" edges. Each layer $L_i = (A_i,B_i,d_i)$  consists of a set of "added" edges $A_i$, a set of "blocking" edges $B_i$, and a positive integer $d_i$. We note that $d_i$ is redundant for the formal statement of our algorithm, but will be handy in our analysis.

\paragraph{Canonical Decompositions}
We proceed to define the last concept necessary to describe our combinatorial algorithm. Recall that we denote $\cup_{i\leq t} S_i$ by $S_{\leq t}$, for some sequence of sets $S_0,\ldots S_t$, and that $P_i$ denotes the players that appear in $B_i$.  Moreover, for a set $S$ of edges we use $\players(S)$ to denote the set of players that appear in an edge in $S$ and we use $\cR(S)$ to denote the set of resources that appear in an edge in $S$.

\begin{definition}[Canonical Decomposition of $I$]
  \label{def:cancomp}
	Given a state $(M,\ell,\cL, I)$ of the algorithm, we call a collection of disjoint subsets $\{I_0, I_1, \ldots, I_\ell\}$ of $I$ a \emph{canonical
		decomposition} if it satisfies the following conditions:
	\begin{enumerate}
		\item For $i=0, 1, \ldots,\ell$, $|I_{\leq i}| = \DP{M}{P_{\leq i}}{I_{\leq i}} = \DP{M}{P_{\leq i}}{I}$. 
		\item There exists an optimal solution $W$ to $\FLOW{M}{P_{\leq \ell}}{I}$ such that,
		for $i=0,1\ldots, \ell$, $|I_i|$ paths in $W$ go from
                players $Q_i \subseteq P_i$ to the sinks in $I_i$. We
                denote these paths by $W_i$. We also refer to $W$ as
                the {\em canonical solution} corresponding to the
                decomposition. 
	\end{enumerate}
	\end{definition}
	
	As we will see in Section  \ref{subsec:single-iter}, canonical decompositions and their corresponding canonical solutions can be computed in polynomial time.
	
\paragraph{Algorithm Statement}

\begin{algorithm}[t!]
\begin{small}
\parbox{15.5cm}{


{\it Input:}  A partial matching $M$ and an unmatched player $p_0$.

{\it Output:} A matching $M'$ that matches all players matched by $M$ and also  matches $p_0$.

\begin{enumerate}

  \item \emph{(Initialization)}  Set $A_0 = \emptyset, B_0 = \{(p_0, \emptyset)\}, \ell = 0$, $d_0=0$ and $\cL = (A_0, B_0,d_0)$. 
\end{enumerate}
\emph{(Iterative step)} Repeat the following until $p_0$ is matched by $M$:

 \begin{enumerate}\itemsep4mm
  \setcounter{enumi}{1}
\item \emph{(Build phase)} Initialize $A_{\ell+1} =\emptyset$. 
While there exists a thin $\alpha$-edge $(p,R)$ such that 
$R\cap \res(A_{\leq \ell+1} \cup B_{\leq \ell}\cup I) = \emptyset$
and
 $\DP{M}{P_{\leq \ell}}{A_{\leq \ell+1}\cup I \cup\{(p,R)\}} > \DP{M}{P_{\leq \ell}}{A_{\leq \ell+1}\cup I}$:
    \begin{itemize}
      \item  If $v_p(R\setminus \cR(M))< \tau/\beta$, then set $A_{\ell+1}=A_{\ell+1}\cup\{(p,R)\}$, else set $I=I\cup\{(p,R)\}$.
    \end{itemize}
    At the end of the build phase, let  $B_{\ell+1}$  be the edges of $M$
    that are blocking the edges in $A_{\ell+1}$. Set $d_{\ell+1} \leftarrow \DP{M}{P_{\leq l}}{A_{\leq l+1}\cup I}$; then update the state of the algorithm by appending $(A_{\ell+1}, B_{\ell+1},d_{\ell+1})$ to $\cL$ and by incrementing $\ell$ by
    one.
  \item \emph{(Collapse phase)}
  Compute the canonical decomposition $\{I_0,\dots,I_\ell\}$ of $I$, and the corresponding canonical solution $W$.
  
  While  $\exists t : |I_{t}|\geq \mu|P_t|$: 
    \begin{enumerate}
      \item Choose the smallest such $t$.
      
     { \hspace*{-50pt} { \footnotesize\em //We refer to the following steps as collapsing layer $t$.}}
     \item Compute optimal solution $X$ to $\FLOW{M}{P_{\leq t-1}}{A_{\leq t} \cup I_{\leq t-1}}$ whose paths are	disjoint from $W_t$. 
     
          { \hspace*{-50pt} { \footnotesize\em //We refer to the following step  as alternating along the paths of $W_t$.}}
     \item For each path $\Pi$ in $W_t$ that ends at a player $p_e$ with an edge $(p_e,R)\in I_t$
      
      \begin{enumerate}
      	\item Set $ M \gets M \setminus \{(p, \{f\}) \; | \; (f, p) \in \Pi\} \cup \{(p, \{f\}) \; | \; (p, f) \in \Pi\}$. 
      	\item Remove from $M$ and $B_t$ the edge containing the source of the path $\Pi$.
      	\item Add to $M$ some $\beta$-edge $(p_e,R')$, where $R'\subseteq R$ and $R'\cap\cR(M)=\emptyset$. 
      \end{enumerate}
      
      \item Set $I=I_0\cup\ldots\cup I_{t-1}$. For every edge $(p,R)\in A_t$, if $v_p(R\setminus \cR(M))\geq \tau/\beta$, then:
      \begin{itemize}
      	\item Remove $(p,R)$ from $A_t$ and remove those edges from $B_t$ that only block $(p,R)$ in $A_t$.
      	\item If $X$ contains a path that ends in $p$, insert $(p,R)$ in $I$.
      \end{itemize}  
      
      \item Discard $(A_i, B_i,d_i)$ from $\cL$ with $i>t$ and set $\ell = t$.
    \end{enumerate}
\end{enumerate}
Output the matching $M$ that also matches $p_0$.
}
\end{small}
\caption{\textsc{Combinatorial Augmenting Algorithm}}
\label{alg2}
\end{algorithm}


The combinatorial  algorithm behind the proof of Theorem \ref{thm:main} is stated as Algorithm \ref{alg2}. We remark that the computation of canonical decompositions and solutions to flow networks that are carried out in Steps 3 and 3.b respectively can be carried out in polynomial time; this fact is proved in Section \ref{subsec:single-iter}.

Similar to Algorithm~\ref{alg1}, Algorithm \ref{alg2} preserves the following invariants:
	
	\begin{enumerate}
		\item\label{combinv1} For $i=0,\ldots, \ell$, $A_i$ is a set of thin $\alpha$-edges and each $\alpha$-edge $(p,R) \in A_i$ has $R \cap
\res(A_{\leq i} \cup B_{\leq i-1}\cup I \setminus \{(p,R)\}) = \emptyset$ (its resources are not shared with edges from earlier iterations,
edges in $A_i$, or edges in $I$).
		
    \item\label{combinv2}  For any edge $(p,R)\in I$, it holds that $R \cap \res(A_{\leq \ell} \cup I \setminus \{(p,R)\}) = \emptyset$ and
      $v_p(R\setminus\cR(M))\geq \tau/\beta$.\footnote{We note that this invariant says that each edge $(p, R) \in I$ has a subset $R' \subseteq R$ so that $v_p(R') \geq \tau/\beta$ and $R' \cap \cR(A_{\leq \ell} \cup B_{\leq \ell}  \cup I \setminus \{(p, R)\}) = \emptyset$ since $B_{\leq \ell} \subseteq M$. In other words, the resources of $(p, R')$ are disjoint from all other resources in $\cL$.}
		

		\item\label{inv5} Given a canonical decomposition  $\{I_0,\dots,I_\ell\}$ of $I$,
		for $i=0,\ldots, \ell$ it holds that $|I_i| < \mu |P_{i}|$.

	\end{enumerate}
	
The similarities  between these invariants and those of the simpler algorithm follow from the
same basic ideas. However, since Algorithm \ref{alg2} is more
involved, its analysis requires more invariants that we present in the subsequent sections. 

Before proceeding with analyzing Algorithm \ref{alg2}, we explain its steps in more detail and why the algorithm satisfies the above invariants.  The algorithm begins with a partial
matching $M$ and a player $p_0$ that we wish to include in our partial matching. Furthermore, as pointed out earlier, we make sure that $M$ contains a maximum matching between fat resources and players. Every iteration of our algorithm
involves two main phases: the build phase, and the collapse phase.

During the build phase of layer $\ell+1$, the algorithm finds thin $\alpha$-edges for the players in $P_\ell$ that we then insert into
either $I$ (if the $\alpha$-edge contains sufficient resources that do not appear in $M$) or to $A_{\ell+1}$. 
By the design of Algorithm \ref{alg2}, any edge that is inserted into $A_{\ell+1}$ will be disjoint from edges in
$A_{\leq \ell+1} \cup B_{\leq \ell}\cup I$; the same holds for any edge $(p,R)$ that is inserted into $I$, while in addition we have
$v_p(R\setminus\cR(M))\geq \tau/\beta$.
Therefore, the first two invariants are preserved during the build phase.
 
 Furthermore, edges inserted into $A_{\ell+1}$ or $I$ need to either contain a player from $P_{\leq \ell}$, or to be the final edge in an
alternating path that includes fat edges originating at a player in $P_{\leq \ell}$. Even though we will not store such alternating
paths explicitly, it is required that after we insert any such thin $\alpha$-edge into $A_{\ell+1}$ and $I$, the value of the flow network
$\DP{M}{P_{\leq \ell}}{A_{\leq \ell+1}\cup I}$ increases; this will ensure that there are enough disjoint paths of fat edges to
permit the inclusion of all such thin edges into our partial matching $M$. 
 
After the algorithm has finished the build phase, it proceeds to the collapse phase. The condition $\exists t: |I_t| \geq \mu |P_t|$ of the while-loop guarantees that the third invariant is satisfied once the collapse phase terminates (since the cardinality of $I_i$ always equals $\DP{M}{P_{\leq i}}{I}$ no matter the chosen canonical decomposition). We now describe this phase in more detail and show that it maintains a valid matching and that it does not introduce any violations of the first two invariants.  

The first step of the collapse phase is to  compute a canonical decomposition
of $I$, and a corresponding canonical solution $W$. Now suppose that we
have $I_t \geq \mu |P_t|$ and that the algorithm collapses layer $t$. We refer
to the edges of $I_t$ as "immediately addable" as they have enough free
resources (by Invariant~\ref{combinv2}) to be added to the matching. Indeed,
these are the edges we will insert into our partial matching, using the paths
of $W_t$. Specifically, for each path $\Pi$ of $W_t$, the algorithm proceeds as
follows. By definition of the sources and the sinks,  $\Pi$ is a path that starts with a player $p_s \in P_t$ and
ends with a player $p_e$ such that $(p_e, R) \in I_t$. Between $p_s$ and
$p_e$, the path alternates between fat edges that belong to $M$  and fat  edges we want to  insert into $M$, i.e., $\Pi = (p_s = p_1 , f_1, p_2, f_2,  \ldots, p_k, f_k, p_{k+1} = p_e)$ where $p_s$ is interested in $f_1$, $p_{k+1}$ is currently assigned $f_k$, and $p_i$ is currently assigned  $f_{i-1}$ and interested in $f_{i}$ for $i=2, \ldots, k$. 
To update the matching, we find a $\beta$-edge $(p_e, R')$ with $R' \subseteq R$ that is disjoint from the resources of matching $M$ (guaranteed to exist by the second invariant) and we let $(p_s, R_s)$ denote the edge in $B_t\subseteq M$ incident to player $p_s$. Step 3.c  now updates the matching by inserting $(p_e, R')$ and $(p_s, f_1), (p_2, f_2), \ldots, (p_k, f_k)$ to the matching while removing $(p_s, R_s)$ and $(p_2, f_1), (p_3, f_2), \ldots, (p_{t}, f_k)$.  This process is called
{\em alternating along path $\Pi$}. 

 
 As a result, some of the resources of edges in $A_{t}$ are freed up, and we move those edges of $A_t$ that now have $\tau/\beta$ free resources to
$I$ (Step 3.d). {Finally}, we discard all layers above the one we collapsed. 
Let us now see why our first invariant is upheld after the collapse phase. When we collapse layer $t$, we might remove edges
from $A_t$, we discard all $A_{t'}$ for $t'>t$ and we preserve $A_{t'}$ for $t'<t$. Since the first invariant was upheld before
the collapse phase, for any $t'\leq t$ there were no edges in $A_{t'}$ that intersected any edge in $A_{\leq t'}$, $B_{\leq t'-1}$ or
$I_0\cup\ldots I_{t-1}$. Furthermore, since any edge that was inserted into $I$ during Step 3.d previously belonged to $A_t$, no edge
inserted into $I$ will intersect any edge in $A_{\leq t}\cup B_{\leq t-1}\cup I_0\cup\ldots I_{t-1}$. Therefore, after the collapse phase,
for any $t'\leq t$ every edge in $A_{t'}$ is disjoint from edges in $A_{\leq t'}\cup B_{\leq t'-1}\cup I$, and the first invariant holds.

After the collapse phase, $I$ contains the edges that belonged to $I_0\cup\ldots I_{t-1}$ (call them {\em old} edges), plus the edges that
were inserted during Step 3.d (call them {\em new} edges). Concerning any old edge $e$, since the
second invariant held before the collapse phase, and since during the collapse phase for any $t'\leq t$ we introduce no new edges into
$A_{t'}$, the resources of $e$ continue to be disjoint from the resources of $A_{\leq t}$ and the old edges. Moreover, $v_p(\cR(e)\setminus \cR(M))$ is still at least  $\tau/\beta$ since the resources of the edges added to the matching during the collapse phase are disjoint from $\cR(e)$, where we use that the second invariant held before this iteration, i.e., that the resources of edges in $I$ are disjoint.  Hence, to verify the second
invariant it remains to verify that any new edge $(p,R)$ has $v_p(R\setminus \cR (M))\geq \tau/\beta$ (follows immediately from Step 3.d)
and that its resources are disjoint from the resources of all old and other new edges and edges in $A_{\leq t}$; but this follows directly from the fact
that any new edge belonged to $A_t$ before the collapse phase and the fact that the first invariant held before the collapse phase. Hence,
the second invariant is satisfied after the collapse phase.


Now, let us see why the output of Algorithm \ref{alg2} is a partial matching that matches player $p_0$. Observe
that we only update our partial matching during Step 3.c and, as explained above, we alternate along all paths in $W_t$  during this step. As these paths are vertex-disjoint and the edges in $I$ have disjoint resources (by the second invariant), these updates do not interfere with each other. Moreover, note that when we alternate along a path all previously matched players remain matched (albeit to new edges) and, in addition, all fat resources remain matched. This means that our algorithm maintains a matching of the players that were matched by the input and that our matching remains one that  maximizes the number of assigned fat resources.  By iterating until an edge
that contains $p_0$  is inserted into $M$, it follows that when Algorithm 2 terminates, the output will  be a valid matching that also matches $p_0$ in addition to the players that were matched by the original matching that was given as input.

Our running time analysis of Algorithm \ref{alg2} is carried out in the following sections. Specifically, we begin by analyzing the running
time of a single iteration of our augmenting algorithm in Section \ref{subsec:single-iter}. Then, we proceed to state certain
additional invariants,
and prove that they are upheld by Algorithm \ref{alg2} in Section \ref{subsec:invariants}.
 Finally, using these invariants, we will prove that the total number of iterations executed by Algorithm \ref{alg2} is polynomial in
Section \ref{subsec:bound-iter}.

\subsection{Running Time Analysis of a Single Iteration}\label{subsec:single-iter}
In this section, we prove that the running time of a single iteration is
polynomial. We begin by studying the build phase. In this phase, in each
iteration of the while-loop, we consider those players $p$ and resources $R$
such that $(p,R)$ is a thin $\alpha$-edge satisfying $R \cap \cR(A_{\leq
\ell+1} \cup B_{\leq \ell} \cup I) = \emptyset$. We then check whether adding
$p$ as a sink to our flow network strictly increases its value, i.e., the
number of disjoint paths from the sources in $P_{\leq \ell}$ to the sinks in
$A_{\leq \ell +1} \cup I \cup \{(p,R)\}$.  Both these operations can be done in polynomial time as  (1) verifying whether such a set $R$ exists for a player $p$ just amounts to calculating the total value of the resources $p$ is interested in that currently are not in the other relevant edges, and as (2)  verifying whether the flow network increases its value reduces to  a standard maximum flow problem. 

Next, we study the collapse operation. Here, we have two non-trivial operations: computing a canonical decomposition (Step~3 of Algorithm \ref{alg2}) and Step 3.b of Algorithm \ref{alg2}.
\begin{lemma}\label{lem:canonicaldecomposition} Given a state 
	$(M, \ell, \cL, I)$ of the algorithm, we can find a canonical decomposition  of $I$ and the corresponding canonical solution in polynomial time.
\end{lemma}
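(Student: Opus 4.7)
The plan is to construct the canonical decomposition in polynomial time via $\ell+1$ nested maximum-flow computations in node-capacitated networks derived from $H_M$. The guiding intuition is to successively admit the players of increasing layer index as sources and track how the max flow grows layer-by-layer.

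I would first turn $H_M$ into a node-capacitated network by assigning capacity $1$ to every internal vertex (the standard vertex-splitting reduction gives a unit-capacity edge network). For each $i = 0,1,\ldots,\ell$, define $N_i$ by adjoining a super source $s$ with unit arcs into each vertex of $P_{\leq i}$ and a super sink $t$ with unit arcs from each player appearing as an endpoint of an edge in $I$. Compute $n_i := \DP{M}{P_{\leq i}}{I}$ as the max flow of $N_i$ in polynomial time. Since $N_{i-1}$ is a subnetwork of $N_i$, we have $n_0 \leq n_1 \leq \cdots \leq n_\ell$, so setting $|I_i| := n_i - n_{i-1}$ (with $n_{-1} := 0$) yields nonnegative cardinalities summing to $n_\ell$.

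Next, I would build an actual max flow $f_\ell$ incrementally: start with $f_{-1} \equiv 0$, and at step $i$, augment the current $f_{i-1}$ (viewed as a flow in $N_i$) to a max flow $f_i$ by finding $n_i - n_{i-1}$ augmenting paths in the residual graph. The crux of the argument is that every such augmenting path must leave $s$ through a source arc to a vertex in $P_i$: if it instead entered $H_M$ through some $p \in P_{\leq i-1}$, it would also be an augmenting path in the residual of $f_{i-1}$ within the smaller network $N_{i-1}$, contradicting $|f_{i-1}| = n_{i-1}$. Moreover, since a simple augmenting path cannot revisit $s$, it alters source arcs only by saturating one unit arc into $P_i$; no source arc into $P_{\leq i-1}$ is ever touched by a later step. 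By induction, the flow entering $H_M$ from the source side through $P_j$ in the final $f_\ell$ is exactly $n_j - n_{j-1}$.

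To finish, decompose the integer flow $f_\ell$ into $n_\ell$ vertex-disjoint $s$-$t$ paths and group them by the layer of their starting player; exactly $n_j - n_{j-1}$ of them start inside $P_j$. Let $W_j$ be this group, $Q_j \subseteq P_j$ its starting players, and $I_j \subseteq I$ its terminal sinks, all pairwise disjoint by vertex-disjointness of the paths. Condition~(2) of Definition~\ref{def:canonical} then holds by construction. For condition~(1), $|I_{\leq i}| = \sum_{j \leq i}(n_j - n_{j-1}) = n_i = \DP{M}{P_{\leq i}}{I}$, and $\DP{M}{P_{\leq i}}{I_{\leq i}} = n_i$ is witnessed from below by $W_0 \cup \cdots \cup W_i$ and bounded above by $|I_{\leq i}| = n_i$. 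The procedure invokes $O(\ell) = O(|\cP|)$ max-flow computations, so the total running time is polynomial. The main subtlety is isolating the flow entering through each \emph{individual} layer $P_j$ rather than through $P_{\leq j}$; absent the observation that step $j$'s augmenting paths must enter via $P_j$, one would obtain only cumulative cardinalities, which is strictly weaker than what Definition~\ref{def:canonical} requires.
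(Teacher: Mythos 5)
Your approach is the same as the paper's: iterate over the layers, maintain a maximum flow while adding the sources of each new layer and re-augmenting, and read off the canonical decomposition from the final flow by grouping paths according to the layer of their start vertex. The paper's own proof is terse at exactly the place you single out as the ``crux,'' so making the per-layer bookkeeping explicit is a good instinct. However, the justification you give for that crux is not quite right as stated: you claim that an augmenting path for the \emph{current} flow $g$ (after some of the $n_i-n_{i-1}$ augmentations of step $i$ have already been performed) beginning with $s\to p$ for $p\in P_{\leq i-1}$ ``would also be an augmenting path in the residual of $f_{i-1}$ within $N_{i-1}$.'' That holds verbatim only for the \emph{first} augmentation of step $i$; later augmenting paths live in the residual of $g\neq f_{i-1}$, which may contain backward arcs introduced by the earlier augmentations of this very step, so the path need not be present in the residual of $f_{i-1}$. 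The conclusion you want is nevertheless true, and a clean way to establish it is by flow decomposition: if such a path $P$ existed, then $g+P$ would carry $n_{i-1}+1$ units across source arcs into $P_{\leq i-1}$ (since, inductively, the earlier augmentations of step $i$ saturate only source arcs into $P_i$ and a simple augmenting path cannot revisit $s$); decomposing $g+P$ into $s$--$t$ paths and retaining the $n_{i-1}+1$ of them that leave $s$ through $P_{\leq i-1}$ yields a feasible flow of value $n_{i-1}+1$ in $N_{i-1}$, contradicting $n_{i-1}=\DP{M}{P_{\leq i-1}}{I}$. With that patch the argument is complete and delivers exactly what the paper's proof asserts.
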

\begin{proof}
  	We shall construct an optimal solution $W$ to the flow network
        $\FLOW{M}{P_{\leq \ell}}{I}$ with
  sources $P_{\leq \ell}$ and sinks $I$ iteratively. Compute
    the maximum flow $X^{(0)}$ in the network
    $\FLOW{M}{P_{\leq 0}}{I}$. Let $Q_0 \subseteq P_0$ be the set of
    sources appearing in the flow solution $X^{(0)}$. Now observe that
    this solution $X^{(0)}$ is also a valid flow in the network
    $\FLOW{M}{P_{\leq 1}}{I}$. Therefore, by using an augmenting flow
    algorithm, we can augment the flow $X^{(0)}$ to a maximum flow
    $X^{(1)}$ in the network $\FLOW{M}{P_{\leq 1}}{I}$. Let
    $Q_1 \subseteq P_1$ be the set of additional sources appearing in the flow
    solution $X^{(1)}$. We use here an important property of the flow augmentation
    process, which states that the set of sources in $X^{(1)}$ is precisely the
    disjoint union $Q_0 \cup Q_1$ (see, for
    example,~\cite{schrijver2002combinatorial}). In other words, a
    vertex appearing as a source of a flow path in a solution
    continues to be present as a source of a flow path after an
    augmentation step. Continuing this process, we end up with a flow
    solution $X^{(\ell)}$ in the network $\FLOW{M}{P_{\leq \ell}}{I}$. Define $W_i$ to be the flow paths in
    $X^{(\ell)}$ that serve the sources $Q_i \subseteq P_i$  for each
    $i=0,\dots,\ell$. Additionally, let $I_i \subseteq I$ denote the sinks of $W_i$. 

  By construction, $|I_{\leq i}| = \DP{M}{P_{\leq i}}{I_{\leq
      i}}$. Further, if
  $\DP{M}{P_{\leq i}}{I_{\leq i}} < \DP{M}{P_{\leq i}}{I}$ then this
  implies that $X^{(i)}$ is not a maximum flow in
  $\FLOW{M}{P_{\leq i}}{I}$, and therefore can be augmented by one,
  contradicting the definition of $X^{(i)}$.

  The flow paths $W_0, W_1, \ldots, W_\ell$ collectively form the flow
  solution $X^{(\ell)}$ which is an optimal solution to
  $\FLOW{M}{P_{\leq \ell}}{I}$. Thus, $\{I_0,\dots,I_\ell\}$ forms a
  canonical decomposition (with the corresponding canonical solution $W_0, \ldots, W_\ell$). It is also clear that the process outlined
  above to realize this decomposition runs in polynomial time as the
  encountered flow networks have unit capacities.
%
\end{proof}

Next, we prove that Step 3.b can be executed in polynomial time:
\begin{lemma}
	\label{lem:disjointpaths}
	Consider a state $(M, \ell, \cL, I)$ of the algorithm  and a canonical decomposition $\{I_0,
	I_1, \ldots, I_\ell\}$ of $I$ together with the canonical solution  $W$. For $i=0, \ldots, \ell,$ let $W_i$ be the $|I_i|$ paths that go
	from the players in $Q_i \subseteq P_{i}$ to sinks in $I_i$. Then, for $i=0, 1, \ldots,
	\ell-1$, we can find in polynomial time an optimal solution $X$ to
	$\FLOW{M}{P_{\leq i}}{A_{\leq i+1}\cup I_{\leq i}}$ that is also an optimal solution to
	$\FLOW{M}{P_{\leq i}}{A_{\leq i+1}\cup I}$  whose paths are disjoint from the paths in
	$W_{i+1}$ and additionally uses all the sinks in $I_{\leq i}$. 
\end{lemma}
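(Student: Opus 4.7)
The plan is to build $X$ incrementally, starting from the canonical sub-solution $W_{\leq i}:=W_0\cup\cdots\cup W_i$. Observe that $W_{\leq i}$ is already a collection of $|I_{\leq i}|$ vertex-disjoint paths from $Q_{\leq i}\subseteq P_{\leq i}$ to every sink in $I_{\leq i}$, and because $W$ is a single vertex-disjoint collection, $W_{\leq i}$ is automatically vertex-disjoint from $W_{i+1}$. From this starting flow I would repeatedly search for augmenting paths in the residual network, restricted to the subgraph $H_M\setminus V(W_{i+1})$, that connect an unused source in $P_{\leq i}$ to an unused sink in $A_{\leq i+1}$, while forbidding any residual edges that would unsaturate a sink in $I_{\leq i}$. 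Let $X$ be the flow after exhaustive augmentation. By construction, $X$ is vertex-disjoint from $W_{i+1}$, saturates every sink in $I_{\leq i}$, and has all of its paths end in $A_{\leq i+1}\cup I_{\leq i}$. Since everything is done via standard max-flow augmentation, this step runs in polynomial time.

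The main technical work is to verify $|X|=k_2:=\DP{M}{P_{\leq i}}{A_{\leq i+1}\cup I}$; the inequality $|X|\le k_2$ is immediate. For the reverse direction I would start from an arbitrary optimal flow $F$ of value $k_2$ for the unrestricted problem and modify it in two stages, appealing to the defining properties of the canonical decomposition. The first stage removes conflicts with $W_{i+1}$: the identity $|I_{\leq i+1}|=\DP{M}{P_{\leq i+1}}{I}$ from \prettyref{def:canonical} guarantees that the maximum $P_{\leq i+1}$-to-$I$ flow admits $W_{i+1}$ as its $P_{i+1}$-component, so whenever a path of $F$ shares a vertex with a path of $W_{i+1}$ the two paths can be uncrossed by a standard suffix swap, and a possible relabelling of which $I_{i+1}$ sinks belong to which component preserves both the value of $F$ and the structure of $W_{i+1}$. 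The second stage redirects any path of $F$ that still ends in $I\setminus I_{\leq i}$ to end in $A_{\leq i+1}\cup I_{\leq i}$ instead; this is possible because $|I_{\leq i}|=\DP{M}{P_{\leq i}}{I_{\leq i}}=\DP{M}{P_{\leq i}}{I}$ implies that the entire maximum $P_{\leq i}$-to-$I$ flow can be absorbed by $I_{\leq i}$, and the residual graph of $F$ provides the needed rerouting.

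The hard part is the first uncrossing step: one must show that the swaps with $W_{i+1}$ can always be executed so that $F$ remains vertex-disjoint and its paths still start in $P_{\leq i}$ rather than being displaced to $P_{i+1}$ sources. This is precisely where the lexicographic maximality built into the canonical decomposition is used: if a swap were to force a source in $P_{i+1}$ while losing one in $P_{\leq i}$, one could assemble a larger $P_{\leq i+1}$-to-$I$ flow than $|I_{\leq i+1}|$, contradicting \prettyref{lem:canonicaldecomposition}. Once this uncrossing is carried out and the second stage redirects endpoints to $A_{\leq i+1}\cup I_{\leq i}$, the resulting flow is exactly of the type produced by the construction above, yielding $|X|\ge k_2$. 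The polynomial running time claimed in the lemma then follows because all ingredients---augmenting, uncrossing, and rerouting---are standard polynomial-time flow operations.
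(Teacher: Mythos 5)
Your construction starts in the same way as the paper's: initialize with the paths $W_{\leq i}$ and augment toward $A_{\leq i+1}$. However, you then diverge by \emph{restricting} the augmentation to the subgraph $H_M\setminus V(W_{i+1})$ and proving losslessness of this restriction by uncrossing an arbitrary optimal flow $F$. The paper avoids the uncrossing entirely: it augments in the \emph{full} graph and argues by contradiction that no augmenting path can ever touch a vertex of $W_{i+1}$, and that no augmenting path to a sink in $I \setminus I_{\leq i}$ can exist at the end. Both contradictions are one-step consequences of the canonical-decomposition identity $\DP{M}{P_{\leq i}}{I_{\leq i}}=\DP{M}{P_{\leq i}}{I}$: an augmenting path touching $W_{i+1}$ could be rerouted along $W_{i+1}$'s suffix to a sink in $I_{i+1}$, and either bad event would yield a flow of value $|I_{\leq i}|+1$ from $P_{\leq i}$ into $I$.

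The genuine gap in your proposal is the uncrossing step, which you yourself flag as ``the hard part.'' The suffix swap you describe is not a standard operation here: after swapping suffixes at a shared vertex $v$, the $F$-path now ends at $W_{i+1}$'s sink in $I_{i+1}$ (traversing vertices of $W_{i+1}$), while the $W_{i+1}$-path now ends wherever $F$'s path ended (possibly an $A_{\leq i+1}$ sink, so it is no longer a legal $W_{i+1}$ path). Your ``relabelling of which $I_{i+1}$ sinks belong to which component'' does not obviously repair both simultaneously, and the justification you offer for why the swap cannot displace a $P_{\leq i}$ source---assembling a ``larger $P_{\leq i+1}$-to-$I$ flow''---does not follow, because the swap conserves the total number of paths. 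Filling this hole would essentially amount to re-deriving the paper's contradiction argument, so the restricted-graph detour buys nothing and costs a delicate combinatorial lemma. I would recommend replacing stages one and two of your verification with the paper's direct argument: assume an augmenting path in the unrestricted graph hits $W_{i+1}$ or terminates in $I\setminus I_{\leq i}$, splice it with the appropriate $W_{i+1}$ suffix, and contradict $\DP{M}{P_{\leq i}}{I}=|I_{\leq i}|$.
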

\begin{proof}
	Consider a fixed $i$. We shall form an optimal solution $X$ to $\FLOW{M}{P_{\leq i}}{A_{\leq
			i+1}\cup I_{\leq i}}$ that is also an optimal solution to $\FLOW{M}{P_{\leq i}}{A_{\leq
			i+1}\cup I}$ and its paths are disjoint from the paths in $W_{i+1}$ and uses all the sinks in
	$I_{\leq i}$. The initial
	solution will be the set of unit flow paths $W_{\leq i}$ from the canonical solution $W$ which has
	cardinality $|I_{\leq i}|$. We now augment this solution using augmenting paths to the set of
	sinks $A_{\leq i+1}$. Note that throughout this execution each vertex in $I_{\leq i}$ will be used
	as a sink by some path and therefore $X$ will use all these sinks. Further, the procedure to
	calculate $X$ clearly runs in polynomial time. We shall
	now verify the remaining properties of $X$. First, suppose towards contradiction that some iteration used an augmenting path $P$ intersecting
	a path in $W_{i+1}$. However, this would imply that there exists an augmenting path that uses a
	sink in $I_{i+1}$. We could then increase the set of disjoint paths from players in $P_{\leq i}$ to
	sinks in $I$ to be greater than $I_{\leq i}$ which contradicts the property $\DP{M}{P_{\leq
			i}}{I_{\leq i}} = \DP{M}{P_{\leq i}}{I}$ of the canonical decomposition.  Similarly, suppose
	$X$ is not an optimal solution to $\FLOW{M}{P_{\leq i}}{A_{\leq
			i+1}\cup I}$. Then there exists an augmenting path to an edge in $I\setminus I_{\leq i}$
	which again contradicts the property $\DP{M}{P_{\leq
			i}}{I_{\leq i}} = \DP{M}{P_{\leq i}}{I}$ of the canonical decomposition.
\end{proof}

Finally, since during a collapse operation we can collapse at most $|\cP|$ layers, it follows that any iteration of Algorithm \ref{alg2} terminates in polynomial time.

\subsection{Additional Invariants of Combinatorial Algorithm}\label{subsec:invariants}
In Section \ref{subsec:combalgo-description}, we listed three invariants
Algorithm \ref{alg2} preserves that are similar to the simpler algorithm. We
argued why they hold, and how these invariants imply that the output of our
algorithm is an extended partial matching. In this section, we list two new
invariants that will facilitate our polynomial running time proof.

\begin{lemma}
  \label{lem:ainv}
  At
the beginning of each iteration:
\begin{enumerate}[label=(\alph*)]
    \item \label{ainv:1} $\DP{M}{P_{\leq \ell}}{I} = |I|$.
    \item \label{ainv:2} $\DP{M}{P_{\leq i-1}}{A_{\leq i}\cup I} \geq d_i$ for each $i=1,\dots,\ell.$
  \end{enumerate}
\end{lemma}
\begin{proof}
  We prove the lemma by induction on the number of times the iterative step has been executed.
We observe that both invariants trivially hold before the first execution of the iterative step. Assume that they are true before the $r$-th execution of the iterative step. We now verify them before the $r+1$-th iterative step. We actually prove the stronger statement that they hold after the build phase and after each iteration of the collapse phase. 

\textbf{\ref{ainv:1} and~\ref{ainv:2} hold after the build phase.} Let $L_{\ell+1}$ denote the layer that was constructed during the
build phase. We start by verifying~\ref{ainv:1}. If no edge is added to $I$ during this phase then $|I| \geq \DP{M}{P_{\leq \ell+1}}{I} \geq \DP{M}{P_{\leq \ell}}{I} = |I|.$ Suppose that $a_1,\dots,a_k$ were the edges added to the set $I$ in that order. When edge $a_i$ was added to the set $I$,  from the definition of Step 2 of Algorithm \ref{alg2} we have that 
\[
  \DP{M}{P_{\leq \ell}}{A_{\leq \ell}\cup I \cup \{a_1,\dots,a_{i-1}\} \cup \{a_i\}} > \DP{M}{P_{\leq \ell}}{A_{\leq \ell}\cup I \cup \{a_1,\dots,a_{i-1}\}},
\]
 which then implies that 
 \[
   \DP{M}{P_{\leq \ell}}{I \cup \{a_1,\dots,a_{i-1}\} \cup \{a_i\}} > \DP{M}{P_{\leq \ell}}{I \cup \{a_1,\dots,a_{i-1}\}}
 \] 
 To see this implication, observe that the first inequality implies that, for any flow in $\FLOW{M}{P_{\leq \ell}}{A_{\leq \ell}\cup I \cup \{a_1,\dots,a_{i-1}\}}$ (and hence, for any flow in $\FLOW{M}{P_{\leq \ell}}{I \cup \{a_1,\dots,a_{i-1}\}}$), there exists an augmenting path towards sink $a_i$.
  Along with the induction hypothesis, these inequalities imply that \[\DP{M}{P_{\leq \ell+1}}{I\cup \{a_1,\dots,a_k\}} \geq \DP{M}{P_{\leq \ell}}{I\cup \{a_1,\dots,a_k\}} = |I| + k = |I\cup \{a_1,\dots,a_k\}|.\]

  For~\ref{ainv:2}, the inequality for $i = \ell+1$ holds by the definition of $d_{\ell+1}$ during this phase. The remaining inequalities follow from the induction hypothesis since none of $M, P_{\leq \ell}$ and $A_{\leq \ell}$ were altered during this phase and no elements from $I$ were discarded.

\textbf{\ref{ainv:1} and~\ref{ainv:2} hold after each iteration of the collapse phase.} If no layer is collapsed (i.e., there is no $I_t$ satisfying the condition of the while-loop) then there is nothing to prove. Now  let $t$ denote the index  of the layer that is
collapsed. Let $(M, \ell, \{L_0,\ldots,L_{t'}\}, I)$ denote the
state of the algorithm before collapsing layer $t$  that satisfy~\ref{ainv:1} and~\ref{ainv:2} ($t' \geq t$ and $t' = \ell+1$ if this is the first iteration of Step~3). Let $I'$ denote $I_0\cup \dots \cup I_{t-1}\cup \{a_1,\dots,a_k\}$
where $a_1,\dots,a_k$ are the edges added to $I$ in Step 3.d
of the collapse phase and let $M'$ denote the partial
matching after Step 3.c of the collapse phase. We have that~\ref{ainv:1}, $\DP{M'}{P_{\leq t}}{I'} = |I'|$, now follows
from Lemma~\ref{lem:disjointpaths}. Indeed, the solution $X$ used 
all the sinks in $I_{0} \cup \ldots I_{t-1} \cup \{a_1,\dots,a_k\}$ which equals $I'$; and these
paths form a solution to $\FLOW{M'}{P_{\leq t}}{I'}$ as they are disjoint from the paths in $W_t$.
Notice that we do not use the induction hypothesis in this case, i.e., that $(M, \ell, \{L_0,\ldots,L_{t'}\}, I)$ satisfied~\ref{ainv:1} and~\ref{ainv:2}.  

For~\ref{ainv:2}, we need to verify inequalities for $i=1,\dots,t$. When $i < t$, none
of the sets $A_i$ were altered during this iterative step.  Further, although $M$ and $I$ changes during the collapse phase, by Lemma~\ref{lem:disjointpaths}
and the definition of Step 3
this change cannot reduce the number of disjoint paths from $P_{\leq i-1}$ to $A_{\leq i}\cup
I$ and therefore~\ref{ainv:2} remains true  by the induction hypothesis. Indeed, the selection of $X$ in Step~3.b is done so as to make sure that the update of the matching along the alternating paths in $W_t$ does not interfere with an optimal solution to the flow network with sources $P_{\leq i-1}$ and sinks $A_{\leq i} \cup I$. For $i = t$, the claim again follows since the number of disjoint paths from $P_{\leq t-1}$ to $A_{\leq t}\cup I$ cannot 
reduce because of Step 3.d in the algorithm that maintains $X$ as a feasible
solution by the same arguments as for~\ref{ainv:1}.
\end{proof}

\subsection{Bound on the Total Number of Iterations}\label{subsec:bound-iter}

In this final section, we will use the above invariants to show that our augmenting algorithm performs a polynomial number of iterations,
assuming $CLP(\tau)$ is feasible. We start with two lemmas that show that $d_i$ cannot be too small. The first holds in general and the second holds if $CLP(\tau)$ is feasible.
\begin{lemma}\label{lem:dproperty}
	At the beginning of each iteration, we have $d_i \geq |A_{\leq i}|$ for every $i=0,\dots,\ell$.
\end{lemma}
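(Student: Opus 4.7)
The plan is to prove the invariant $d_i \geq |A_{\leq i}|$ by induction on the sequence of operations performed by the algorithm. The base case is trivial, since initially $\ell = 0$, $d_0 = 0$, and $A_{\leq 0} = \emptyset$. For the inductive step, I will assume the invariant holds at the beginning of some iterative step and analyze the build phase and collapse phase separately, showing that after each, the invariant holds for every layer still present. Conveniently, I will have access to the previously proved invariants of \prettyref{lem:invariants} which are stated for the state at the beginning of an iterative step, and these let me relate $d_\ell$ to the current disjoint-path value of the flow network.

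For the build phase, I would focus on the newly created layer $L_{\ell+1}$, since none of the existing $d_j$ or $A_{\leq j}$ for $j \leq \ell$ are modified, so the inductive hypothesis carries over. By construction, $d_{\ell+1} = \DP{M}{P_{\leq \ell}}{A_{\leq \ell+1} \cup I}$ at the end of the build phase. The key observation is that immediately before the build phase begins, invariant (2) of \prettyref{lem:invariants} gives $\DP{M}{P_{\leq \ell - 1}}{A_{\leq \ell}\cup I} \geq d_\ell$, and by the inductive hypothesis $d_\ell \geq |A_{\leq \ell}|$; since adding sources can only increase the maximum disjoint-path value, $\DP{M}{P_{\leq \ell}}{A_{\leq \ell}\cup I} \geq |A_{\leq \ell}|$. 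Then, by the very defining property of candidate edges (property 2 in the build phase), every candidate edge added -- whether it goes into $A_{\ell+1}$ or into $I$ -- strictly increases $\DP{M}{P_{\leq \ell}}{A_{\leq \ell+1}\cup I}$ by one. Summing over all $|A_{\ell+1}|$ newly created addable candidate edges plus the (nonnegative) number of immediately-addable additions to $I$, I obtain $d_{\ell+1} \geq |A_{\leq \ell}| + |A_{\ell+1}| = |A_{\leq \ell+1}|$, as desired.

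For the collapse phase, consider a single collapse of layer $L_t$ (the argument then iterates). Layers with index strictly greater than $t$ are discarded and no longer need to satisfy the invariant. Layers with index strictly less than $t$ are not touched at all: neither their $d_j$ nor their $A_j$ sets change, so the invariant is preserved by the inductive hypothesis. The only subtle case is layer $L_t$ itself: the collapse may remove newly-formed immediately-addable edges from $A_t$ (moving them into $I$), so $|A_{\leq t}|$ can only \emph{decrease}, while $d_t$ is frozen at the value it received when $L_t$ was created. Hence $d_t \geq |A_{\leq t}|$ continues to hold. Iterating this argument through all collapses in the phase completes the inductive step.

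I do not expect a real obstacle here; the statement is essentially a bookkeeping invariant following directly from the definition of a candidate edge and the fact that a collapse only shrinks $A_t$ without modifying $d_t$. The one place to be careful is to justify the initial bound $\DP{M}{P_{\leq \ell}}{A_{\leq \ell}\cup I} \geq |A_{\leq \ell}|$ at the start of the build phase, which is precisely why the two invariants of \prettyref{lem:invariants} were established first and then used here in tandem with the inductive hypothesis.
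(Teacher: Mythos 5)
Your proof is correct and follows essentially the same route as the paper's: induction over iterative steps, using invariant (2) of Lemma~\ref{lem:invariants} together with the inductive hypothesis to get $\DP{M}{P_{\leq \ell-1}}{A_{\leq \ell}\cup I}\geq |A_{\leq \ell}|$, then counting that each of the $|A_{\ell+1}|$ candidate edges added to $A_{\ell+1}$ raises the disjoint-path value by one, and finally observing that collapses never change any $d_i$ and only shrink the sets $A_i$.
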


\begin{proof}
	We prove this by induction on the variable $r \geq 0$ that counts the number of times the
	iterative step has been executed. For $r = 0$ the statement is trivial. Suppose that it is true
	for $r \geq 0$.   We shall show that it  holds before the $r+1$-th iterative step. If the iteration	collapses a layer, then no new layer was added, and since $d_i$'s remain unchanged and $A_{\leq i}$ may only decrease, the statement is true in this case. 
	
	Now, suppose that no layer was collapsed in this
	iteration and let $L_{\ell+1}=(A_{\ell+1},B_{\ell+1},d_{\ell+1})$ be the newly constructed
  layer in this phase. Again, we have $d_i \geq |A_i|$ for $i= 0, \dots, \ell$ since none of these quantities are changed by the build phase. Let us now verify that $d_{\ell+1} \geq A_{\ell+1}$. Let  $A_{\ell+1} = \{a_1,\dots,a_k\}$ denote the set of 
	edges added to $A_{\ell+1}$ indexed by the order in which they were added. When edge
	$a_i$ was added to the set $A_{\ell+1}$, according to Step 2 of Algorithm \ref{alg2}, we have that \[\DP{M}{P_{\leq \ell}}{A_{\leq \ell}\cup I\cup \{a_1,\dots,a_{i-1}\} \cup \{a_i\}} > \DP{M}{P_{\leq \ell}}{A_{\leq \ell}\cup I \cup
		\{a_1,\dots,a_{i-1}\}}.\] 
    Using~\ref{ainv:2} of Lemma~\ref{lem:ainv}  and the induction hypothesis,
	\[ \DP{M}{P_{\leq \ell-1}}{A_{\leq \ell}\cup I} \geq d_\ell \geq |A_{\leq \ell}|. \]
	Using the previous inequalities,
	\[ d_{\ell+1} = \DP{M}{P_{\leq \ell}}{A_{\leq \ell+1}\cup I} \geq
	|A_{\leq \ell}| + k \geq |A_{\leq \ell+1}|.\] 
\end{proof}

\begin{lemma}
	\label{lem:manyaddableedges}
	Assuming $CLP(\tau)$ is feasible, at the beginning of each iteration
	\[
	\DP{M}{P_{\leq i-1}}{A_{\leq i}\cup I} \geq d_{i} \geq  \gamma |P_{\leq i-1}| , \; \mbox{where } \; \gamma = \frac13(\sqrt{10}-2),
	\]
	for every $i = 1, \dots,\ell.$
\end{lemma}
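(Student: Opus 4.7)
The first inequality $\DP{M}{P_{\leq i-1}}{A_{\leq i}\cup I} \geq d_i$ is just the second invariant of Lemma~\ref{lem:invariants}, so the substantive content is establishing $d_i \geq \gamma|P_{\leq i-1}|$. The plan is to argue this by contradiction: assume $d_i < \gamma|P_{\leq i-1}|$, which combined with Lemma~\ref{lem:dproperty} also yields $|A_{\leq i}| \leq d_i < \gamma|P_{\leq i-1}|$. Freeze the state of the algorithm immediately after layer $L_i$ has been built, fix a maximum disjoint-paths solution $W$ realizing $d_i$, and let $S \subseteq P_{\leq i-1}$ be the sources not used by $W$, so that $|S| > (1-\gamma)|P_{\leq i-1}|$. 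Let $X \subseteq \cP$ and $Y \subseteq \cR_f$ denote, respectively, the players and fat resources reachable from $S$ in the vertex-disjoint residual graph of $W$; in particular $S \subseteq X$.

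The termination of the build phase yields two key structural properties. \emph{Property (i)}: for every $p \in X$, the total value of thin resources that $p$ is interested in and that do not already appear in the tree is strictly less than $\tau/\alpha$. Otherwise a minimal such subcollection $R$ would form a thin $\alpha$-edge $(p,R)$ with $R$ disjoint from $A_{\leq i} \cup B_{\leq i-1} \cup I$, and since $p$ is residual-reachable from $S$, adding $(p,R)$ as a new sink would strictly increase $\DP{M}{P_{\leq i-1}}{A_{\leq i}\cup I}$, so $(p,R)$ would qualify as a candidate edge, contradicting the termination of the build phase. \emph{Property (ii)}: for every $p \in X$ and every fat resource $f$ that $p$ is interested in, $f \in Y$; this follows by a short case analysis on whether $(p,\{f\}) \in M$ and on whether the corresponding arc of $H_M$ is used by the flow $W$, using the definition of the vertex-disjoint residual graph.

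Properties (i) and (ii) combined with the feasibility of $CLP(\tau)$ then produce the contradiction. Fix any feasible primal $x$, which we may normalize to $\sum_C x_{p,C} = 1$ for every $p$, and sum the LP value over $p \in X$: $|X|\tau \leq \sum_{p\in X}\sum_C x_{p,C}v_p(C) = \sum_j v_j \xi_j$, where $\xi_j := \sum_{p\in X,\,C\ni j} x_{p,C} \leq 1$ by the resource constraint. Split the right-hand side by resource type. By property (i) the thin-non-tree contribution is at most $|X|\tau/\alpha$. By minimality of thin $\alpha$- and $\beta$-edges, the thin-tree contribution is at most a quantity of the form $(|A_{\leq i}|+|I|)(\tau/\alpha+\tau/\beta) + 2|P_{\leq i-1}|\tau/\beta$, where $|A_{\leq i}|$ and $|I|$ are bounded using Lemma~\ref{lem:dproperty} and the invariants. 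By property (ii) the fat contribution is supported in $Y$, which is controlled via the max-fat-matching invariant of $M$ together with the observation that $S$ lies entirely in the fat-unmatched part of $\cP$ (since $P_{\leq i-1}$ comprises $p_0$ together with players appearing in thin blocking edges). Plugging in $|X| \geq (1-\gamma)|P_{\leq i-1}|$, the above upper bounds, and the specific parameter values $\alpha=2$, $\beta=2(3+\sqrt{10})+\epsilon$, and $\gamma=\frac{1}{3}(\sqrt{10}-2)$, a routine algebraic manipulation shows the resulting inequality is strictly violated, giving the desired contradiction.

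The main obstacle in executing this plan is taming the fat-resource contribution $\sum_{f\in Y}v_f\xi_f$: since individual fat values may be arbitrarily large, a bound on $|Y|$ alone does not suffice, and one must combine the max-fat-matching invariant on $M$ with the fat-unmatchedness of $S$ to tie both $|Y|$ and $\sum_{f\in Y}v_f$ back to $d_i$ and the matched players in $X\setminus S$. The precise choice $\gamma = \frac{1}{3}(\sqrt{10}-2)$ is what balances the thin-tree and fat contributions against the $|X|\tau(1 - 1/\alpha)$ slack coming from property (i), thereby fixing the exact shape of the approximation ratio $\beta = 2(3+\sqrt{10})+\epsilon$.
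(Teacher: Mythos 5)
Your overall strategy parallels the paper's: argue by contradiction that $d_{\ell+1}$ is small, use the termination of the build phase to extract a structural property of the players on the "source side" of the disjoint-path network (your Property (i) is exactly the statement that the reachable set has no remaining sinks, which the paper captures via Menger's theorem and a min vertex cut $K\subseteq\cP$), and then use $CLP(\tau)$ to derive a contradiction. The paper certifies infeasibility by constructing a feasible \emph{dual} solution with positive objective and invoking weak duality; you instead take a feasible \emph{primal} solution and count. These would be two routes to the same place if the primal counting were set up correctly, but as written your counting does not close.

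The gap is precisely in the quantity you sum. You start from $|X|\tau\le\sum_{p\in X}\sum_C x_{p,C}\,v_p(C)=\sum_j v_j\xi_j$ and then want to \emph{upper}-bound $\sum_j v_j\xi_j$ by splitting over resource types. For thin resources this works, but for fat resources the quantity $\sum_{f\in Y}v_f\xi_f$ simply cannot be upper-bounded: the constraints give only $\xi_f\le 1$, while $v_f$ is completely unbounded (take a single fat resource with $v_f=10^{100}\tau$ that some player in $X$ is interested in, and put all of that player's LP mass on the configuration $\{f\}$). There is no way to "tie $\sum_{f\in Y}v_f$ back to $d_i$": that sum is an instance-dependent quantity that does not scale with $|P_{\le i-1}|$ at all. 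So the inequality you aim to violate has an RHS that can be made arbitrarily large, and no contradiction follows. The paper's dual assignment sidesteps this entirely by giving every fat resource in $H'$ a \emph{constant} dual value $z_f=1-1/\alpha$, independent of $v_f$; the dual constraint for a fat configuration is then satisfied trivially and the objective calculation never touches the magnitudes $v_f$.

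The correct primal-side analogue is to count LP \emph{mass}, not LP value: write $|X|\le\sum_{p\in X}\sum_C x_{p,C}$, split configurations into thin and fat, bound the fat-configuration mass by $|Y|$ (each fat configuration for a player in $X$ uses some $f\in Y$, and each such $f$ carries total mass at most one), and bound the thin-configuration mass by $v(\text{thin tree})/\bigl(\tau(1-1/\alpha)\bigr)$ using Property (i), since every thin configuration of value $\ge\tau$ places value $>\tau(1-1/\alpha)$ inside the tree. Combined with the injection from fat resources in the reachable set into fat-matched players (plus at most $|K|$ exceptions), this recovers exactly the paper's inequality. Two smaller points: your Property (ii) is phrased in terms of a "vertex-disjoint residual graph," which hides a genuine subtlety in the node-splitting construction (the paper's route through Menger's theorem and replacing any fat resource in the cut by its unique out-neighbor is cleaner and should be spelled out rather than waved at); and the thin-tree bound should have a $(1/\beta)|B_{\le\ell}|$ term, not $2|P_{\le i-1}|\tau/\beta$, following the minimality argument in the paper.
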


\begin{remark}
	\label{rem:abort}
	The above condition is the only one that needs to be satisfied for the algorithm to run in
	polynomial time. Therefore, in a binary search, the algorithm can abort if the above condition is
	violated at some time, since that violation would imply that the Configuration-LP is infeasible; otherwise it will terminate in polynomial time.
\end{remark}

\begin{proof}
  We will prove that $d_i \geq \gamma|P_{\leq i-1}|$ for $i=1,\dots,\ell$ as Lemma~\ref{lem:ainv}\ref{ainv:2} then implies the claim. Notice that $d_i$ is defined only at the time when
	layer $L_i$ is created and not altered thereafter. So it suffices to verify that: Assuming $d_i \geq
	\gamma|P_{\leq i-1}|$ for $i=1,\dots,\ell$, then for the newly constructed layer $L_{\ell+1}$, $d_{\ell +
		1} \geq \gamma|P_{\leq \ell}|$ also.
	
	Suppose towards contradiction that $L_{\ell+1}$ is a newly constructed layer (and that no layer was collapsed), such that
	\[
	d_{\ell + 1} = \DP{M}{P_{\leq \ell}}{A_{\leq \ell+1}\cup I} < \gamma |P_{\leq \ell}|.
	\]
	
  Then, since no layer was collapsed at Step 3 of Algorithm \ref{alg2}, we have that  $|I_i| < \mu |P_i|$ for $i=0, \ldots, \ell$, where $\{I_0,\ldots,I_\ell\}$ is the  canonical decomposition of $I$ considered by the algorithm. {Together} with Lemma~\ref{lem:ainv}\ref{ainv:1}, this implies
	\begin{align*}
	|I| = \DP{M}{P_{\leq \ell}}{I} &< \mu |P_{\leq \ell}|.
	\end{align*}
	Moreover, by Lemma~\ref{lem:dproperty}  we have
	\begin{align*}
	|A_{\leq \ell+1}| \leq d_{\ell+1} =  \DP{M}{P_{\leq \ell}}{A_{\leq \ell+1}\cup I} &< \gamma|P_{\leq \ell}|.
	\end{align*}
	Hence, we have that $|A_{\leq \ell+1}\cup I | < (\mu+\gamma)|P_{\leq \ell}|.$ 
	
	The rest of the proof
	is devoted to showing that this causes the dual of the $CLP(\tau)$ to become unbounded which leads
	to the required contradiction by weak duality. That is, we can then conclude that if $CLP(\tau)$ is
	feasible then $d_{\ell+1} \geq \gamma |P_{\leq \ell}|$.
	
	Consider the flow network $\FLOW{M}{P_{\leq \ell}}{A_{\leq \ell+1}\cup I \cup Z}$ with $P_{\leq \ell}$ as the set of
	sources and $A_{\leq \ell+1}\cup I\cup Z$ as the collection of sinks where,
	 \[Z := \{p \in	\mathcal{P} \; | \; \text{$\exists R \subseteq
             \mathcal{R} \; : \; R \cap \mathcal{R}(A_{\leq
	 		\ell+1}\cup I \cup B_{\leq \ell}) = \emptyset$
                      and $v_p(R)\geq \tau/\alpha$} \}.\] 
	 Since, during the construction of layer $\ell+1$ we could not insert any more edges into $A_{\ell+1}$ and $I$, the maximum number of vertex disjoint paths from $P_{\leq \ell}$ to the sinks equals $\DP{M}{P_{\leq \ell}}{A_{\leq \ell+1} \cup I}$ which, by assumption, is less than $\gamma |P_{\leq
		\ell}|$. Therefore, by Menger's theorem there exists a set $K \subseteq V$ of vertices of
	cardinality less than $\gamma |P_{\leq \ell}|$ such that, if we remove $K$ from $H_M$, the sources $P_{\leq \ell}	\setminus K$ and the sinks are disconnected, i.e., no sink is reachable from any source in $P_{\leq
		\ell} \setminus K$. We now claim that we can always choose such a vertex cut so that it is a
	subset of the players.
	
	\begin{claim}
		There exists a vertex cut $K \subseteq \cP$ separating $P_{\leq \ell}\setminus K$ from the sinks of cardinality less than $\gamma|P_{\leq {\ell}}|$.
	\end{claim}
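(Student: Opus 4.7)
My plan is to start from an arbitrary minimum vertex cut given by Menger's theorem and modify it in place, replacing fat-resource vertices by player vertices without increasing its cardinality. By Menger's theorem, since the maximum number of vertex-disjoint paths from the sources $P_{\leq \ell}$ to the sinks $A_{\leq \ell+1} \cup I \cup Z$ in $H_M$ is strictly less than $\gamma |P_{\leq \ell}|$, there is a vertex cut $K' \subseteq V = \cP \cup \cR_f$ of cardinality strictly less than $\gamma |P_{\leq \ell}|$ separating the sources from the sinks.

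The key structural observation I would use is that $H_M$ is bipartite between $\cP$ and $\cR_f$, and every fat resource $f \in \cR_f$ has at most one outgoing arc: namely, if $f$ is matched in $M$ to some player $p$ the only arc leaving $f$ goes to $p$, and if $f$ is unmatched then $f$ has no outgoing arc at all. Since all sources and all sinks lie in $\cP$, a fat resource can only appear as an internal vertex of an $s$-$t$ path, so any such path visiting $f$ must leave $f$ through its unique matched out-neighbor (or cannot exist if $f$ is unmatched).

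Using this, for every $f \in K' \cap \cR_f$ I perform the following replacement. If $f$ is unmatched in $M$, I simply delete $f$ from $K'$: no $s$-$t$ path could ever have used $f$ anyway, so the cut property is preserved and $|K'|$ strictly decreases. If $f$ is matched to a player $p$, I set $K' \leftarrow (K' \setminus \{f\}) \cup \{p\}$, which does not increase the cardinality (and strictly decreases it if $p$ was already in $K'$). To verify the cut property, suppose for contradiction some $s$-$t$ path $P$ survives in $H_M$ after removing the modified set. Since $p$ is now in the cut, $P$ does not visit $p$; but then $P$ cannot visit $f$ either, since the only way to leave $f$ is via $p$. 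Hence $P$ avoids $f$ as well, so $P$ would already have been a surviving $s$-$t$ path after removing the original $K'$, contradicting that $K'$ was a cut. The cases where the matched partner $p$ happens to be a source in $P_{\leq \ell}$ or a sink in $A_{\leq \ell+1} \cup I \cup Z$ cause no trouble: placing $p$ into $K$ simply removes it from the set of sources/sinks considered in Menger's formulation used in the surrounding proof.

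Iterating this replacement over all fat resources originally in $K'$ produces a cut $K \subseteq \cP$ with $|K| \leq |K'| < \gamma |P_{\leq \ell}|$, as claimed. I do not foresee any serious obstacle; the only point that requires a little care is justifying that the replacement preserves the cut property in the one-step argument above, which follows cleanly from the fact that each fat vertex has out-degree at most one in $H_M$.
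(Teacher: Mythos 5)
Your proof is correct and reaches the same conclusion by the same replacement idea, but you dispense with one of the paper's ingredients. The paper argues that every fat resource $f$ reachable from the sources has out-degree \emph{exactly} one in $H_M$: at most one because $M$ is a matching, and at least one because otherwise an alternating path from an unmatched player in $P_{\leq\ell}$ to $f$ would augment the fat matching, contradicting the invariant that $M$ uses the maximum number of fat edges. It then replaces each fat resource in the cut by its unique out-neighbor, implicitly relying on every vertex of a minimum cut being reachable from a source. You instead observe that out-degree \emph{at most} one suffices: an unmatched fat resource (out-degree zero) can simply be dropped from the cut, since sources and sinks are players and $f$ can never be left once entered, so no source--sink path uses it. This makes your argument self-contained within the claim — it does not invoke the fat-matching maximality invariant, nor the (unstated in the paper) fact that minimum-cut vertices lie on tight source--sink paths — and it works for an arbitrary vertex cut, not just a minimum one. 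Your explicit verification that a surviving path avoiding the matched player $p$ must also avoid $f$ is precisely the step the paper leaves implicit. Both proofs are valid; yours is slightly more elementary and robust.
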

	
	\begin{proof}
		Take any minimum cardinality vertex cut $K$ separating $P_{\leq \ell}\setminus K$ from the sinks. We already saw that $|K| < \gamma|P_{\leq \ell}|.$ Observe that every fat resource that is reachable
		from $P_{\leq \ell} \setminus K$ must have outdegree exactly one in $H_M$. It cannot be more than one since
		$M$ is a collection of disjoint edges, and it cannot be zero since we could then increase
		the number of fat edges in  $M$ which contradicts that we started with a partial
		matching that maximized the number of fat edges. Therefore in the vertex
		cut $K$, if there are vertices corresponding to fat resources, we can replace each fat
		resource with the unique player to which it has an outgoing arc to, to obtain  another vertex cut also of the same cardinality that contains only vertices corresponding to players.
	\end{proof}
	
	Now call the induced subgraph of $H_M-K$ on the vertices that are reachable from $P_{\leq
		\ell} \setminus K$ as $H'$. Note that by the definition of $K$, $H'$ will not contain any sinks. Using $H'$ we define the assignment of values to the dual variables in the dual of $CLP(\tau)$ as
	follows:

	\[
	\begin{array}{lcl}
	y_i &:=&
	\begin{cases}
	\left(1-1/\alpha \right) &\mbox{if } \text{player $i$ is in $H'$,} \\
	0 &\mbox{otherwise,}
	\end{cases} \\ \\
	z_j &:=&
	\begin{cases}
	v_j/\tau &\mbox{if } \text{$j$ is a thin resource that appears in $A_{\leq \ell+1}\cup I \cup B_{\leq
			\ell}$, } \\
	\left(1-1/\alpha \right) &\mbox{if } \text{$j$ is a fat resource in $H'$,} \\
	0 &\mbox{otherwise. }
	\end{cases}
	
	\end{array}
	\]
	
	We first verify that the above assignment is feasible. Since all the dual variables are non-negative we only need to verify that $y_i \leq \sum_{j \in C} z_j$ for every $i \in \cP$ and $C \in \cC(i, \tau).$ Consider a player $i$ that is given a positive $y_i$ value by the above assignment. Let $C \in \cC(i, \tau)$ be a configuration for player $i$ of value at least $\tau$; we will call $C$ thin if it only contains thin resources, and fat otherwise. There are two cases we need to consider.
	
	\begin{itemize}
		\item[Case 1.] \textbf{C is a thin configuration.} Suppose that $\sum_{j \in C} z_j <
		(1-1/\alpha)$. Then, by our assignment of $z_j$ values, this implies that there exists a set $R
		\subseteq C$ such that $R$ is disjoint from the resources in $A_{\leq \ell+1}\cup I \cup B_{\leq \ell}$ and
		$\sum_{j \in R} v_j \geq \tau/\alpha.$ Together this
		contradicts the fact that $H'$ has no sinks since $i$ is then a sink (it is in $Z$).
		\item[Case 2.] \textbf{C is a fat configuration.} Let $j$ be a fat resource in $C$.
		Since $i$ was reachable in $H'$, all the sources in $H'$ are assigned thin edges in $M$ (which implies they have no incoming arcs), and $K$ is a subset of the players, it follows that $j$ is also present in $H'$.
		Thus, by our assignment, $z_j = 1-1/\alpha$.
	\end{itemize}
	
	Having proved that our assignment of $y_i$ and $z_j$ values constitutes a feasible solution to the
	dual of $CLP(\tau)$, we now compute the objective function value $\sum_i y_i - \sum_j z_j$ of the
	above assignment. To do so we adopt the following charging scheme: for each fat resource $j$ in
	$H'$, charge its $z_j$ value against the unique player $i$ such that the outgoing arc $(j, i)$
	belongs to $H'$. The charging scheme accounts for the $z_j$ values of all the fat resources except
	for the fat resources that are leaves in $H'$. There are at most $|K_1|$ such fat resources, where
	$K_1 \subseteq K$ is the set of players to which the uncharged fat items have an outgoing arc to. Moreover,
	note that $K_1$ only consists of players that are matched in $M$ by fat edges. Since $P_{\leq \ell}$ does not have any players matched by fat edges in $M$, no player in $K_2 := P_{\leq \ell}\cap K$ is present in $K_1$, i.e., $K_1 \cap K_2 = \emptyset.$ Finally, note that no player in $P_{\leq \ell} \setminus K = P_{\leq \ell} - K_2$ has been charged. Thus, considering all players in
	$\cP$ but only fat configurations, we have 
	\begin{align*}
	\sum_{i\in \cP} y_i - \sum_{j\in \cR_f} z_j &\geq 
	(1-1/\alpha)(|P_{\leq
		\ell}| - |K_2|) - (1-1/\alpha)|K_1| \\
	& =  
	(1-1/\alpha)\big(|P_{\leq \ell}|-(|K_1| + |K_2|)\big) \\
	& > (1-1/\alpha)(1-\gamma) |P_{\leq \ell}|.
	\end{align*}
	
	We now compute the total contribution of thin resources, i.e., $\sum_{j\in \cR_t} z_j$. The total
	value of thin resources from  the edges $A_{\leq \ell+1}$ and the edges
	$I$ is at most $(1/\alpha+1/\beta)|A_{\leq \ell +1}\cup I|$, due to the minimality of thin $\alpha$-edges. Besides the resources appearing in
	$A_{\leq \ell +1}\cup I$, the total value of resources appearing only in edges $B_{\leq \ell}$
	is at most $(1/\beta)(|B_{\leq \ell}|) < (1/\beta)(|P_{\leq \ell}|),$ by the minimality of $\beta$-edges. Indeed, if an edge in $B_\ell$ has more than $\tau/\beta$ resources not appearing in an	edge in $A_{\leq \ell +1} \cup I$ then those resources would form a thin $\beta$-edge which	contradicts its minimality.
	
	Using $|A_{\leq \ell+1}\cup I | < (\mu+\gamma)|P_{\leq \ell}|$ we have
	\[\sum_{i\in \cP} y_i - \sum_{j\in \cR} z_j > (1-\gamma)\left(1-\frac{1}{\alpha} \right) |P_{\leq \ell}| -
	(\mu + \gamma)\left(\frac1\alpha+\frac1\beta \right) |P_{\leq \ell}| - \frac{1}{\beta}|P_{\leq \ell}|.\]
	Recall that, given any feasible solution to the dual of $CLP(\tau)$, we can scale it by any positive number, and it will remain feasible; this will imply that if the optimum of the dual of $CLP(\tau)$ is positive, then the dual of $CLP(\tau)$ is unbounded. So, the dual of $CLP(\tau)$ is unbounded when \[(1-\gamma)\left(1-\frac{1}{\alpha} \right) -
	(\mu + \gamma)\left(\frac1\alpha+\frac1\beta \right) - \frac{1}{\beta} \geq 0
	\Leftrightarrow \gamma \leq \frac{\alpha \beta - (1+\mu)(\alpha+\beta)}{\alpha \beta + \alpha}.
	\]
	
	Recall that $\beta =2(3+\sqrt{10})+\epsilon$, $\alpha=2$, and $\mu = \epsilon/100$. For $\epsilon > 0$ the last inequality is equivalent to $206 \sqrt{10}+3 \epsilon\leq 676$, which is valid for $\epsilon \leq 1$.
\end{proof}

We now use the previous lemma to show that if we create a new layer then the number of players in that layer will increase rapidly. This will allow us to bound the number of
layers to be logarithmic and also to bound the running time.

\begin{lemma}[Exponential growth]\label{lem:expgrowthorcollapse}
	At each execution of the iterative step of the algorithm, we have
	\[|P_{i}| \geq \delta |P_{\leq i-1}|, \; \mbox{where } \;  \delta := \epsilon/100,
	\]
	for each $i=1,\dots,\ell$.
\end{lemma}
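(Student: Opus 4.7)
The plan is to combine the lower bound on $d_i$ from \pref{lem:manyaddableedges} with a simple volume-counting argument comparing addable edges to their blockers. Fix $i$ with $1\leq i\leq \ell$.

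First, I observe that at the beginning of each iterative step no layer is collapsible, and so the canonical decomposition $\{I_j\}$ satisfies $|I_j| < \mu |P_j|$ for each $j=0,\ldots,\ell$. In particular, the first property of the canonical decomposition (\pref{def:canonical}) gives
\[
\DP{M}{P_{\leq i-1}}{I} \;=\; |I_{\leq i-1}| \;<\; \mu\, |P_{\leq i-1}|.
\]
Combining this with \pref{lem:manyaddableedges}, which guarantees $\DP{M}{P_{\leq i-1}}{A_{\leq i}\cup I} \geq \gamma|P_{\leq i-1}|$, and noting that in any optimal set of vertex-disjoint paths at most $\DP{M}{P_{\leq i-1}}{I}$ can end in $I$, at least $(\gamma-\mu)|P_{\leq i-1}|$ of them must end at distinct edges of $A_{\leq i}$, giving
\[
|A_{\leq i}| \;\geq\; (\gamma - \mu)\,|P_{\leq i-1}|.
\]

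Next I exploit the structure of addable and blocking edges. Each edge in $A_{\leq i}$ is a thin $\alpha$-edge that is not immediately addable (otherwise it would have been placed in $I$), so at least $\tau/\alpha - \tau/\beta$ of its value is blocked by edges of $M$, specifically from $B_{\leq i}$ since $B_j$ is by definition the set of blockers of $A_j$. Each blocking edge in $B_{\leq i}$ is a thin $\beta$-edge of value at most $2\tau/\beta$ by minimality. Balancing the total blocked value in $A_{\leq i}$ against the total value of $B_{\leq i}$ yields $|A_{\leq i}|(\tau/\alpha - \tau/\beta) \leq 2|P_{\leq i}|\tau/\beta$, and hence
\[
|P_{\leq i}| \;\geq\; \frac{\beta-\alpha}{2\alpha}\, |A_{\leq i}| \;\geq\; \frac{(\gamma-\mu)(\beta-\alpha)}{2\alpha}\,|P_{\leq i-1}|.
\]

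It remains to verify the algebra. A direct computation using $(\sqrt{10}-2)(\sqrt{10}+2)=6$ gives
\[
\gamma\cdot\frac{\beta-\alpha}{2\alpha} \;=\; \frac{(\sqrt{10}-2)(4+2\sqrt{10}+\epsilon)}{12} \;=\; 1 + \frac{(\sqrt{10}-2)\,\epsilon}{12},
\]
showing that the parameter choices of $\alpha,\beta,\gamma$ are precisely tuned so that this coefficient equals $1$ exactly at $\epsilon=0$. Since $\mu=\epsilon/100$ is much smaller than the slack $(\sqrt{10}-2)\epsilon/12$ for any $\epsilon\leq 1$, we obtain $(\gamma-\mu)(\beta-\alpha)/(2\alpha) \geq 1+\delta$, and therefore $|P_i| = |P_{\leq i}| - |P_{\leq i-1}| \geq \delta\,|P_{\leq i-1}|$. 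The delicate point of the proof is precisely this final check: all the ``growth'' comes from the $\epsilon$-perturbation of $\beta$, and $\mu$ must be chosen small enough (and $\gamma$ tight enough in \pref{lem:manyaddableedges}) not to consume it.
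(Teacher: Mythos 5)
Your proof is correct and follows essentially the same approach as the paper's: lower-bound $|A_{\leq i}|$ via Lemma~\ref{lem:manyaddableedges} and the non-collapsibility bound on $|I_{\leq i-1}|$, upper-bound $|A_{\leq i}|$ by the blocked-value-versus-blocker-value argument, and combine. The two cosmetic differences are that you argue directly (deriving $|P_{\leq i}| \geq (1+\delta)|P_{\leq i-1}|$) rather than by contradiction, and that you split the optimal disjoint paths by destination ($I$ versus $A_{\leq i}$) to get $|A_{\leq i}| \geq (\gamma-\mu)|P_{\leq i-1}|$ cleanly, whereas the paper uses the cruder $|A_{\leq t}|+|I_{\leq t}| \geq \DP{M}{P_{\leq t-1}}{A_{\leq t}\cup I}$ together with the by-contradiction bound $|I_{\leq t}| < \mu(1+\delta)|P_{\leq t-1}|$; both yield a condition of the same form and the parameter choices satisfy both.
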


\begin{proof}
	
	Suppose towards contradiction that the statement is false and let $t$ be the smallest index that
	violates it, i.e., $|P_{t}| < \delta |P_{\leq t-1}|$. Due to Invariant \ref{inv5}, $|I_i| < \mu|P_i|$ for $0 \leq i \leq t$.  Hence,
	\[|I_{\leq t}| < \mu|P_{\leq t}| < \mu(1+\delta) |P_{\leq t-1}|. \] 
	Further,
	$$
  |A_{\leq t}| + |I_{\leq t}| \geq \DP{M}{P_{\leq t-1}}{A_{\leq t}\cup I_{\leq t}} = \DP{M}{P_{\leq t-1}}{A_{\leq t}\cup I}  \geq \gamma |P_{\leq t-1}|,
	$$
  where the first inequality is trivial, the equality follows from the definition of canonical decompositions (Definition~\ref{def:cancomp}), and the last inequality follows
	from Lemma~\ref{lem:manyaddableedges}. This gives us \[ |A_{\leq t}|
	> \left(\gamma - \mu(1+\delta)\right)|P_{\leq t-1}|.\]
	
	We now obtain an upper bound on
  the total number of edges in $A_{\leq t}$ by counting the value of resources in each $A_i$ and $B_i$; observe that any thin $\beta$-edge
has resources of total value at most $2\tau/\beta$ due to minimality, while any thin $\alpha$-edge in $A_{\leq t}$ has resources of value at
least $\tau/\alpha - \tau/\beta$  that are blocked, i.e., appear in some edge in $B_{\leq t}$ (since otherwise this edge would be in $I$
instead of $A_{\leq t}$)\footnote{We remark that just as in the simpler algorithm, the set $B_i$ contains those edges that are
blocking the edges in $A_i$. This follows from the definition of the build phase and Steps 3.c and
3.d that remove edges from
$B_i$ when the matching has changed or when $A_i$ has changed. Furthermore, all edges in $A_i$ have all but at most $\tau/\beta$ resources
blocked. Otherwise, the edge is added to $I$ in the build phase and if resources have been freed up later, the edge is removed from
$A_i$ (and it may be added to $I$) during Step 3.c.}. Hence,
	\[|A_{i}|\left(\tau/\alpha-\tau/\beta\right) \leq
	|B_{i}|\left( 2\tau/\beta \right) \overset{\text{summing over
			$i$ and rearranging}}{\implies}  |A_{\leq t}| \leq |B_{\leq
		t}|\frac{2\alpha}{\beta-\alpha}. \] Since $|B_{\leq t}| < |P_{\leq
		t}|$ and $|P_{\leq t}| < (1+\delta) |P_{\leq t-1}|$ we have the
	bound \[|A_{\leq t}| < \frac{2\alpha}{\beta-\alpha}(1+\delta)|P_{\leq t-1}|.\]

	Therefore we will have a contradiction when 
	\[
	\frac{2\alpha}{\beta-\alpha}(1+\delta) \leq \gamma - (1+\delta)\mu.
	\]
	It can be verified that for any $\epsilon > 0$ the above inequality is equivalent
	to \[22400+6 \left(52+\sqrt{10}\right) \epsilon+3 \epsilon^2\leq 9400
	\sqrt{10},\] which is true for $\epsilon \in [0, 1]$ leading to the
	required contradiction.
\end{proof}

We are now ready to prove that our algorithm executes a polynomial number of iterations.
To do this, we define the signature vector $s := (s_0,\dots,s_\ell,\infty)$, where $$s_i := \lfloor \log_{1/(1-\mu)} \frac{|P_i|}{\delta^{i+1}} \rfloor$$ corresponding to the state $(M, \ell, \cL, I)$ of the algorithm. The signature vector changes as the algorithm executes; in fact, we prove that its lexicographic value always decreases:

	\begin{lemma}
		Across each iterative step, the lexicographic value of the signature vector decreases. Furthermore, the coordinates of the signature vector are always non-decreasing.
	\end{lemma}
	
	\begin{proof}
		We show this by induction as usual on the variable $r$ that counts the number of times the
		iterative step has been executed. The statement for $r = 0$ is immediate. Suppose it is true for
		$r \geq 0$. Let $s = (s_0,\dots,s_\ell,\infty)$ and $s' = (s'_0,\dots,s'_{\ell'},\infty)$ denote
		the signature vector at the beginning and at the end of the $(r+1)$-th iterative step. We consider
		two cases:
		\paragraph{No layer was collapsed} Let $L_{\ell + 1}$ be the newly constructed layer. In this
		case, $\ell' = \ell+1$. By Lemma~\ref{lem:expgrowthorcollapse}, $|P_{\ell+1}| \geq \delta |P_{\leq
			\ell}| > \delta |P_{\ell}|$. Clearly, $s' = (s_0,\dots,s_\ell,s'_{\ell+1},\infty)$ where $\infty
		> s'_{\ell+1} \geq s'_{\ell} = s_{\ell}.$ Thus, the signature vector $s'$ also has increasing
		coordinates and smaller lexicographic value compared to $s$.
		
		\paragraph{At least one layer was collapsed} Let $0 \leq t \leq \ell$ be the index of the last
		layer that was collapsed during the $r$-th iterative step. As a result of the collapse operation
		suppose the layer $P_t$ changed to $P'_t$. Then we know that $|P'_t| < (1-\mu)|P_t|.$
		Indeed, during Step 3 of Algorithm \ref{alg2}, at least a $\mu$-fraction of the edges in $B_t$ are replaced with edges from $I$.
		 Since none
		of the layers with indices less than $t$ were affected during this procedure, $s' =
		(s_0,\dots,s_{t-1}, s'_t,\infty)$ where $s'_t = \lfloor \log_{1/(1-\mu)}
		\frac{|P'_t|}{\delta^{t+1}} \rfloor \leq \lfloor \log_{1/(1-\mu)} \frac{(1-\mu)|P_t|}{\delta^{t+1}}
		\rfloor \leq \lfloor \log_{1/(1-\mu)} \frac{|P_t|}{\delta^{t+1}} \rfloor - 1 = s_t - 1.$ This shows
		that the lexicographic value of the signature vector decreases. That the coordinates of $s'$ are
		non-decreasing follows from Lemma~\ref{lem:expgrowthorcollapse}.
	\end{proof}
	
	Finally, due to the above lemma, any upper bound on the number of possible signature vectors is an upper bound on the number of iterations Algorithm \ref{alg2} will execute; we prove there is such a bound of polynomial size:
	\begin{lemma}
		The number of signature vectors is at most $|\cP|^{O(1/\mu \cdot 1/\delta\cdot \log(1/\delta))}$.
	\end{lemma}
	\begin{proof}
		By Lemma~\ref{lem:expgrowthorcollapse}, $|\cP| \geq P_{\leq \ell} \geq (1+\delta)P_{\leq \ell - 1}
		\geq \dots \geq (1+\delta)^\ell|P_0|.$ This implies that $\ell \leq \log_{1+\delta} |\cP|\leq
		\frac{1}{\delta}\log |\cP|$, where the last inequality is obtained by using Taylor series and
		that $\delta \in [0,1/100]$.

		Now consider the $i$-th coordinate of the signature vector $s_i$. It can be no larger than
		$\log_{1/(1-\mu)} \frac{|\cP|}{\delta^{i + 1}}$.   
		Using the bound on the index $i$ and after some
		manipulations, we get 
		\begin{align*} s_i &\leq \left(\log |\cP|+(i+1)\log
		\frac1\delta\right)\frac{1}{\log \frac{1}{1-\mu}} \\ 
		&\leq \left(\log |\cP|+(\frac{1}{\delta}\log
		|\cP| + 1)\log \frac1\delta\right)\frac{1}{\log \frac{1}{1-\mu}} \\ &= \log |\cP|
		\cdot O\left( \frac{1}{\mu\delta}\log \frac1\delta \right), \end{align*} where the final bound is
		obtained by again expanding using Taylor series around $0$. Thus,  if we let $U= \log |\cP|
		\cdot O\left( \frac{1}{\mu\delta}\log \frac1\delta \right)$ be an upper bound on the number of
		layers and the value of each coordinate of the signature vector, then the sum of coordinates of the
		signature vector is always upper bounded by $U^2$.
		
		Now, as in the simpler algorithm, we apply the bound on
		the number of partitions of an integer. Recall that the number of
		partitions of an integer $N$ can be upper bounded by $e^{O(\sqrt{N})}$~\cite{HardyRam18}. Since each
		signature vector corresponds to some partition of an integer at most $U^2$, we can upper bound
		the total number of signature vectors by $\sum_{i \leq U^2} e^{O(\sqrt{i})}.$

		Now using the bound of $U$, we have that the number of signatures is at most
		$|\cP|^{O(1/\mu \cdot 1/\delta\cdot \log(1/\delta))}$.
	\end{proof}
	
	Since the number of possible signature vectors is polynomial, the number of iterations Algorithm \ref{alg2} will execute is also polynomial. Furthermore, as the running time of each iteration is also polynomial, this completes the proof of Theorem \ref{thm:main}.


\section{Conclusion}
In this paper we have presented new ideas for local search algorithms leading to an improved approximation algorithm for the \problemmacro{restricted max-min fair allocation} 
problem.  The obtained algorithm is also
combinatorial and therefore bypasses the need of solving the exponentially large Configuration-LP.

Apart from further improving the approximation guarantee, we believe that an interesting future direction is to consider our techniques in the more abstract
setting of matchings in hypergraphs. For example, Haxell~\cite{haxell1995condition} proved, using an
\emph{alternating tree} algorithm, a sufficient condition for a bipartite hypergraph to admit a perfect
matching.

  \begin{theorem}[Haxell's Condition]\label{thm:hax}
    Consider an $(r+1)$-uniform bipartite hypergraph $\cH = (\mathcal{P}
    \cup \mathcal{R}, E)$ such that for every edge $e \in E$, $|e \cap
    \cP| = 1$ and $|e \cap \cR|=r$. For $C\subseteq \cP$ let
    $H(E_C)$ denote the size of the smallest set $R \subseteq \cR$
    that hits all the edges in $\cH$ that are incident to some vertex
    in $C$. If for every $C \subseteq \mathcal{P}$, $H(E_C) >
    (2r-1)(|C|-1)$ then there exists a perfect matching in $\cH$.
  \end{theorem}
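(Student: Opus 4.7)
The plan is to prove this by contradiction via an alternating tree / local search argument, in the spirit of the algorithms in Sections~\ref{sec:simplealgo2} and~\ref{sec:combinatorialalgorithm} (which is only fair, since this theorem is what inspired those algorithms). Suppose $\cH$ satisfies Haxell's condition but has no perfect matching. Let $M$ be a matching of maximum cardinality and let $p_0 \in \cP$ be a player left unmatched. I will exhibit a rooted tree structure whose existence either provides an augmenting alternating path (contradicting the maximality of $M$) or else produces a small hitting set for $E_C$ that violates Haxell's condition.

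Grow such a tree iteratively. Maintain a collection $A$ of \emph{addable edges} (hyperedges of $\cH$, not necessarily in $M$) and a collection $B \subseteq M$ of \emph{blockers}, where $B$ contains every edge of $M$ that intersects some edge of $A$. Let $C := \{p_0\} \cup \{p : p \text{ is matched by some edge of } B\}$. At each step, look for an edge $e = (p, R) \in E$ with $p \in C$ whose resource set $R$ is disjoint from the resources currently appearing in $A \cup B$. If such an $e$ exists and $R$ is also disjoint from $\bigcup_{e' \in M} e'$, then following the unique path in the tree from $p_0$ up to $p$ yields an alternating augmenting path; shifting assignments along this path produces a matching strictly larger than $M$, contradicting its maximality. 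Otherwise, add $e$ to $A$, add the $M$-edges intersecting $R$ to $B$, and update $C$. Since $\cH$ is finite, the process must eventually terminate without finding an augmenting extension.

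When the process terminates, every edge of $\cH$ incident to a player in $C$ must share a resource with $A \cup B$ (otherwise it would yield a valid extension). Hence $R^* := \bigcup_{e \in A \cup B}(e \cap \cR)$ is a hitting set for $E_C$, giving $H(E_C) \leq |R^*|$. Since by construction the resources of edges in $A$ are pairwise disjoint, they contribute $r|A|$ to $|R^*|$; each blocker in $B$ contributes at most $r-1$ additional resources, because at least one of its $r$ resources must already lie in some edge of $A$. This yields $|R^*| \leq r|A| + (r-1)|B|$. Since blockers are disjoint edges of $M$ matching distinct players in $C \setminus \{p_0\}$, we also have $|B| = |C| - 1$. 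To close the argument we need $|A| \leq |B|$, which then gives $H(E_C) \leq r|B| + (r-1)|B| = (2r-1)(|C|-1)$, contradicting Haxell's hypothesis.

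The main obstacle is justifying the inequality $|A| \leq |B|$: a single addable edge can intersect up to $r$ distinct $M$-edges, so naively $|B|$ may be as large as $r|A|$, the opposite of what we need. Haxell's resolution, which I would follow, is to work with a \emph{minimum} counterexample tree - among all non-extendable trees constructed as above, select one minimizing $|A|$. Minimality forces every addable edge to be ``essential'': if some $a \in A$ only contributes blockers already present in $B$ (due to earlier addable edges), we can delete $a$ from the tree, still obtain a valid non-extendable structure, and contradict minimality. Consequently each addable edge introduces at least one new blocker, which gives the required bound $|A| \leq |B|$. Making this essentiality argument fully rigorous - ensuring that deleting $a$ preserves the invariants and that the resulting pruned tree still certifies no augmenting path exists - is the delicate part of the proof; the surrounding contradiction framework outlined above is then routine.
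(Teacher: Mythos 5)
The paper cites this result to Haxell~\cite{haxell1995condition} without reproducing her proof, so there is no in-paper argument to compare against; I evaluate your proposal on its own terms. The skeleton is the right one: grow a tree of addable hyperedges and blockers, and argue that a terminal tree certifies a hitting set $R^*$ for $E_C$ of size at most $r|A| + (r-1)|B|$ with $|C| = |B| + 1$. But two central steps fail as written.

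The augmentation step is wrong. You claim that if a candidate edge $(p, R)$ with $p \in C$ has $R$ disjoint from $\bigcup_{e' \in M} e'$, then following the tree path from $p_0$ to $p$ yields an augmenting path whose shift strictly enlarges $M$. This is only clear when $p = p_0$. When $p \ne p_0$, swapping $p$'s matching edge $b_p$ for $(p, R)$ frees the resources of $b_p$, but the addable edge of which $b_p$ was a blocker typically has \emph{other} blockers as well (up to $r$ of them), so it does not become free and the shift cannot propagate back to $p_0$. Hypergraph alternating trees do not admit graph-style augmenting paths. The correct move is to perform the swap anyway (keeping $|M|$ fixed), collapse the tree accordingly, and then argue termination of the whole process via a potential function, as Haxell does and as Sections~\ref{sec:simplealgo2} and~\ref{sec:combinatorialalgorithm} of this paper do with the signature-vector argument. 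You supply no such termination argument; ``since $\cH$ is finite the process must eventually terminate'' does not suffice, because $M$ itself changes across swaps, so the configuration space is not monotonically shrinking.

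The minimality argument is both unnecessary and unsound. Since a candidate edge must have its resources disjoint from all resources currently in $A \cup B$, it is in particular disjoint from every blocker already in $B$; hence every blocker it introduces is \emph{new}. Thus $|B| \ge |A|$ holds automatically, provided each edge placed into $A$ has at least one blocker, i.e., is not disjoint from $M$. The hypothesis of your minimality step (``$a$ only contributes blockers already present in $B$'') is therefore vacuous for edges actually in $A$. Moreover the pruning is not valid: deleting $a$ from $A$ would turn hyperedges of $E_C$ that were hit only by $a$'s resources back into candidates, so the pruned structure is no longer terminal and $R^*$ is no longer a hitting set. The genuine content behind $|A| \le |B|$ is precisely the claim that no addable edge in the terminal tree is disjoint from $M$, which loops back to the augmentation step---the actual gap in the argument.
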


  Note that Theorem~\ref{thm:hax} generalizes Hall's theorem for
  graphs. However, the proof of the statement does not lead to a
  polynomial time algorithm. In the conference version of this paper
  we had posed the question of whether a constructive analog of
  Theorem~\ref{thm:hax} can be obtained.

  With the techniques presented here, we could prove the following
  weaker statement: there is a constant $C_0 > 0$ for which, given some $0 < \epsilon \leq 1$ and assuming
  $H(E_C) \geq C_0(1/\epsilon)r(|C|-1)$, there exists a polynomial
  time algorithm which assigns one edge $e_p \in E$ for every player
  $p \in \cP$ such that it is possible to choose disjoint subsets
  $\{S_p \subseteq e_p \cap \cR\}_{p\in \cP}$ of size at least
  $(1-\epsilon)r$.

  Recently, the first author obtained such a constructivization
  answering our open question affirmatively~\cite{annamalai2016}. For
  some fixed $\epsilon > 0$ and $r$ he proved that, for
  $(r+1)$-uniform hypergraphs satisfying
  $H(E_C) > (2r-1 + \epsilon)(|C|-1)$ a polynomial time algorithm
  exists for finding the perfect matching guaranteed by
  Theorem~\ref{thm:hax}. However, the running time of this algorithm is
  exponential in both $r$ and $1/\epsilon$. It remains an open problem
  to find such an algorithm whose running time dependence on $r$ is
  polynomial.










\bibliographystyle{alpha}

\bibliography{refs}

\end{document}